
\newif\iffull
\fulltrue

\iffull
\documentclass[nonacm]{acmart}\settopmatter{printfolios=false,printccs=false,printacmref=false}
\else
\documentclass[acmsmall, 10pt]{acmart}\settopmatter{}
\setcopyright{none}
\fi

\bibliographystyle{ACM-Reference-Format}
\citestyle{acmauthoryear}   


\usepackage{scalerel}
\usepackage{accents}
\usepackage{xspace}


\newcommand{\rulename}[1]{\textsc{#1}}
\newcommand{\Rule}[4][]{\ensuremath{\inferrule*[right={(#2)},#1]{#3}{#4}}}

\newcommand{\RuleNolabel}[3][]{\ensuremath{\inferrule*[#1]{#2}{#3}}}

\newcommand{\spawn}{\texttt{spawn}\xspace}

\newcommand{\termspawn}{create}
\newcommand{\termSpawn}{Create}
\newcommand{\termsync}{sync}
\newcommand{\termSync}{Sync}

\newcommand{\bnfdef}{::=}
\newcommand{\bnfalt}{\mid}



\newcounter{ccount}


\renewcommand{\paragraph}[1]{%
{\normalfont\normalsize\bfseries #1}%
}%



\newcommand{\altdiv}{\mathbin{\mathrlap{\mathrlap{\sim}:}\phantom{\sim}}}

\newcommand{\calcname}{\ensuremath{\lambda^{4}_s}}
\newcommand{\langname}{\calcname}

\newcommand{\ctx}{\Gamma}

\theoremstyle{definition}
\newtheorem{defn}{Definition}

\newcommand{\ectx}{\cdot}

\newcommand{\dom}[1]{\mathit{dom}(#1)}

\newcommand{\defeq}{\triangleq}


\newcommand{\priowork}[1]{W_{#1}}
\newcommand{\prioworkof}[2]{\priowork{#2}(#1)}
\newcommand{\psnlt}[1]{\not\prec #1}
\newcommand{\longsp}[2]{S_{#2}(#1)}

\newcommand{\anc}[2]{#1 \sqsupseteq #2}
\newcommand{\sanc}[2]{#1 \sqsupseteq^s #2}
\newcommand{\wanc}[2]{#1 \sqsupseteq^w #2}
\newcommand{\nanc}[2]{#1 \not\sqsupseteq #2}




\newcommand{\compwork}[2]{\ensuremath{\hspace{0.25em}\mathop{\mathclap{\nuparrow}\mathclap{\downarrow}}\hspace{0.25em}#2}}

\newcommand{\uprio}[2]{\mathit{Prio}_{#1}(#2)}



\newcommand{\strengthen}[2]{\hat{#2}_{#1}}

\newcommand{\tscomp}{\cdot}

\newcommand{\gthread}[3]{\ensuremath{#1 \xhookrightarrow[#2]{} #3}}

\newcommand{\gthreads}{\mathcal{T}}
\newcommand{\spawns}{E^c}
\newcommand{\syncs}{E^s}
\newcommand{\reads}{E^w}
\newcommand{\dagq}[4]{\ensuremath{(#1, #2, #3, #4)}}

\newcommand{\resptimeof}[1]{\ensuremath{T(#1)}}

\newcommand{\kw}[1]{\mbox{\texttt{#1}}}

\newcommand{\cdsqbracks}[1]{\kw{[}{#1}\kw{]}}
\newcommand{\cd}[1]{{\lstinline!#1!}}

\newcommand{\kwunit}{\kw{unit}}
\newcommand{\kwnat}{\kw{nat}}

\newcommand{\kwcvt}[2]{\ensuremath{\kw{cv}\cdsqbracks{#1; #2}}}
\newcommand{\kwmutext}[1]{\ensuremath{\kw{mutex}\cdsqbracks{#1}}}

\newcommand{\kwreft}[1]{\ensuremath{#1~\kw{ref}}}

\newcommand{\prio}{\rho}

\newcommand{\ple}[2]{#1 \preceq #2}
\newcommand{\plt}[2]{#1 \prec #2}

\newcommand{\nple}[2]{#1 \not\preceq #2}
\newcommand{\pnlt}[2]{#1 \not\prec #2}

\newcommand{\worlds}{R}
\newcommand{\prios}{R}

\newcommand{\kwnumeral}[1]{\overline{#1}}
\newcommand{\kwn}{\kwnumeral{n}}

\newcommand{\kwassn}{s}
\newcommand{\kwtriv}{\langle \rangle}







\newcommand{\kwspawn}[4]{\ensuremath{\kw{spawn}\cdsqbracks{#1}\{#4\}}}
\newcommand{\kwlet}[3]{\ensuremath{\kw{let}~#1 = #2~\kw{in}~#3}}

\newcommand{\kwnewref}[2]{\ensuremath{\kw{ref}\cdsqbracks{#1}~#2}}
\newcommand{\kwref}[1]{\ensuremath{\kw{ref}\cdsqbracks{#1}}}
\newcommand{\kwderef}[1]{\ensuremath{!#1}}
\newcommand{\kwassign}[2]{\ensuremath{#1 := #2}}

\newcommand{\cvname}{\ensuremath{\alpha}}
\newcommand{\mutname}{\ensuremath{\beta}}
\newcommand{\cvmutname}{\ensuremath{\eta}}
\newcommand{\kwwait}[1]{\ensuremath{\kw{wait}~#1}}
\newcommand{\kwsignal}[1]{\ensuremath{\kw{signal}~#1}}
\newcommand{\kwcv}[1]{\ensuremath{\kw{cv}\cdsqbracks{#1}}}
\newcommand{\kwmutex}[1]{\ensuremath{\kw{mutex}\cdsqbracks{#1}}}
\newcommand{\kwnewcv}[1]{\ensuremath{\kw{newcv}[#1]}}
\newcommand{\kwnewmutex}[1]{\ensuremath{\kw{newmutex}[#1]}}

\newcommand{\kwwithlock}[2]{\ensuremath{\kw{with}~#1~#2}}
\newcommand{\kwwithlocks}[2]{\ensuremath{\kw{with*}~#1~#2}}
\newcommand{\kwwithlocksp}[4]{\ensuremath{\kw{with*}\cdsqbracks{#3; #4}~#1~#2}}
\newcommand{\kwpromote}[2]{\ensuremath{\kw{promote}~#1~\kw{to}~#2}}
\newcommand{\kwskip}{\ensuremath{\kw{skip}}}
\newcommand{\kwif}[3]{\ensuremath{\kw{if}~#1~\kw{then}~#2~\kw{else}~#3}}
\newcommand{\kwwhile}[2]{\ensuremath{\kw{while}~#1~#2}}

\newcommand{\instr}{i}
\newcommand{\block}{b}

\newcommand{\fresh}{~\kw{fresh}}

\newcommand{\graph}{g}
\newcommand{\gnode}[2]{(#1, #2)}

\newcommand{\scompsym}{\oplus}
\newcommand{\scomp}[1]{\scompsym_{#1}}
\newcommand{\egraph}{\emptyset}
\newcommand{\edgespawn}[2]{(#1, #2)^c}
\newcommand{\edgesync}[2]{(#1, #2)^s}
\newcommand{\edgeweak}[2]{(#1, #2)^w}
\newcommand{\lastv}[2]{\mathit{LastV}_{#1}(#2)}

\newcommand{\sig}{\Sigma}

\newcommand{\esig}{\cdot}
\newcommand{\emem}{\emptyset}


\newcommand{\hastype}[2]{#1 \mathop{:} #2}

\newcommand{\sigrtype}[2]{#1 \mathord{\sim} #2}
\newcommand{\sigcvtype}[2]{#1 \mathord{@} #2}

\newcommand{\etyped}[5][\worlds]{#3 \vdash^{#1}_{#2} #4 : #5}


\newcommand{\ityped}[8][\worlds]{#3; #4 \vdash^{#1}_{#2} #5 \altdiv #6 \mathord{@} #7 \dashv #8}
\newcommand{\styped}[7][\worlds]{#3; #4 \vdash^{#1}_{#2} #5 \mathord{@} #6 \dashv
                                     #7}
\newcommand{\mtyped}[2]{#1 \vdash #2}
\newcommand{\cfgtyped}[3][\worlds]{#2 \vdash^{#1} #3}
\newcommand{\meetc}[3][\worlds]{#2 \vdash^{#1} #3}

\newcommand{\sstyped}[6][\worlds]{#3 \vdash^{#1}_{#2} #4 \mathbin{@} #5 \dashv #6}
\newcommand{\stackaccepts}[8][\worlds]{\vdash^{#1}_{#2} #3 \mathbin{\vartriangleleft :} (#4, #5, #6) \leadsto (#7, #8)}
\newcommand{\stackacceptsc}[7][\worlds]{\vdash^{#1}_{#2} #3 \mathbin{\blacktriangleleft :} (#4, #5) \leadsto (#6, #7)}

\newcommand{\fctx}{\Psi}
\newcommand{\fracp}{\ensuremath{\pi}}
\newcommand{\splitsto}[3]{#1 \rightarrow #2 \otimes #3}
\newcommand{\fctxwf}[1]{\kw{Splittable}(#1)}
\newcommand{\fowned}{\kw{Owned}}
\newcommand{\fshared}{\kw{Shared}}
\newcommand{\fnone}{\kw{None}}
\newcommand{\permof}[3]{#1(#2, #3)}

\newcommand{\tpcp}{\otimes}
\newcommand{\tp}{\mu}

\newcommand{\mem}{\sigma}

\newcommand{\memrent}[3]{(#1, #2, #3)}
\newcommand{\mement}[4]{#1 \mapsto \memrent{#2}{#3}{#4}}

\newcommand{\memupd}[2]{#1[#2]}


\newcommand{\stack}{k}
\newcommand{\estack}{\epsilon}
\newcommand{\ssend}[2]{#1 \mathbin{\vartriangleright} #2}
\newcommand{\sreturn}[2]{#1 \mathbin{\vartriangleleft} #2}
\newcommand{\scsend}[2]{#1 \mathbin{\blacktriangleright} #2}
\newcommand{\screturn}[1]{#1 \mathbin{\blacktriangleleft}}
\newcommand{\shole}{\text{\---}}
\newcommand{\scp}[2]{#1:: #2}
\newcommand{\stackstate}{K}
\newcommand{\stframe}{f}

\newcommand{\twlconfig}[4]{#1; #2; #3; #4}
\newcommand{\lconfig}[5]{#1; #2; #3; #4 \mid #5}

\newcommand{\cthread}[4]{\ensuremath{#1 \xhookrightarrow[#2, #3]{} #4}}

\newcommand{\mstep}{\mathbin{\mathbf{\Rightarrow}}}

\newcommand{\waiters}{\ensuremath{W}}
\newcommand{\waitemp}{[]}
\newcommand{\waitcons}[2]{#1::#2}
\newcommand{\waitupd}[3]{#1[#2 \mapsto #3]}

\newcommand{\locked}{\ensuremath{L}}
\newcommand{\lunlocked}{\bot}

\newcommand{\prioceil}[5]{\ensuremath{#1 \underset{#2, #3, #4}{\leadsto} #5}}


\newcommand{\secput}[2]{\section{#2}\label{sec:#1}}
\newcommand{\secref}[1]{Section~\ref{sec:#1}}

\newcommand{\figref}[1]{Figure~\ref{fig:#1}}

\renewcommand{\eqref}[1]{Equation~(\ref{eq:#1})}



\newif\ifnotes
\DeclareOption{notes}{\notestrue}
\DeclareOption{nonotes}{\notesfalse}
\ExecuteOptions{notes}
\ProcessOptions\relax

\newcounter{remark}[section]




\newcommand{\code}[1]{\lstinline!#1!}

\newcommand{\punt}[1]{}

\newcommand{\mylineskip}{0.3cm}
\newcommand{\langfigsize}{\small}


\notesfalse 

\usepackage{relsize}
\usepackage{mathtools}
\usepackage{mathpartir}
\usepackage{listings, multicol, wrapfig}
\lstset{
    basicstyle=\footnotesize,
    escapeinside={{{@}{@}}},
}

\usepackage{tikz}
\usepackage{code}

\usepackage[T1]{fontenc}
\usepackage{microtype}

\widowpenalty=10000
\clubpenalty=10000
\displaywidowpenalty = 10000

\usepackage{flushend}
\begin{document}

\title{Responsive Parallelism with Synchronization}         

\author{Stefan K. Muller}
\email{smuller2@iit.edu}
\affiliation{Illinois Institute of Technology\country{USA}}

\author{Kyle Singer}
\email{kdsinger@wustl.edu}
\affiliation{Washington University in St. Louis\country{USA}}

\author{Devyn Terra Keeney}
\email{dkeeney2@hawk.iit.edu}
\affiliation{Illinois Institute of Technology\country{USA}}

\author{Andrew Neth}
\email{aneth@hawk.iit.edu}
\affiliation{Illinois Institute of Technology\country{USA}}

\author{Kunal Agrawal}
\email{kunal@wustl.edu}
\affiliation{Washington University in St. Louis\country{USA}}

\author{I-Ting Angelina Lee}
\email{angelee@wustl.edu}
\affiliation{Washington University in St. Louis\country{USA}}

\author{Umut A. Acar}
\email{umut@cmu.edu}
\affiliation{Carnegie Mellon University\country{USA}}

\begin{abstract}
Many concurrent programs assign priorities to threads to improve
responsiveness. When used in conjunction with synchronization mechanisms
such as mutexes and condition variables, however, priorities can lead
to {\em priority inversions}, in which high-priority threads are delayed by
low-priority ones. Priority inversions in the use of mutexes are easily
handled using dynamic techniques such as {\em priority inheritance}, but
priority inversions in the use of condition variables are not well-studied and
dynamic techniques are not suitable.

In this work, we use a combination of static and dynamic techniques to prevent
priority inversion in code that uses mutexes and condition variables. A type
system ensures that condition variables are used safely, even while dynamic
techniques change thread priorities at runtime to eliminate priority inversions
in the use of mutexes. We prove the soundness of our system, using a model of
priority inversions based on cost models for parallel programs.
To show that the type system is practical to implement, we encode it within the
type systems of Rust and C++, and show that the restrictions are not overly
burdensome by writing sizeable case studies using these encodings, including
porting the Memcached object server to use our C++ implementation.

\end{abstract}

\maketitle
\renewcommand{\shortauthors}{S. K. Muller, K. Singer, D. T. Keeney, A. Neth,  K. Agrawal, I. Lee, and U. A. Acar}

\section{Introduction}\label{sec:intro}

For decades, software applications have used concurrency to perform
tasks simultaneously on multiple threads.
In many such applications, particularly those that interact with a user or
the outside environment, different tasks have different resource and latency
requirements.
For example, in an email client, an event loop processing user input must run
frequently to ensure responsiveness of the user interface, but a background
thread that uses spare cycles to compress stored emails has no such
requirement.
Many systems for multithreading allow programmers to associate \emph{priorities}
with threads, corresponding to these requirements: a highly latency-sensitive
thread might run at high priority, while a less-sensitive background thread
would run at low priority.

For almost as long as priorities have been used in concurrent programs,
programmers and researchers have observed the problem of \emph{priority
  inversions}.
While many formulations of the problem exist with slight variations, the
essence is this: a higher-priority thread finds itself waiting for a
lower-priority thread to release some resource or establish some condition.
In most cases, the wait will be brief and such
\emph{bounded priority inversions} are sometimes not considered harmful in
themselves.
However, if the lower-priority thread in this scenario is preempted by a
long-running thread of intermediate priority, this could become an
\emph{unbounded} priority inversion, which can severely impact responsiveness
or prevent the program from making progress entirely.
As such, it is desirable to be able to detect and prevent priority inversions.

When priority inversions are caused by contention on simple synchronization
primitives such as locks or mutexes, these can be fixed
using dynamic mechanisms such as
\emph{priority inheritance} or \emph{priority ceiling}~\citep{srl-priority-1990},
both of which temporarily raise the priority of the low-priority
thread so it can run and release the resource.
However, the problem of priority inversion is much less understood in the
context of \emph{condition variables}, a powerful and general
synchronization mechanism that can be used to encode many other mechanisms
such as semaphores and monitors~\cite{hansen1973operating,hoare1974monitors,hansen1975programming}.
Indeed, it is impossible to use dynamic techniques analogous to priority
inheritance or priority ceiling to detect and prevent priority inversions
caused by condition variables.
To see why this is the case, consider a program consisting of a main thread~$A$,
which spawns threads~$B$ and (some time later)~$C$.
If thread~$B$ \emph{waits} on a condition variable (an operation which blocks
until another thread performs a corresponding \emph{signal} operation), there
is no way for the thread scheduler to know which thread will perform the
signal operation: indeed, thread~$C$ may not even have been spawned
at the time of the ``wait''.
Thus, it is not even possible to determine on the fly
whether this blocking constitutes a
priority inversion.

In this paper, we consider a static approach to the problem:
our solution is a type system that soundly determines if a priority inversion
might occur.
This information can then be used either to reject programs with potential
inversions, or as a warning to the programmer that a particular portion of
the code might require additional reasoning and/or debugging to ensure that
an inversion cannot occur.
We study the type system in the context of an imperative calculus
with static and dynamic semantics that handle
both condition variables and
locks, preventing priority inversions using a combination of static
and dynamic techniques: we use the dynamic priority ceiling protocol
for mutexes and static techniques for condition variables.

The type system statically associates a priority with each mutex and condition
variable.
For mutexes, this is the priority ceiling---the highest priority at which it
might be acquired.
For condition variables, it is a priority higher than (or equal to)
threads that might wait on it and lower than (or equal to) threads that might
signal it.
For condition variables, the fact that such a priority exists implies that
any signaling thread will be higher-priority than any waiting thread, which
would appear to rule out priority inversions.
This is not, however, sufficient: consider
the three-thread scenario above where threads~$B$ and~$C$ are high-priority,
but~$A$ is low-priority:~$B$ must wait for lower-priority thread~$A$ to
create thread~$C$, which will signal and awake~$B$.
Both threads that use the condition variable are high-priority, but there is
still a priority inversion.
We prevent such inversions by adding a notion of
ownership~\citep{crarywamo99, swm00, Cyclone02, Boyland03, afm07-l3}
of condition variables and allowing code to {\em promote}, or raise the
priority of, condition variables under certain conditions.
In the above example, thread~$A$ could only promote the condition variable to
high priority, allowing thread~$B$ to wait on it, {\em after} creating
thread~$C$, thus avoiding the priority inversion.
The type system {\em does not} statically restrict the use of mutexes,
other than to ensure that the annotated priority ceiling is correct,
but does ensure that the use of the priority ceiling protocol would not
cause runtime type errors in code that mixes the use of mutexes and
condition variables.

Proving that the type system is sound in detecting priority inversions
requires a formal model of priority inversions that is rich enough to describe
scenarios such as the three-thread priority inversion above.
Formalisms such as that of \citet{bms-formal-1993}, which is based on
relationships between threads at snapshots in time, cannot encode an inversion
between a waiting thread and a not-yet-created signaling thread.
Instead, we build on a recent line of work~\cite{mah-responsive-2017, mah-priorities-2018, MullerSiGoAcAgLe20}
that represents parallel programs with priorities as Directed Acyclic Graphs
(DAGs) and shows that, if a DAG meets certain ``well-formedness'' conditions,
the corresponding program can be scheduled efficiently by a simple scheduling
principle that greedily observes priorities.
This yields a definition of priority inversions in terms of the impact they
have on performance: a priority inversion is
any interaction among threads that delays high-priority
threads due to low-priority ones, and we
can be assured that no such delay occurs for well-formed DAGs.
%
%
This prior work also considers type systems for preventing priority inversions,
but for programs that only synchronize at the completion of threads; this is
a much weaker model than that allowed by mutexes and
condition variables.
We extend the DAG model, well-formedness definitions, and scheduling proofs,
to encode concurrent programs that use mutexes and condition variables,
and show that any DAG arising from a well-typed program is well-formed and
therefore free of priority inversions.

Despite its power, the type system we propose in this paper uses some
well-established ideas, albeit in novel ways.
Indeed, we are able to encode approximations of its restrictions (with some
limitations) using advanced features of the C++ and Rust type systems.
To show that the restrictions of the type system do not overly hinder
productivity, we have used the encodings to develop sizable case
study programs, including a large real-world interactive application, the 
Memcached object caching server~\cite{Memcached09} (v1.5.13), which consists of
about 20,100 lines of C code.

\iffull
We begin with an overview of the key ideas of the type system through
examples (Section~\ref{sec:overview}), followed by a formal presentation
of the type system (Section~\ref{sec:lang}) and cost model
(Section~\ref{sec:dag}).
In Section~\ref{sec:proof}, we define the dynamic semantics of the core
calculus, including facilities for the priority ceiling protocol, and prove
type safety and well-formedness (absence of priority inversions).
We then discuss the C++ and Rust implementations and case studies,
discuss related work, and conclude.
\else
We begin with an overview of the key ideas through
examples (Section~\ref{sec:overview}), followed by a formal presentation
of the type system (Section~\ref{sec:lang}).
In Section~\ref{sec:proof-outline}, we give an overview of the cost model
we use to define priority inversions, as well as a dynamic semantics for
the core calculus, including the priority ceiling protocol.
We also describe how we use the cost model and dynamic semantics to prove
that a well-typed program has no priority inversions.
The full details of the cost model, semantics, and proof are available in
the full version of the paper~\citep{fullv}.
We then discuss the C++ and Rust implementations and case studies,
discuss related work, and conclude.
\fi

\section{Overview and Examples}\label{sec:overview}

%
In this section, we present via several examples the complexities of
preventing priority inversions involving mutexes and
condition variables and describe the main ideas
behind our proposed approach, which we formalize in
Section~\ref{sec:lang}.

\subsection{Condition Variables}

\begin{figure}
\small
\begin{minipage}{0.49\textwidth}
\begin{lstlisting}
void <Low>f(CV cv, int *result) {
  ...
  *result = ...;
  signal(cv);
}
void <High>main() {
  CV cv = new CV();
  int result = 0;
  spawn<Low>(f(CV, &result));
  ...
  wait(cv);
}
\end{lstlisting}
\caption{An example priority inversion.}
\label{fig:fut-inv}
\end{minipage}\hfill%
\begin{minipage}{0.49\textwidth}
\begin{lstlisting}
void <Low>f(CV<High> cv, int *result) {
  ...
  *result = ...;
  signal(cv); //ill-typed
}
void <High>main() {
  CV<High> cv = new CV<High>();
  int result = 0;
  spawn<Low>(f(CV, &result));
  ...
  wait(cv);
}
\end{lstlisting}
\caption{The priority inversion is now a type error.}
\label{fig:fut-terr}
\end{minipage}
\end{figure}

We consider condition variables with three operations (in addition to
the operation that constructs a new CV): \texttt{wait},
\texttt{signal}, and~\texttt{promote}.
The~\texttt{wait} operation causes the calling thread to become blocked
on the CV, and the~\texttt{signal} operation resumes one thread that is
currently blocked on the CV\footnote{We discuss later in the paper how
  a~\texttt{broadcast} operation that resumes all blocked threads can be
  added as a straightforward extension.}
The~\texttt{promote} operation is discussed later in this section.
%
The code in Figure~\ref{fig:fut-inv} uses an asynchronous thread to run a
function~\texttt{f} and write its result into a
reference~\texttt{result}, after which it signals a condition
variable~\texttt{cv} to indicate the result is available.\footnote{This may
  be seen as an encoding of {\em futures} or {\em promises}, a popular
  mechanism for expressing parallelism.}
The main thread then does some other work and, when the result of the
asynchronous computation is needed, waits on the condition variable.

Unfortunately, this code
suffers from a priority inversion: the \texttt{High}-priority
\texttt{main} thread waits on the \texttt{Low}-priority thread
running~\texttt{f}.
It is impossible to detect this priority inversion statically
using only priority annotations on threads as in prior work~\citep{MullerSiGoAcAgLe20},
because
when~\texttt{wait} is called, the type system does not know which thread
will call~\texttt{signal}.
Instead, we assign a priority to the CV
itself and restrict the use of CVs
by threads.
%
%
Initially, we require that:
\begin{enumerate}
\item To wait on a CV with priority $\prio$, a thread's
  priority must be less than or equal to $\prio$, and

\item To signal a CV with priority $\prio$, a thread's
  priority must be greater than or equal to $\prio$.
\end{enumerate}
This transitively ensures that any signaling thread is higher-priority
than any waiting thread.
The two conditions above become the first two restrictions enforced by
the type system we formalize in Section~\ref{sec:lang}; we will add to this
list as we consider more examples of priority inversions.

In Figure~\ref{fig:fut-terr}, we define the condition variable
~\texttt{cv} with priority~\texttt{High}, because the \texttt{main}
thread waits on it.
Now, the priority inversion manifests as a type error, because the
\texttt{Low} priority thread executing~\texttt{f} cannot signal a
high-priority condition variable.
Note that switching~\texttt{cv} to~\texttt{Low} priority would not fix
this, because then the \texttt{High}-priority ~\texttt{main} would
wait on a low-priority CV.


\begin{figure}
\small



\begin{minipage}[b]{0.49\textwidth}
\begin{lstlisting}
void <High>prod(CV<High> cv, Buffer *buf) {
  while(true) {
    ...
    append(buf, x);
    signal(cv);
  }
}

void <High>cons(CV<High> cv, Buffer *buf) {
  while(true) {
    if empty(buffer) {
      wait(cv);
    }
    x = pop(buf);
    ...
  }
}

void <Low>main() {
  CV<High> cv = new CV();
  Buffer *buf;

  spawn<High>(cons(cv, buf));
  spawn<High>(prod(cv, buf)); // ill-typed
}
\end{lstlisting}
\caption{Initial attempt at the producer-consumer program, which has
  a priority inversion (and a type error, in our full system).}
\label{fig:pc-terr}
\end{minipage}\hfill%
\begin{minipage}[b]{0.49\textwidth}
\begin{lstlisting}
void <High>prod(CV<Low> cv, Buffer *buf) {
  while(true) {
    ...
    append(buf, x);
    signal(cv);
  }
}

void <Low>main() {
  CV<Low> cv = new CV();
  Buffer *buf;

  spawn<High>(prod(cv, buf));
  CV<High> cv2 = promote<High>(cv);
  // Would now be ill-typed to spawn prod
  spawn<High>(cons(cv2, buf));
}
\end{lstlisting}
\caption{The producer-consumer example, without type errors or priority
  inversions.}
\label{fig:pc-fixed}
\end{minipage}
\end{figure}

\paragraph{A subtle priority inversion.}
Much of the complexity of the typing restrictions comes from a particular
pattern of priority inversion, which we illustrate with an instance
of the classic ``producer-consumer''
pattern.
%
Figure~\ref{fig:pc-terr} shows the pseudocode, where the \texttt{main}
thread runs at priority \texttt{Low}, and spawns a ``producer'' thread
that places messages into a buffer, and a ``consumer''
that removes messages from the buffer and processes them.
Both the producer and consumer run at priority \texttt{High}.
If the consumer gets ahead of the producer and
finds the buffer empty, we wish for it to sleep until the buffer is
non-empty.\footnote{For simplicity and to highlight the novel aspects
  of our type system, we assume the buffer is unbounded,
  so the producer never needs to block. Our system can also represent
  the classic case where the producer blocks when the buffer is
  full.}
To accomplish this, both threads also share a condition variable~\texttt{cv},
which must have priority~\texttt{High} because it is waited on by a
high-priority thread.
If the consumer finds the buffer empty, it waits on~\texttt{cv}.
The producer signals~\texttt{cv} when it adds an element to the buffer, thus
waking the consumer if it is blocked.

This code example satisfies conditions~1 and~2 governing the priorities of
threads and condition variables: in the example,
all relevant threads and condition variables are high priority.
But, this code nevertheless contains a subtle priority inversion.
Consider running the program by timesharing on a single processor.
When the main thread spawns the consumer, it will immediately begin running
(because it is high-priority), and will run until it finds the buffer empty
and waits on~\texttt{cv}.
At this point, the consumer, which is high-priority, is now waiting
not for the high-priority producer, but for the low-priority
\texttt{main} thread, which still needs to run to create the producer.
%
%
In our time-shared system, this wait could become effectively
unbounded if, for example, a long-running medium priority thread
(not shown in the pseudocode) preempts the low-priority \texttt{main} thread.

The problem is that the low-priority
\texttt{main} thread holds on to the high-priority~\texttt{cv}.
%
%
When this happens, it is possible for a thread waiting on the CV
to be blocked via the low-priority holding thread.
To prevent this, we add an additional restriction (which we will later
restate more formally)
governing the interactions between threads and condition variables:
\begin{enumerate}
  \setcounter{enumi}{2}
\item (Draft) If a thread has priority~$\prio$, then it
  {\em and its descendants} may not signal any condition variable that has
  priority greater than~$\prio$.
\end{enumerate}
In the example of Figure~\ref{fig:pc-terr}, the spawn of the producer
is ill-typed because the low-priority \texttt{main} thread attempts to
pass a high-priority condition variable to a thread that intends to
signal it (it will become clear later why the type error is on the
spawn rather than the signal).

It would seem that we are now stuck: the priority of the condition
variable must be high (because the high-priority consumer is waiting on
it), but then it cannot be passed by the low-priority main thread.
%
To make progress, we will allow threads to {\em promote} condition
variables by assigning them a higher priority under certain
conditions.
%
%
Note that when a low-priority thread promotes a condition variable, it
restricts its own power over that condition variable, e.g., it can no longer signal it.
Figure~\ref{fig:pc-fixed} shows how the~\texttt{main} thread can
promote the condition variable in the producer-consumer example.
The condition variable now starts at low priority and is passed as
such to the producer (this is acceptable, because the high-priority
producer may signal a low-priority condition variable).
After spawning the producer, the main thread promotes the condition
variable to high priority.
The~\texttt{promote} operation returns a new handle~\texttt{cv2} to the
same underlying condition variable as~\texttt{cv}, but which is typed at
priority~\texttt{High}.
This handle is passed to the consumer, which requires a CV at priority
\texttt{High}.
Note that the ordering of the spawns is important here: after
promoting the condition variable to high priority, {\em any} spawn of a
thread that signals the CV would be ill-typed due to restriction 3.
This ensures that no producer thread can be spawned at this point,
which prevents the priority inversion and type-checks.

\paragraph{The promote operation.}
Before discussing the full set of typing restrictions on the use and promotion
of condition variables, we give a high-level summary of the restrictions on
the condition variable API and common patterns for using it.

First, threads that signal a CV must be higher-priority than threads that
wait on the CV.
In many uses of CVs in our case studies, the same threads signal and wait
on CVs at different times (e.g., the same thread might act as a producer
and as a consumer at different times in a program).
In this case, the type system will enforce that all threads that use the
CV are at the same priority.
There is also a pattern in which the signaling threads are at a strictly
higher priority (e.g., a high-priority interaction thread in a server
application sends logging messages to a low-priority background thread in
a producer-consumer fashion).

The second point of concern is the order and conditions in which the
threads that use a CV are spawned.
This must be controlled to avoid the priority inversion of
Figure~\ref{fig:pc-terr}, in which a low-priority thread spawns the
consumer, which then blocks waiting for the producer to be
spawned.
If all of the threads that use the CV are at the same priority~$\prio$ and
all of these threads call both~\lstinline{wait} and~\lstinline{signal},
these threads should all be spawned by a thread at their
priority or higher (a common mistake would be to spawn them all from the
low-priority initial thread).
One solution that doesn't involve promotion is to spawn a temporary thread at
priority~$\prio$ (or higher) which spawns all of the threads that use the CV.
If the users of the CV are split between high-priority ``producers''
and low-priority ``consumers'', one can, as in Figure~\ref{fig:pc-fixed},
construct the CV at a lower priority, spawn the producers, then promote the
CV and spawn the consumers.

\paragraph{Typing restriction details.}
We now discuss the typing restrictions of the API at a more detailed level.
A full understanding of these details is not necessary for most users of
the condition variable API, but this discussion serves as a lead-in to the
formal type system.
Before proceeding, we must still discuss one additional important feature of
promotion.
Suppose that, in Figure~\ref{fig:pc-fixed}, the producer thread instead
ran at priority \texttt{Low}.
This still obeys all restrictions
discussed so far: the producer would signal a condition variable at
its own priority, and the main thread would still have the ability to pass
the condition variable to the producer.
Of course, this code would have a priority inversion, as the high-priority
consumer would be waiting on the low-priority producer.
To prevent this, we further fortify our restrictions and disallow
promoting a condition variable to a priority~$\prio$ if any
concurrently running thread may signal it at a priority lower
than~$\prio$.

To make this more formal, we equip condition variables with a notion of
{\em ownership} (e.g.,~\citep{crarywamo99, swm00, Cyclone02, Boyland03, afm07-l3}).
Each thread has some level of ownership of a condition variable at every
priority: conceptually, these levels are ``none'', ``owned'', and ``shared.''
It is an invariant that if one thread owns a condition variable at priority~$\prio$,
no other thread owns or shares it at~$\prio$ (though another thread might
own or share it at a different priority~$\prio'$).
Any number of threads can share a condition variable at a priority as long as no thread owns
it at that priority.
We can now re-state the restrictions governing condition variables and threads
in terms of ownership (restriction 1 is unchanged):
\begin{enumerate}
  \item To wait on a CV with priority $\prio$, a thread's
  priority must be less than or equal to $\prio$, and
\item To signal a CV with priority~$\prio$, a thread must own or
  share the CV at the thread's priority.
\item To pass {\em any} ownership of a CV at {\em any} priority
  to another thread, a thread must own or share the CV at
  its own priority.
\item No thread may own or share a CV at a priority lower than
  that CV's priority.
\item To promote a CV to a priority~$\prio$, a thread must
  {\em own} the CV at every priority lower than~$\prio$.
\end{enumerate}
A number of important facts follow directly from these restrictions.
First, these restrictions prevent the hypothetical priority inversion described
above: if the main thread promotes~\texttt{cv} from Low to High, it must
(by restriction 5)
own~\texttt{cv} at Low, which is impossible (by restriction 1 and the definition
of ownership) if the already-spawned low-priority producer signals it.
Second, the restated restriction 2, together with~4, implies the original
restriction~2: If a thread has a priority lower than that of the condition
variable, it cannot own or share it at the thread priority
and therefore cannot signal it.
Third, it is a direct result of restriction 4 that, after promoting a
condition variable to priority~$\prio$, the promoting thread must also give up
ownership at all lower priorities.
Finally, the formal statement of restriction 3 implies the informal version
from earlier because a thread cannot own or share a higher-priority condition
variable at the thread's own (lower) priority, and therefore cannot ``hold
onto'' it in a meaningful way, i.e., pass it to other threads that might
signal it.

\subsection{Mutexes}
Mutexes are used to enforce
mutual exclusion in critical sections of code, that is, ensure that
only one thread runs a critical section at a time.
The language we consider in this paper syntactically wraps critical sections
with the syntax~$\kwwithlock{v}{s}$.
If~$v$ is a mutex, this construct attempts to acquire the mutex; if successful,
it executes~$s$, ensuring that no other thread can acquire~$v$ during the
execution of~$s$.
When completed, it releases the mutex.
If the mutex referenced by~$v$ is already held by another thread, the
construct blocks until it can acquire the mutex and then proceeds to run~$s$.
\footnote{In many lower-level languages, this would be equivalent to
\texttt{lock v; s; unlock v}.
Our syntax ensures that locks and unlocks are paired and that the mutex is
always unlocked by the thread that locked it.
The syntax also allows a type checker to see the entire scope of a critical
section, which will be important for our type system.}

\begin{figure}
  \centering
  \begin{minipage}{0.33\textwidth}
    \begin{lstlisting}
Mutex mut;
void <Low> f() {
  with (mut) { ... }
}
\end{lstlisting}
  \end{minipage}\hfill%
  \begin{minipage}{0.33\textwidth}
    \begin{lstlisting}[firstnumber=last]
void <High> g() {
  with (mut) { ... }
}
\end{lstlisting}
  \end{minipage}\hfill%
  \begin{minipage}{0.33\textwidth}
    \begin{lstlisting}[firstnumber=last]
void <Medium> h() {
  while (1) { ... }
}
\end{lstlisting}
  \end{minipage}
  \caption{A priority inversion with three threads and a mutex.}
  \label{fig:mut-prio-inv}
\end{figure}

The classic case of a priority inversion with mutexes is illustrated in
Figure~\ref{fig:mut-prio-inv}.
Suppose the low priority thread runs first and acquires the mutex~\texttt{mut}.
The high priority thread then runs and attempts to acquire~\texttt{mut} but
blocks.
The medium thread then preempts the low priority thread and prevents it from
running for a long or indefinite period of time, thus delaying the high
priority thread, which is blocked on it.
(Note that, under the definition we use in this paper, a priority inversion
would still be present even in the absence of the medium priority thread;
without it, however, the impact on latency
would be bounded by the length of low's critical section.)

We could, as we did with condition variables, add static restrictions on the
use of mutexes to prevent priority inversions.
Such a restriction would require that any thread attempting to acquire a
given mutex~\texttt{mut} have a single priority~$\prio_{\mathtt{mut}}$ specific
to that mutex (to see why this is necessary, consider threads at two priorities
A and B potentially contending on a mutex: if the lower-priority thread is
successful, the higher-priority thread is waiting for it and there is a
priority inversion; absent any static guarantees on which thread will succeed,
A and B must be the same priority).
This is not as heavy-handed a restriction as it may seem at first
glance---a programmer could determine, for each mutex, the highest priority
at which a thread might try to acquire it (often called the {\em priority
  ceiling}), and any critical sections would run at the priority ceiling
for the mutex.\footnote{Our language does not allow for threads to change
  their own priority, but this can be effectively done by spawning a new
  thread to run the critical section and signal a condition variable, which is waited on
  by the main thread.}
This is similar to the approach suggested
by~\citet{lr-mesa-1980}.

Most implementations, however, take one of a number
of more efficient approaches to prevent priority inversions
dynamically and we model one such approach in this paper.
Like the solution proposed above, the approach we will refer to as the
{\em priority ceiling protocol}  associates with each
mutex a priority ceiling, that is, the highest priority of a thread that might
try to acquire it.
This priority is specified at the time the mutex is created.
A thread~$T$ at a lower priority may acquire the mutex, but if a higher priority
thread tries to acquire the mutex during~$T$'s critical section,~$T$ is
dynamically promoted to the priority ceiling for the remainder of its critical
section, thus preventing any threads at intermediate priorities from delaying
the high-priority thread.

\begin{figure}
  \begin{minipage}{0.49\textwidth}
  \begin{lstlisting}
Mutex<High> mut;
CV<Low> cv;
void <High>producer () {
  while(true) {
    ...
    with(mut) {
      append(buf, x)
      signal(cv)
    }
  }
}
\end{lstlisting}
  \end{minipage}\hfill%
  \begin{minipage}{0.49\textwidth}
  \begin{lstlisting}
void <Low>consumer () {
  while(true) {
    with(mut) {
      while (empty(buffer))
        wait(cv);  // Ill-typed
      x = pop(buf);
    }
    ...
  }
}
\end{lstlisting}
  \end{minipage}
  \caption{A more realistic version of the producer-consumer problem
    with mutexes and condition variables.}
  \label{fig:mut-cv}
\end{figure}

Although we use dynamic techniques to handle priority inversions involving
mutexes, the type system still comes into play in two ways.
First, to ensure soundness, the type system enforces that the priority of the
mutex is indeed its priority ceiling, that is, that every thread using the
mutex is at a lower priority than the mutex's priority.
The second type system modification is necessary
in order to properly handle code that uses both mutexes and condition variables,
and we demonstrate its necessity by way of another example.
Figure~\ref{fig:mut-cv} shows a more realistic implementation of the
producer-consumer code (this time with the producer higher-priority than the
consumer) which uses the mutex~\texttt{mut} both to guard queue
operations, which may not be atomic, and ensure that the queue is still
nonempty when the consumer calls~\texttt{pop} (this is necessary if there are
multiple consumers).\footnote{POSIX CVs would also require
  that~\texttt{mut} is passed to~\texttt{wait}, which would release the mutex
  while the thread blocks on~\texttt{cv} and reacquire it when awoken by a
  signal. This behavior causes no issues for safety---indeed, it is safe for
  threads to temporarily release locks at any point within \texttt{with}
  blocks, as long as the \texttt{with} block contains all points at which
  the lock may be held. Because this issue is therefore orthogonal to the
  soundness of the type system, we omit it from the model for simplicity.}
Because the producer has priority~\texttt{High}, the priority of~\texttt{mut}
must be~\texttt{High}.
However, this means that the consumer's critical section, which ordinarily
runs at~\texttt{Low}, may be dynamically raised to~\texttt{High} if the
producer attempts to acquire the mutex during the critical section.
If this occurs, the consumer may then call~\texttt{wait} on the low-priority
condition variable~\texttt{cv} while running at high priority; this is a
violation of type safety.

In order to prevent the type safety violation described above, we require that
every critical section (the~$s$ of~$\kwwithlock{v}{s}$) type-check at both
its normal priority and the priority ceiling of the mutex~$v$,
as the critical section may run at either or both of these
priorities.
In Figure~\ref{fig:mut-cv}, this means that the consumer code
is ill-typed because the critical section cannot check at~\texttt{High}
due to the~\texttt{wait(cv)}.
(The code could be made type-correct by changing the
priority of~\texttt{cv} to~\texttt{High}.)
In summary, we add two
restrictions to the list (in addition to the restrictions above for
condition variables):
\begin{enumerate}
  \setcounter{enumi}{5}
\item Any critical section for a mutex at priority~$\prio$ must run at
  a priority less than or equal to~$\prio$
  (i.e.,~$\prio$ is the priority ceiling of the mutex).
\item Any critical section for a mutex at priority~$\prio$ must type check
  at both its own priority and~$\prio$.
\end{enumerate}

\secput{lang}{Type System for Responsiveness}

In this section, we formalize the ideas of Section~\ref{sec:overview} by
presenting a core calculus {\langname} with threading, condition variables
and mutexes.
%

\subsection{Syntax}

\begin{figure}
\langfigsize
\[
\begin{array}{l l l}
  \prio \in \prios = \mathit{Priorities} & a, b, c \in \mathit{Threads}\\
  \cvname \in \mathit{Condition Variables} &
  \mutname \in \mathit{Mutexes} &
  \cvmutname \bnfdef \cvname \bnfalt \mutname
  \end{array}
\]
\[
\begin{array}{l l r l }


  \mathit{Types} & \tau & \bnfdef &
  \kwunit
  \bnfalt  \kwnat
  \bnfalt \kwreft{\tau}
  \bnfalt \kwcvt{\cvname}{\prio}
  \bnfalt \kwmutext{\prio}
\\

  \mathit{Values} & v & \bnfdef &
  x \bnfalt \kwtriv
  \bnfalt \kwnumeral{n}
  \bnfalt \kwcv{\cvname}
  \bnfalt \kwmutex{\mutname}
  \bnfalt \kwref{\kwassn}
  \\
  
  \mathit{Instructions} & \instr & \bnfdef &
  v
  \bnfalt \kwspawn{\prio}{\tau}{\vec{\cvname}}{s}
  \bnfalt \kwnewref{\tau}{v}
  \bnfalt \kwderef{v}
  \bnfalt  \kwassign{v}{v}
  \\
  & & \bnfalt & \kwnewcv{\prio}
  \bnfalt \kwwait{v}
  \bnfalt \kwsignal{v}
  \bnfalt \kwpromote{v}{\prio}
  \bnfalt \kwnewmutex{\prio}

  \\

  \mathit{Statements} & s & \bnfdef &
  \kwlet{x}{\instr}{\block}
  \bnfalt \kwwithlock{v}{\block}
  \bnfalt \kwif{v}{\block}{\block}
  \bnfalt \kwwhile{v}{\block}
  \bnfalt s; s
  \bnfalt \kwskip


  
\end{array}
\]
\caption{Syntax of {\calcname}}
\label{fig:syn}
\end{figure}

Figure~\ref{fig:syn} presents the syntax of {\langname}.
Priorities~$\prio$ are drawn from a totally ordered set~$\prios$
with total order~$\preceq$.
Both the set and the order are fixed for the
duration of the program but may be arbitrary.
We use metavariables~$a, b, c$, and variants to denote threads.
Each condition variable and mutex has a unique such name which
we will use in the semantics to track acquisition, signaling, etc.
We use~$\cvname$ and variants as the names of condition variables,
$\mutname$ for mutexes and~$\cvmutname$ for either a condition variable or
a mutex.
Base types are~$\kwunit$ and natural numbers.
We also have the type~$\kwreft{\tau}$ of mutable references to values of
type~$\tau$.
There are also types for handles to condition variables and mutexes.
Both types indicate the priority of the handle (for mutexes, this is the
priority of the mutex; condition variables may have several handles at
different priorities).
The type of condition variable handles also contains the name of the
condition variable with which the handle is associated.

The rest of the language is in three levels: values~$v$, instructions~$\instr$
and statements~$s$.
Values consist of variables as well as other expressions that do not evaluate:
the unit value; natural numbers; and handles to condition variables, mutexes,
and references.
Instructions are in ``2/3-cps'' form: they contain only values
as subcomponents, simplifying the presentation of the semantics.
Complex expressions can be built by let-binding intermediate results, so
this causes no loss of generality.
Instructions perform a single operation on values and return a new value.
%
%
The instruction~$\kwspawn{\prio}{\tau}{}{s}$ spawns a new thread at
priority~$\prio$ to run the statement~$s$.
Instructions for manipulating global state are~$\kwnewref{\tau}{v}$
which creates a new reference of type~$\tau$, initialized to~$v$;
$\kwderef{v}$, which gets the value of the reference~$v$;
and~$\kwassign{v_1}{v_2}$ which assigns~$v_2$ to the reference~$v_1$.
The instruction~$\kwnewcv{\prio}$ creates a new condition variable and
returns a handle to it at priority~$\prio$; this handle can then be used to
wait, signal, or promote the condition variable.

Statements handle control flow.
The statement $\kwlet{x}{\instr}{s}$
binds~$x$ to the result of instruction~$\instr$
and proceeds with~$s$.
Otherwise, statements contain only values and substatements.
The mutex-guarded critical section~$\kwwithlock{v}{s}$ is also a statement.
Statements also consist of conditionals, while loops, concatenation of
statements, and the empty statement~$\kwskip$.

\subsection{Static Semantics}
\iffull
\begin{figure}
\langfigsize
    \centering
  \def \MathparLineskip {\lineskip=\mylineskip}
\begin{mathpar}

\Rule{var}
{
  \strut
}
{
  \etyped{\sig}{\ctx, \hastype{x}{\tau}}{x}{\tau}
}
\and
\Rule{\kwunit I}
{
\strut
}
{
\etyped{\sig}{\ctx}{\kwtriv}{\kwunit}
}
\and
\Rule{\kwnat I}
{
\strut
}
{
\etyped{\sig}{\ctx}{\kwn}{\kwnat}
}
\and
\Rule{RefVal}
     {
       \strut
     }
     {
       \etyped{\sig,\sigrtype{\kwassn}{\tau}}{\ctx}{\kwref{\kwassn}}
              {\kwreft{\tau}}
     }
\and
\Rule{Mutex}
     {
       \strut
     }
     {
       \etyped{\sig, \sigcvtype{\mutname}{\prio}}{\ctx}
              {\kwmutex{\mutname}}{\kwmutext{\prio}}
     }
\and
\Rule{CV}
     {
     }
     {
       \etyped{\sig,\sigcvtype{\cvname}{\prio}}{\ctx}{\kwcv{\cvname}}
              {\kwcvt{\cvname}{\prio}}
     }
\end{mathpar}
\caption{Value typing rules.}
\label{fig:val-statics}
\end{figure}
\fi
\begin{figure*}
\langfigsize
\centering
\def \MathparLineskip {\lineskip=\mylineskip}
  \begin{mathpar}
\Rule{Spawn}
{
  \styped{\sig}{\ctx}{\fctx}{s}{\prio'}{\fctx''}\\
  \forall \cvname, \prio''. \permof{\fctx}{\cvname}{\prio''} \neq \fnone
  \rightarrow
  \permof{\fctx}{\cvname}{\prio} \neq \fnone
}
{
  \ityped{\sig}{\ctx}{\fctx', \fctx}
  {\kwspawn{\prio'}{\tau}{\vec{\cvname}}{s}}
  {\kwunit}{\prio}
  {\fctx'}
}
\and
\iffull
\Rule{NewRef}
     {
       \etyped{\sig}{\ctx}{v}{\tau}
     }
     {
       \ityped{\sig}{\ctx}{\fctx}{\kwnewref{\tau}{v}}{\kwreft{\tau}}{\prio}{\fctx}
     }
\and
\Rule{Deref}
     {
       \etyped{\sig}{\ctx}{v}{\kwreft{\tau}}
     }
     {
       \ityped{\sig}{\ctx}{\fctx}{\kwderef{v}}{\tau}{\prio}{\fctx}
     }
\and
\Rule{Update}
     {
       \etyped{\sig}{\ctx}{v_1}{\kwreft{\tau}}\\
       \etyped{\sig}{\ctx}{v_2}{\tau}
     }
     {
       \ityped{\sig}{\ctx}{\fctx}{\kwassign{v_1}{v_2}}{\kwunit}{\prio}{\fctx}
     }
\and
\fi     
\Rule{Wait}
{
  \etyped{\sig}{\ctx}{v}{\kwcvt{\cvname}{\prio'}}\\
  \meetc[\worlds]{\ctx}{\ple{\prio}{\prio'}}
}
{
  \ityped{\sig}{\ctx}{\fctx}{\kwwait{v}}{\kwunit}{\prio}{\fctx}
}
\and
\Rule{Signal}
{
  \etyped{\sig}{\ctx}{v}{\kwcvt{\cvname}{\prio'}}\\
  \permof{\fctx}{\cvname}{\prio} \neq \fnone
}
{
  \ityped{\sig}{\ctx}{\fctx}
  {\kwsignal{v}}{\kwunit}
  {\prio}{\fctx}
}
\and
\Rule{Promote}
{
  \etyped{\sig}{\ctx}{v}{\kwcvt{\cvname}{\prio_1}}\\
  \meetc[\worlds]{\ctx}{\ple{\prio_1}{\prio_2}}\\
  \forall \prio_0. (\ple{\prio_1}{\prio_0} \land \nple{\prio_2}{\prio_0}) \rightarrow
  \permof{\fctx}{\cvname}{\prio_0} = \fowned\\
  \forall \prio_0. \nple{\prio_2}{\prio_0} \rightarrow
  \permof{\fctx'}{\cvname}{\prio_0} = \fnone\\
  \forall \prio_0. \ple{\prio_2}{\prio_0} \rightarrow
  \permof{\fctx'}{\cvname}{\prio_0} = \permof{\fctx}{\cvname}{\prio_0}\\
  \forall \cvname' \neq \cvname, \prio_0.
  \permof{\fctx'}{\cvname'}{\prio_0} = \permof{\fctx}{\cvname'}{\prio_0}\\
}
{
  \ityped{\sig}{\ctx}{\fctx}
  {\kwpromote{v}{\prio_2}}
  {\kwcvt{\cvname}{\prio_2}}
  {\prio}
  {\fctx'}
}
\and
\Rule{NewCV}
{
  \ple{\prio'}{\prio}\\
  \cvname\fresh\\
  \forall \prio_0, \ple{\prio'}{\prio_0}. \permof{\fctx'}{\cvname}{\prio_0} = \fowned\\
  \forall \prio_0, \nple{\prio'}{\prio_0}. \permof{\fctx'}{\cvname}{\prio_0} = \fnone\\
  \forall \cvname' \neq \cvname. \forall \prio_0.
  \permof{\fctx'}{\cvname'}{\prio_0} =
  \permof{\fctx}{\cvname'}{\prio_0}
}
{
  \ityped{\sig}{\ctx}{\fctx}{\kwnewcv{\prio'}}{\kwcvt{\cvname}{\prio'}}
  {\prio}{\fctx'}
}
\iffull
\and
\Rule{NewMutex}
     {
       \strut}
     {
       \ityped{\sig}{\ctx}{\fctx}{\kwnewmutex{\prio'}}{\kwmutext{\prio'}}
              {\prio}{\fctx}
     }
\fi
\end{mathpar}

\caption{Instruction typing rules.}
\label{fig:inst-statics}
\end{figure*}

\begin{figure*}
\langfigsize
\centering
\def \MathparLineskip {\lineskip=\mylineskip}
  \begin{mathpar}
     \Rule{Let}
         {\ityped{\sig}{\ctx}{\fctx}{\instr}{\tau}{\prio}{\fctx'}\\
           \styped{\sig}{\ctx, \hastype{x}{\tau}}{\fctx'}{s}{\prio}{\fctx''}}
         {\styped{\sig}{\ctx}{\fctx}{\kwlet{x}{\instr}{s}}{\prio}{\fctx''}}
    \and
    \Rule{Skip}
         {\strut}
         {\styped{\sig}{\ctx}{\fctx}{\kwskip}{\prio}{\fctx}}
    \and
    \Rule{WithLock}
     {
       \etyped{\sig}{\ctx}{v}{\kwmutext{\prio'}}\\
       \styped{\sig}{\ctx}{\fctx}{s}{\prio'}{\fctx'}\\
       \styped{\sig}{\ctx}{\fctx}{s}{\prio}{\fctx'}\\
       \meetc{\ctx}{\ple{\prio}{\prio'}}
     }
     {
       \styped{\sig}{\ctx}{\fctx}{\kwwithlock{v}{s}}{\prio}{\fctx'}
     }
     \and
    \Rule{If}
         {\etyped{\sig}{\ctx}{v}{\kwnat}\\
           \styped{\sig}{\ctx}{\fctx}{s_1}{\prio}{\fctx'}\\
           \styped{\sig}{\ctx}{\fctx}{s_2}{\prio}{\fctx'}}
         {\styped{\sig}{\ctx}{\fctx}{\kwif{v}{s_1}{s_2}}{\prio}{\fctx'}}
    \and
    \Rule{While}
         {\etyped{\sig}{\ctx}{v}{\kwnat}\\
           \styped{\sig}{\ctx}{\fctx}{s}{\prio}{\fctx}}
         {\styped{\sig}{\ctx}{\fctx}{\kwwhile{v}{s}}{\prio}{\fctx}}
    \and
    \Rule{Seq}
         {\styped{\sig}{\ctx}{\fctx}{s_1}{\prio}{\fctx'}\\
           \styped{\sig}{\ctx}{\fctx'}{s_2}{\prio}{\fctx''}\\
         }
         {
           \styped{\sig}{\ctx}{\fctx}{s_1; s_2}{\prio}{\fctx''}
         }
  \end{mathpar}
  \caption{Statement typing rules.}
  \label{fig:stmt-statics}
\end{figure*}


We now present the type system for {\langname}, which, together with the
dynamic priority ceiling mechanism for mutexes, statically ensures that
programs are free of priority inversions.
%
%
The typing judgment for values is of the
form~$\etyped[\prios]{\sig}{\ctx}{v}{\tau}$.
The judgment has two parameters:
$\prios$ is the totally ordered set of priorities.
The signature~$\sig$ maps reference cells to the types they contain
(these entries will be written~$\sigrtype{\kwassn}{\tau}$),
and mutexes and condition variables to their priorities
($\sigcvtype{\cvname}{\prio}$ and $\sigcvtype{\mutname}{\prio}$).
There will be at most one entry in a signature for each reference cell and
mutex.
Condition variables, however, can appear in a signature multiple times with
multiple priorities, allowing different handles to the condition variable to
have different priorities.
The context~$\ctx$, as usual, maps variables to types.
\iffull
The rules for the judgment, in Figure~\ref{fig:val-statics}, are relatively
straightforward.
\else
The rules for the judgment are straightforward and are omitted for space
reasons.
\fi

The typing judgment for instructions,
$\ityped[\prios]{\sig}{\ctx}{\fctx}{\instr}{\tau}{\prio}{\fctx'}$
is more complex.
It uses~$\prios$,~$\sig$, and~$\ctx$ as before.
The typing of instructions and statements, unlike values, depends on the
priority at which the instruction is run; this priority is indicated
as~$\prio$ in the judgment.
The judgment also tracks ownership of condition variables which, as motivated
in Section~\ref{sec:overview}, can be in one of three states:
$\fnone$, $\fshared$, and~$\fowned$.
The mapping~$\fctx$ maps pairs of condition variable names and
priorities to a ``permission'' level,
written~$\permof{\fctx}{\cvname}{\prio} = \fracp \in \{\fnone, \fshared, \fowned\}$.
Instructions may produce and consume permissions, so the
judgment contains the mapping before~($\fctx$) and after~($\fctx'$)
the instruction.
Finally, the judgment indicates that the instruction produces a value of
type~$\tau$.

In the~\rulename{Spawn} rule, $\fctx',\fctx$ means that permissions are
split between~$\fctx'$ and~$\fctx$.
To split permission levels, we define the
judgment~$\splitsto{\fracp_1}{\fracp_2}{\fracp_3}$,
defined in in Figure~\ref{fig:split-perm}, which indicates that the
permission in~$\fracp_1$ is split between~$\fracp_2$ and~$\fracp_3$.
Formally, if~$\fctx'' = \fctx',\fctx$,
then for all~$\cvname$ and~$\prio$,
$\splitsto{\permof{\fctx''}{\cvname}{\prio}}
{\permof{\fctx'}{\cvname}{\prio}}{\permof{\fctx}{\cvname}{\prio}}$.
The current thread keeps~$\fctx'$ and~$\fctx$ is passed to the new thread.
The second premise requires that if~$\fctx$ shares or owns
a condition variable~$\cvname$ at any priority, the spawning thread shares
or owns it at its own priority.
This enforces restriction~3 of the list in Section~\ref{sec:overview}.

\begin{figure}
\[
\begin{array}{c c c}
\splitsto{\fowned}{\fowned}{\fnone} &
\splitsto{\fowned}{\fnone}{\fowned} &
\splitsto{\fowned}{\fshared}{\fshared}\\
\splitsto{\fshared}{\fnone}{\fshared} &
\splitsto{\fshared}{\fshared}{\fnone} &
\splitsto{\fshared}{\fshared}{\fshared}\\
& \splitsto{\fnone}{\fnone}{\fnone}
\end{array}
\]
\caption{Rules for splitting permission levels.}
\label{fig:split-perm}
\end{figure}

\iffull
The rules~\rulename{NewRef}, \rulename{Deref}, and \rulename{Update} do not
interact with priorities or permissions, but assign types to
operations on references in standard ways.
\else
We omit the rules for operations on references, which do not
interact with priorities or permissions.
\fi
The~\rulename{Wait} operation requires that the subexpression have the
type of a handle at priority~$\prio'$ to the condition variable~$\cvname$.
As motivated earlier (restriction 1), waiting requires that the priority of
the current thread is lower than the priority of the handle.
Waiting does not require any ownership of the condition variable.
The~\rulename{Signal} rule requires that the current thread own
or share the CV~$\cvname$ (restriction 2).
Recall that signaling also requires that the thread's priority be greater than
or equal to
that of the condition variable, but requiring non-none permission
already ensures this, as there is no permission at lower priorities
(restriction 4).

The~\rulename{Promote} rule gets the CV name and priority from the type of the
CV handle, and checks that the current priority of the handle is lower
than~$\prio_2$, the priority to which it is being promoted.
As motivated earlier (restriction 5), promotion requires~$\fowned$ permission
at all
priorities not greater than~$\prio_2$; this is checked by the third premise.
The remaining premises define~$\fctx'$, the returned permission
mapping.
Permission for~$\cvname$ at all priorities not greater than~$\prio_2$ is removed
(to preserve restriction 4), otherwise permissions are preserved.
Finally,~\rulename{NewCV} creates a new condition variable name for the
returned handle and returns a new context~$\fctx'$ with ownership of the
new condition variable initialized to~$\fowned$ or~$\fnone$ at appropriate priorities.

%
%

The judgment for statements
is~$\styped[\prios]{\sig}{\ctx}{\fctx}{s}{\prio}{\fctx'}$, and is
largely the same as that of instructions except that statements do not
return a value.
The only rule that directly interacts with priorities is~\rulename{WithLock}.
This rule ensures that the critical section is well-typed at both the
thread's current priority and the priority ceiling of the mutex (because the
critical section may be raised to this priority at runtime): this corresponds
to restriction~7 of Section~\ref{sec:overview}.
Regardless of priority, the critical section must be typable with the same
contexts~$\fctx$ and~$\fctx'$, which are threaded through.
The final premise enforces restriction~6, that is, that the priority of the
current thread is less than or equal to the mutex's priority ceiling.
The remaining rules thread priority and ownership through substatements.
The~\rulename{Let} rule also adds the bound variable~$x$ to the context
with its appropriate type when typing the statement~$s$.
Note that in~\rulename{If}, both branches must have the same input and output
permissions.
The permissions for the body of a while loop must be invariant, as enforced
by the~\rulename{While} rule; this is, however, fairly permissive due to our
coarse-grained permissions.
For example, $\fshared$ permissions can be split arbitrarily within a loop
as long as they remain~$\fshared$.

\iffull
\else
\subsection{Extensions}\label{sec:extensions}
Below, we discuss two synchronization operations that are not
currently modeled in {\calcname} but could be added without difficulty.
We have excluded them thus far in the interest of
keeping the semantics and proofs as simple as possible and focusing on the
key points.

\newcommand{\kwtrywith}{\kw{trywith}}
\paragraph{Trylock.} In many implementations of mutex-based synchronization,
\texttt{trylock} is a nonblocking construct that attempts to acquire a mutex;
if the mutex is already locked, \texttt{trylock} returns immediately with a
return value or error code indicating that it failed to acquire the mutex.
We could model a variant of~$\kwwithlock{v}{s}$ that does something similar;
we can call it~$\kwtrywith$.
In the context of our syntax, on failure,~$\kwtrywith$ would run an alternative
statement or set a designated variable to indicate failure.
The restrictions on the use of~$\kwtrywith$ would be identical to those
for~\texttt{with}.
%

\paragraph{Broadcast.} The \texttt{signal} operation
wakes up one thread waiting on the CV.
Many implementations of CVs also include a \texttt{broadcast}
primitive that wakes up all waiting threads.
This operation would be straightforward to include in {\calcname}; its typing
restrictions would be identical to those of \texttt{signal}.
The dynamic semantics rule (analogous to \rulename{Signal1}) would add all
threads from~$\waiters(\cvname)$ back to the thread pool and add sync edges
from the current thread to all waiting threads.
\fi


\iffull
\section{A DAG Model for Responsive Synchronization}\label{sec:dag}

In this section, we set up the formalization of priority inversions that we
will use to prove the correctness of the type
system.
The formalization proceeds as follows: we develop a model for
representing the parallelism and synchronization of {\calcname} programs using
a graph.
Prior work~\citep{mah-responsive-2017, mah-priorities-2018, MullerSiGoAcAgLe20} has
established results about the efficient schedulability of programs
represented by such graphs, provided the graphs are ``well-formed'',
which corresponds to the absence of priority inversions.
We present a new generalization of these techniques and apply it to {\calcname},
including a definition of well-formedness for programs with mutexes and
CVs.
It is this definition that we will use to show that well-typed {\calcname}
programs are free of priority inversions.
We begin by reviewing preliminaries related to DAG-based parallelism
and scheduling models and proceed to extend these results to {\calcname}.

\subsection{Preliminaries: DAG Models}
We model executions of parallel programs using
Directed Acyclic Graphs, or DAGs, in which
vertices represent units of computation (to simplify later
concepts, we will assume without loss of generality that each vertex represents
a single, consistent unit of computation time, like a processor cycle)
and edges represent the dependences between portions of the program.
If~$u$ is an ancestor of~$u'$ in the DAG, notated~$\anc{u}{u'}$,
then either~$u = u'$ or~$u$ must be executed before~$u'$.
If~$\nanc{u}{u'}$ and~$\nanc{u'}{u}$, that is, there is no path of dependences
between the two vertices in the graph, then the computations represented
by~$u$ and~$u'$ may be run in parallel.

Because {\calcname} allows arbitrary
synchronization, it is possible for programs to contain
{\em deadlocks}, where two or more threads depend on each other in a cyclic
fashion; such a condition will manifest as a cycle in the dependence graph.
However, because little of interest can be said about the running time of
programs with deadlocks, our results will focus on non-deadlocking programs,
whose graphs are acyclic.
Thus, we will use the terms ``graph'' and ``DAG'' somewhat interchangeably.

A DAG represents a particular execution of
a parallel program.
In a nondeterministic program, the pattern of synchronizations
may depend on many factors outside our reasoning and so
a particular program may give rise to many possible DAGs, one for each
possible execution.


\paragraph{DAG Notation and Terminology.}
We will use the metavariable~$\graph$ (and variants thereof) to denote graphs
and will notate a graph as a
quadruple~$\dagq{\gthreads}{\spawns}{\syncs}{\reads}$.
In this notation,~$\gthreads$ is a set of {\em threads} and the other three
components are sets of inter-thread edges.
As in {\langname}, we name threads using~$a, b, c,$ and variants.
We write~$\gthread{a}{\prio}{u_1 \tscomp u_2 \tscomp \dots \tscomp u_n}$
for a single thread~$a$ of priority~$\prio$ with vertices~$u_1, \dots, u_n$.
The edges~$(u_1, u_2), (u_2, u_3), \dots, (u_{n-1}, u_n)$ are
implied; we will refer to these as {\em thread} edges.
We will often use metavariables~$s$ and~$t$ for the first and last vertices,
respectively, of a thread, but otherwise use~$u$ and variants for vertices.
The sets~$\spawns$,~$\syncs$ and~$\reads$
contain {\em \termspawn}, {\em \termsync} and {\em weak} edges, respectively.
{\termSpawn} edges are of the form~$(u, a)$, where~$u$ is a vertex and~$a$ is a
thread created or spawned off by the computation~$u$: as an edge, this may
be seen as an edge from~$u$ to the first vertex of thread~$a$.
Edges in~$\syncs$ and~$\reads$ are of the form~$(u_1, u_2)$, representing
an edge from~$u_1$ to~$u_2$.
{\termSync} edges represent synchronization between threads.

Weak edges~\citep{MullerSiGoAcAgLe20} capture happens-before relationships
between computations that occur at runtime but are not the result of
explicit synchronization.
As an example, weak edges were originally used to track happens-before
relationships induced by global memory: if the computation~$u$
writes a value into memory and that value is read by the computation~$u'$,
there may be a weak edge~$(u, u')$.
A path, or sequence of edges, is {\em strong} if all of its edges are
strong (i.e., not weak).
If~$\anc{u}{u'}$ and all paths from~$u$ to~$u'$ are strong, then~$u$ is
a {\em strong ancestor} of~$u'$, written~$\sanc{u}{u'}$.
If some path from~$u$ to~$u'$ contains a weak edge, then~$u$ is a
{\em weak ancestor} of~$u'$, written~$\wanc{u}{u'}$.
We will continue to write~$\anc{u}{u'}$ if it is not important whether the
paths are weak or strong.

\paragraph{Schedules and Response Time.}
Given a set of processors and a series of time steps,
a {\em schedule} of a DAG is an assignment of vertices to processors
at each time step; this corresponds to executing the program represented by
the DAG on a parallel machine.
The schedule must respect the dependences indicated by the DAG.
A vertex is {\em ready} if all of its strong ancestors have already been
executed; a schedule may only assign ready vertices.
A schedule is {\em admissible} if it also reflects the happens-before
relationships indicated by weak edges;
in an admissible schedule, if a vertex is ready, all of its weak parents
(that is, the sources of its incoming weak edges) have already been executed.

%
We wish to find a schedule of a DAG that minimizes the {\em response time}
of a particular thread~$\gthread{a}{\prio}{s\tscomp \dots \tscomp t}$,
which we define to be the number of steps (inclusive) from when~$s$ becomes
ready to when~$t$ is executed.
%
%
Results such as Brent's Theorem~\citep{brent74} and its descendants
give approximately-optimal bounds for schedules meeting certain requirements.
The requirement we will use is that the schedule is {\em prompt}~\citep{mah-priorities-2018, mah-responsive-2017}.
At any time step, a prompt schedule first assigns all ready nodes at the
highest priority, then the next highest and so on.
%
%
Processors are only left idle if no ready vertices remain.

\paragraph{Work and Span.}
Bounds on schedules are typically stated in terms of two quantities which can
be computed from the DAG: the {\em work} is the total amount of computation,
 and the {\em span} is a measure of the
critical-path length.
We bound response time in terms of quantities closely related to these.
The {\em competitor work}~$\prioworkof{\compwork{}{a}}{\psnlt{\prio}}$
of a thread~$a$ of priority~$\prio$ is the number of vertices (amount of work)
at priority not
less than~$\prio$ which may be executed in parallel with vertices of~$a$.
These are, informally, the vertices with which vertices of thread~$a$ may
have to ``compete'' for processor time.
Competitor work is defined formally as follows:
\[
\prioworkof{\compwork{}{a}}{\psnlt{\prio}} \defeq
|\{u \in \graph \mid \nanc{u}{s} \land \nanc{t}{u} \land
\pnlt{\uprio{\graph}{u}}{\prio}\}|\]

We also define a metric analogous to span, but this definition deviates from
prior work, and so we defer it to later in the section.

\subsection{Graph Models for Mutexes and Condition Variables}

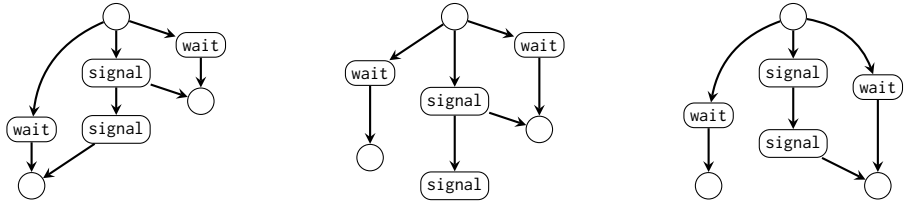
\begin{figure}
  \footnotesize
  \begin{tikzpicture}[scale=0.75]
    \tikzstyle{node}=[draw,rectangle,rounded corners,minimum width=14pt];
    \tikzstyle{enode}=[draw,circle,minimum width=10pt];
    \tikzstyle{edge}=[draw,thick,-stealth];

    \node[enode] (0) at (0, 0) {};
    \node[node] (w1) at (1.5, -0.5) {\texttt{wait}};
    \node[enode] (n1) at (1.5, -1.5) {};
    \node[node] (s1) at (0, -1) {\texttt{signal}};
    \node[node] (s2) at (0, -2) {\texttt{signal}};
    \node[node] (w2) at (-1.5, -2) {\texttt{wait}};
    \node[enode] (n2) at (-1.5, -3) {};
    
    \path[edge] (0)--(w1);
    \path[edge] (0) to [bend right] (w2);
    \path[edge] (0)--(s1);
    \path[edge] (w1)--(n1);
    \path[edge] (w2)--(n2);
    \path[edge] (s1)--(s2);
    \path[edge] (s1)--(n1);
    \path[edge] (s2)--(n2);

    \node[enode] (1) at (6, 0) {};
    \node[node] (1w1) at (7.5, -0.5) {\texttt{wait}};
    \node[enode] (1n1) at (7.5, -2) {};
    \node[node] (1s1) at (6, -1.5) {\texttt{signal}};
    \node[node] (1s2) at (6, -3) {\texttt{signal}};
    \node[node] (1w2) at (4.5, -1) {\texttt{wait}};
    \node[enode] (1n2) at (4.5, -2.5) {};
    
    \path[edge] (1)--(1w1);
    \path[edge] (1)--(1w2);
    \path[edge] (1)--(1s1);
    \path[edge] (1w1)--(1n1);
    \path[edge] (1w2)--(1n2);
    \path[edge] (1s1)--(1s2);
    \path[edge] (1s1)--(1n1);

    \node[enode] (2) at (12, 0) {};
    \node[node] (2w1) at (13.5, -1.25) {\texttt{wait}};
    \node[enode] (2n1) at (13.5, -3) {};
    \node[node] (2s1) at (12, -1) {\texttt{signal}};
    \node[node] (2s2) at (12, -2.25) {\texttt{signal}};
    \node[node] (2w2) at (10.5, -1.75) {\texttt{wait}};
    \node[enode] (2n2) at (10.5, -3) {};
    
    \path[edge] (2) to [bend left] (2w1);
    \path[edge] (2) to [bend right] (2w2);
    \path[edge] (2)--(2s1);
    \path[edge] (2w1)--(2n1);
    \path[edge] (2w2)--(2n2);
    \path[edge] (2s1)--(2s2);
    \path[edge] (2s2)--(2n1);
  \end{tikzpicture}
  \caption{Several graphs representing the same program with condition variables.
    Vertices are ordered vertically in order of execution.}
  \label{fig:cv-dags}
\end{figure}

We now describe, at a high level, how we use the graph representation of
the previous subsection to represent programs with mutexes and condition
variables.
The representation will be made formal in Section~\ref{sec:proof}.
%
Because both mutexes and condition variables involve synchronization using
first-class data, rather than control flow, it is not possible to represent
a given piece of code using one definite graph.
%
Instead, a graph will represent one particular execution of a program and is
necessarily dependent on scheduling decisions made at runtime.

\begin{figure}
  \begin{minipage}{0.27\textwidth}
    \footnotesize
    \centering
  \begin{tikzpicture}[scale=0.6]
    \tikzstyle{node}=[draw,rectangle,rounded corners,minimum width=14pt];
    \tikzstyle{enode}=[draw,circle,minimum width=10pt];
    \tikzstyle{edge}=[draw,thick,-stealth];
    \tikzstyle{wedge}=[draw,dashed,thick,-stealth];

    \node[enode] (0) at (0, -1) {};
    \node[node] (l1) at (0, -2) {\texttt{lock1}};
    \node[node] (l2) at (2, -3.5) {\texttt{lock2}};
    \node[node] (s1) at (0, -3) {$s$};
    \node[node] (s2) at (2, -5) {$s$};
    \node[node] (u1) at (0, -4) {\texttt{unlock1}};
    \node[node] (u2) at (2, -6) {\texttt{unlock2}};
    
    \path[edge] (0)--(l1);
    \draw[edge] (0) to [bend left] (l2);
    \path[edge] (l1)--(s1);
    \path[edge] (s1)--(u1);
    \path[edge] (l2)--(s2);
    \path[edge] (s2)--(u2);
    \path[edge] (u1)--(s2);
    \path[wedge] (l1)--(s2);
  \end{tikzpicture}
  \caption{Two threads contending on a lock.}
  \label{fig:mutex-dag}
  \end{minipage}\hfill%
  \begin{minipage}{0.35\textwidth}
    \footnotesize
    \centering
    \begin{tikzpicture}[scale=0.6]
    \tikzstyle{node}=[draw,rectangle,rounded corners,minimum width=14pt];
    \tikzstyle{enode}=[draw,circle,minimum width=10pt];
    \tikzstyle{edge}=[draw,thick,-stealth];
    \tikzstyle{wedge}=[draw,dashed,thick,-stealth];

    \node[enode] (0) at (0, -1) {};
    \node[node] (l1) at (0, -2) {\texttt{lock1}};
    \node[node] (l2) at (4, -4) {\texttt{lock2}};
    \node[node] (s1) at (0, -3) {$s_1$};
    \node[node] (s1') at (1.5, -4) {$s_2$};
    \node[node] (s2) at (4, -5) {$s$};
    \node[node] (u1) at (1.5, -5) {\texttt{unlock1}};
    \node[enode] (end1) at (0, -6) {};
    \node[node] (u2) at (4, -6) {\texttt{unlock2}};
    
    \path[edge] (0)--(l1);
    \draw[edge] (0) to [bend left] (l2);
    \path[edge] (l1)--(s1);
    \path[edge] (s1)--(s1');
    \path[edge] (s1')--(u1);
    \path[edge] (u1)--(end1);
    \path[edge] (l2)--(s2);
    \path[edge] (s2)--(u2);
    \path[edge] (u1)--(s2);
    \path[edge] (s1)--(end1);
    \path[wedge] (l1)--(s2);
    \path[wedge] (s1')--(s2);
  \end{tikzpicture}
    \caption{Thread 1 is promoted to the priority ceiling.}
  \label{fig:mutex-dag-ceil}
  \end{minipage}\hfill%
  \begin{minipage}{0.25\textwidth}
    \footnotesize
    \centering
  \begin{tikzpicture}[scale=0.6]
    \tikzstyle{node}=[draw,rectangle,rounded corners,minimum width=14pt];
    \tikzstyle{enode}=[draw,circle,minimum width=10pt];
    \tikzstyle{edge}=[draw,thick,-stealth];
    \tikzstyle{wedge}=[draw,dashed,thick,-stealth];

    \node[enode] (0) at (0, -1) {};
    \node[node] (l1) at (0, -2) {\texttt{lock1}};
    \node[node] (l2) at (2, -3.5) {\texttt{lock2}};
    \node[node] (s1) at (0, -3) {$s$};
    \node[node] (s2) at (2, -5) {$s$};
    \node[node] (u1) at (0, -4) {\texttt{unlock1}};
    \node[node] (u2) at (2, -6) {\texttt{unlock2}};
    
    \path[edge] (0)--(l1);
    \draw[edge] (0) to [bend left] (l2);
    \path[edge] (s1)--(u1);
    \path[edge] (l2)--(s2);
    \path[edge] (s2)--(u2);
    \path[edge] (u1)--(s2);
    \path[edge] (l2)--(s1);
  \end{tikzpicture}
  \caption{The strengthening of Figure~\ref{fig:mutex-dag}.}
  \label{fig:mutex-dag-strong}
  \end{minipage}
\end{figure}

This dynamicity can be seen clearly by considering how we would model waiting
on, and signaling, a condition variable.
Suppose (the instruction represented by) vertex~$u$ signals a condition
variable, unblocking some thread: let~$u'$ represent the vertex in the
newly-unblocked thread that becomes ready as a result of the signal.
We represent the dependence between the signal and wait operations by
adding the sync edge~$(u, u')$ to~$\syncs$ in the graph.
At runtime, it is easy to determine what threads are blocked on a particular
condition variable, and we will do so when formalizing the operational
semantics.
However, it is not possible in general to determine this statically, because
which~\texttt{wait} operations have run before a given~\texttt{signal}
operation depends on the precise runtime interleaving of the threads involved,
and can be different from one run of a program to another.
As an illustration, Figure~\ref{fig:cv-dags} shows three possible DAGs
that might arise from the program
\begin{lstlisting}
  CV cv; spawn {wait(cv)}; spawn {wait(cv)}; signal(cv); signal(cv);
\end{lstlisting}

The graph representations of programs with mutexes similarly depend on the
order in which threads acquire mutexes.
If a vertex~$u$ releases a
mutex and unblocks vertex~$u'$, this induces a sync edge~$(u, u')$.
In addition, we add a weak edge from the vertex that successfully acquires
the mutex to~$u'$, indicating the happens-before relation that the
successful acquire, necessarily, ran before the unsuccessful thread.
Such a DAG is illustrated in Figure~\ref{fig:mutex-dag}, where~\texttt{lock}
indicates the beginning of a~$\kwwithlock{v}{s}$ critical section
and~\texttt{unlock} indicates the end of a critical section.

The model can also represent programs in which
priority inversions are prevented using dynamic mechanisms such as
priority inheritance or priority ceiling; we will present a graph
for priority ceiling since that is the mechanism we focus on in this
paper, but other mechanisms can be represented in a similar way.
Figure~\ref{fig:mutex-dag-ceil} represents the same scenario as above but
now supposes that the thread that acquires the mutex is
running at a priority lower than that of both the second thread and the
priority ceiling, so that when~\texttt{lock2} runs, the first thread is
promoted to the priority ceiling of the mutex.
We represent this by spawning another thread (at the
priority ceiling) to complete the first thread's critical section
(represented by~$s_2$ in the figure).
Another weak edge is added from this thread to the (still unsuccessful) thread.
When the critical section finishes, control of the first thread returns to
the original thread at the original priority, represented by a sync edge
from the release (\texttt{unlock1}) operation to the next operation in the
original thread.

\subsection{Response Time and Synchronization}

The goal of this section is to bound the response time of a thread~$a$ in an
admissible, prompt schedule of a
graph of a program with mutexes and condition variables.
Below, we will use~$s$ to denote the first vertex of~$a$ and~$t$ to denote
its last vertex.
Intuitively, the bound should only depend on the amount of work at priorities
not less than the priority of~$a$; otherwise, the schedule is not properly
prioritizing high-priority work.
The competitor work, defined above, already only includes such work.
We must therefore ensure that our notion of span, corresponding to the
critical path of~$a$, also includes only high-priority computation.
To this end, we restrict our attention to {\em well-formed}
DAGs~\citep{mah-priorities-2018, MullerSiGoAcAgLe20},
in which lower-priority work does not fall on the
critical path of higher-priority threads.
This corresponds to ruling out priority inversions.

In the presence of weak edges, it is possible for lower-priority work that does
not actually fall on the critical path (because of happens-before relations)
to \textit{appear} to fall on the critical path, according to straightforward
definitions of well-formedness.
As an example, in Figure~\ref{fig:mutex-dag}, the vertex~\texttt{lock1}
appears to be on the critical path of the right thread: it 
is an ancestor of~\texttt{unlock2} (the last vertex of the right thread)
and not an ancestor of~\texttt{lock2} (the first vertex of the right thread).
However,~\texttt{lock2} will never have to wait for~\texttt{lock1} to complete
(which is, for all practical purposes, the definition of ``being on the
critical path'') because the weak edge indicates that~\texttt{lock1} has
already completed when~\texttt{lock2} becomes ready.
In order to allow such DAGs, \citet{MullerSiGoAcAgLe20} introduced
{\em strengthening}, to transform a graph with weak edges into
one in which all vertices actually on the critical path of thread~$a$
appear as strong ancestors of~$t$.
If~$\strengthen{a}{\graph}$ denotes the strengthened graph,
the so-called {\em $a$-span}~$\longsp{\compwork{}{a}}{a}$ is then the
longest strong path in~$\strengthen{a}{\graph}$ ending at~$t$
consisting of vertices that are not ancestors of~$s$.
This path is the critical path of strong ancestors that may have not been
executed before~$s$ is ready but must be completed before~$a$ can complete.

We follow a similar approach, but note that synchronization primitives
yield a different set of DAGs than prior work.
%
As such, the DAGs in this paper will not necessarily meet the definition of
well-formedness from that work even though these DAGs still do not contain
priority inversions.

We observe that, in order to use the~$a$-span of a thread in its response time
bound, it suffices for the strengthening of a graph to meet two conditions:
\begin{itemize}
\item The strengthening may not remove dependencies that were present in the
  original graph:
  i.e., if, at any step of an admissible schedule, a vertex~$u$ is
  ready in~$\strengthen{a}{\graph}$, then it is ready in~$\graph$.
\item The~$a$-span (or in general, any strong path ending at~$t$
  in~$\strengthen{a}{\graph}$ consisting of non-ancestors of~$s$)
  must not contain work lower-priority than~$a$.
\end{itemize}
%
Different programming models may give rise to different types of DAGs,
which require different strengthenings to achieve the above
conditions.
In turn, the original graph will have to meet different well-formedness
conditions in order for the strengthening to meet the two conditions above.
Any DAG model using weak edges can thus define its own notion of strengthening
and corresponding notion of well-formedness.
An appropriate bound on response time 
will hold for graphs in this model as long as the definitions of
strengthening and well-formedness meet the conditions above.
Theorem~\ref{thm:brent-weak-gen} formalizes this general result.
The intuition behind the bound is that, at every time step, by the definition
of a prompt schedule, we either complete~$P$ units of competitor work or (if
not enough high-priority work is ready) reduce the length of the~$a$-span
by~1.

\begin{theorem}\label{thm:brent-weak-gen}
  Let~$\graph$ be a graph and~$\strengthen{a}{\graph}$
  be its $a$-strengthening for a
  thread~$\gthread{a}{\prio}{s \tscomp \dots \tscomp t} \in \graph$.
  Assume the following two properties hold of~$\graph$
  and~$\strengthen{a}{\graph}$.
  \begin{enumerate}
  \item If, at any step of an admissible schedule, a vertex~$u$ is
    ready in~$\strengthen{a}{\graph}$, then it is ready in~$\graph$.
  \item
    If there exists a strong path from~$u$ to~$t$ in~$\strengthen{a}{\graph}$
    and~$\nanc{u}{s}$ in~$\strengthen{a}{\graph}$,
    then~$\ple{\prio}{\uprio{\graph}{u}}$.
  \end{enumerate}
  Then, for any admissible prompt schedule of~$\graph$ on~$P$ cores,
  \[
  \resptimeof{a} \leq 
  \frac{1}{P}\left[
  \prioworkof{\compwork{}{a}}{\psnlt{\prio}} +
      (P-1)\longsp{\compwork{}{a}}{a}\right]
      \]
\end{theorem}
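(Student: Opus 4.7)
The plan is to adapt the classical Brent--Graham bound to this richer setting with priorities and weak edges. Fix an admissible prompt schedule of $\graph$ on $P$ processors and let $T = \resptimeof{a}$. I decompose total processor-steps during the response time as $TP = \sum_{i=1}^T b_i + \sum_{i=1}^T (P - b_i)$, where $b_i$ is the number of busy processors at step $i$, and bound each sum separately.

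For the busy sum, I would argue that $\sum_i b_i$ accounts only for competitor work, so $\sum_i b_i \leq \prioworkof{\compwork{}{a}}{\psnlt{\prio}}$. Every vertex $u$ executed during the response time satisfies $\nanc{u}{s}$ (ancestors of $s$ already ran before $s$ became ready) and $\nanc{t}{u}$ (since $u$ executes at or before $t$). The priority requirement is the subtle piece: if $u$ had priority strictly less than $\prio$, then by promptness every ready higher-priority vertex would have to already be assigned; but by property 2 the next unexecuted $a$-span vertex has priority at least $\prio$, so either it is being scheduled (contributing span work, charged to the second sum) or it is not ready in $\graph$, and I would rule the latter out via the invariant developed for the central lemma below.

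The central lemma is that the number of incomplete steps ($b_i < P$) is at most $\longsp{\compwork{}{a}}{a}$, yielding $\sum_i (P - b_i) \leq (P-1)\longsp{\compwork{}{a}}{a}$. I would prove this by showing each incomplete step advances the $a$-span. In an incomplete step there is an idle processor, so by promptness every ready vertex is already assigned. Let $v$ be the first unexecuted vertex on the $a$-span in $\strengthen{a}{\graph}$. I would prove by induction on the executed prefix of the span that $v$ is ready in $\strengthen{a}{\graph}$, maintaining the invariant that not just the path-predecessor of $v$ but every strong ancestor of $v$ in $\strengthen{a}{\graph}$, including off-path ancestors possibly introduced by strengthening, has been executed by this step. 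Property 1 then transfers readiness to $\graph$, so the scheduler must place $v$, advancing the span.

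Combining the two bounds yields $TP \leq \prioworkof{\compwork{}{a}}{\psnlt{\prio}} + (P-1)\longsp{\compwork{}{a}}{a}$, and dividing by $P$ gives the theorem. I expect the main obstacle to be the inductive invariant in the central lemma: since strengthening may introduce strong edges that are not present in $\graph$, the scheduler operates on a graph with fewer strong dependences than $\strengthen{a}{\graph}$, and we must argue that those extra strong ancestors have nevertheless been completed by the execution in $\graph$. This coupling between the two graphs across the schedule is where most of the technical complexity of generalizing the classical bound to graphs with weak edges resides; the rest of the proof is then routine accounting.
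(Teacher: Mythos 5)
Your skeleton matches the standard argument that the paper is implicitly relying on (the paper does not prove this theorem itself; it defers in one sentence to Theorem~2.1 of the cited prior work, noting that that proof uses exactly the two hypothesized properties). Your central lemma --- that at any step with an idle processor the first unexecuted vertex of the $a$-span is ready in~$\strengthen{a}{\graph}$, hence ready in~$\graph$ by property~1, hence executed by promptness, and is at priority~$\succeq \prio$ by property~2 --- is the right crux, and your identification of the off-path strong ancestors introduced by strengthening as the delicate point is apt (the rigorous version tracks the longest remaining strong path of unexecuted non-ancestors of~$s$ rather than a fixed path, precisely to sidestep that issue).

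However, there is a genuine gap in your work-side accounting. You decompose $TP = \sum_i b_i + \sum_i (P-b_i)$ with $b_i$ the number of \emph{busy} processors and claim $\sum_i b_i \leq \prioworkof{\compwork{}{a}}{\psnlt{\prio}}$. This requires that every vertex executed during the response window has priority not less than~$\prio$, which is false: a prompt schedule fills leftover processors with lower-priority ready work whenever fewer than~$P$ vertices at priority~$\succeq\prio$ are ready. Those executions occupy busy slots, are counted in $\sum_i b_i$, but are excluded from the competitor work by the $\pnlt{\uprio{\graph}{u}}{\prio}$ clause in its definition. Your proposed patch --- that such a step also schedules the span vertex and so is ``charged to the second sum'' --- does not close the books, because the second sum $\sum_i(P-b_i)$ counts only \emph{idle} slots, and the offending slot is busy. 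The repair is the standard token argument: let $c_i$ be the number of slots at step~$i$ executing competitor vertices and decompose $TP = \sum_i c_i + \sum_i (P-c_i)$. Then $\sum_i c_i \leq \prioworkof{\compwork{}{a}}{\psnlt{\prio}}$, and at any step with $P - c_i > 0$ some processor is idle \emph{or} running work at priority $\prec \prio$; in either case promptness forces every ready vertex at priority $\succeq \uprio{\graph}{v}$ (where $v$ is the next span vertex) to be assigned, so the span decreases and $c_i \geq 1$, giving $P - c_i \leq P-1$ on at most $\longsp{\compwork{}{a}}{a}$ steps. With that substitution the rest of your argument goes through.
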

\begin{proof}
  The proof is effectively identical to that of Theorem 2.1
  of~\citep{MullerSiGoAcAgLe20}, which assumes only these two properties of
  the graph and its strengthening.
\end{proof}

We now present the definitions of well-formedness
(Definition~\ref{def:wellformed}) and strengthening
(Definition~\ref{def:strengthen}) we will use in this paper, and show
that they meet the above conditions, guaranteeing that the bound holds.
Informally, in addition to the expected restrictions on priorities,
we require that any synchronization edge that
results from thread~$a$ waiting on a lock held by thread~$b$ is
preceded by a weak edge indicating that thread~$b$ acquired the lock before
thread~$a$ attempted to acquire it.


    
\begin{defn}\label{def:wellformed}
A DAG~$\graph = \dagq{\gthreads}{\spawns}{\syncs}{\reads}$ is well-formed if
for all
threads~$\gthread{a}{\prio}{s \tscomp{} \dots \tscomp{} t} \in \gthreads$,
\begin{enumerate}
\item If~$\sanc{u}{t}$ and $\nanc{u}{s}$ then~$\ple{\prio}{\uprio{\graph}{u}}$.
\item If~$\sanc{u}{t}$ and~$\nanc{u}{s}$
  and there is a strong edge~$(u', u)$, where~$\wanc{u'}{t}$,
  then there is a weak edge~$(u', u'')$ where~$\sanc{u''}{t}$
  and~$\nanc{u''}{s}$ 
  and~$u''$ is not the first vertex in its thread.
  Furthermore, the edge from the parent of~$u''$ is the only strong edge
  to~$u''$.
  %
\end{enumerate}
\end{defn}

Because of the weak edge whose existence is guaranteed by well-formedness,
ancestors of thread~$b$ are not on the
critical path of thread~$a$.
The strengthening makes this explicit by removing ancestors of the vertex~$u$
in thread~$b$.
In order to not leave it disconnected, we make~$u$ a strong child of the vertex
in thread~$a$ \textit{immediately before} the vertex that tried to attempt the
lock; we will call this vertex~$u''_p$.
Doing so does not increase the bound; by the weak edge,~$u$'s parent
must execute before the child of~$u''_p$.
From the perspective of the response time bound, the worst case is then
that~$u$ is not ready until immediately after~$u''_p$ executes, which is
made explicit by the added edge.
As an example, Figure~\ref{fig:mutex-dag-strong} shows the strengthening of the
graph in Figure~\ref{fig:mutex-dag}.
Note that~\texttt{lock1} no longer appears on the second thread's critical
path, but the schedule is the same.

\begin{defn}\label{def:strengthen}
  Let~$\graph$ be a well-formed graph with a
  thread~$\gthread{a}{\prio}{s \tscomp{} \dots \tscomp{} t}$.
  We define the~$a$-strengthening as follows.
  For every strong edge~$(u', u)$ where~$\wanc{u'}{t}$ and~$\sanc{u}{t}$
  and~$\nanc{u}{s}$:
  \begin{enumerate}
  \item Remove the weak edge~$(u', u'')$ that exists by
    Definition~\ref{def:wellformed}.
  \item Add a strong edge~$(u''_p, u)$, where~$u''_p$ is the parent of~$u''$,
    which exists by Definition~\ref{def:wellformed}.
  \item Remove the strong edge~$(u', u)$.
  \end{enumerate}
  
\end{defn}

\begin{theorem}\label{thm:brent}
  Let~$\graph$ be a well-formed graph by Definition~$\ref{def:wellformed}$
  and let~$\gthread{a}{\prio}{s \tscomp \dots \tscomp t} \in \graph$.
  For any admissible prompt schedule of~$\graph$ on~$P$ cores,
  \[
  \resptimeof{a} \leq 
  \frac{1}{P}\left[
  \prioworkof{\compwork{}{a}}{\psnlt{\prio}} +
      (P-1)\longsp{\compwork{}{a}}{a}\right]
      \]
\end{theorem}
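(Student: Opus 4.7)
The plan is to derive Theorem~\ref{thm:brent} directly from Theorem~\ref{thm:brent-weak-gen} applied to the specific strengthening given in Definition~\ref{def:strengthen}. Since the claimed bound matches verbatim the bound of Theorem~\ref{thm:brent-weak-gen}, it suffices to verify the theorem's two hypotheses for the pair $(\graph, \strengthen{a}{\graph})$ when $\graph$ satisfies Definition~\ref{def:wellformed}. The key structural observation is that each rewrite performed by the strengthening is localized around a triple $(u', u, u'')$ coming from a strong edge satisfying the premises of Definition~\ref{def:wellformed}(2), and the existence of the weak edge $(u', u'')$ together with admissibility is exactly what ``pays for'' the added strong edge $(u''_p, u)$.

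For the readiness hypothesis, I consider an admissible schedule of $\graph$ and assume that some vertex $u$ is ready in $\strengthen{a}{\graph}$ at some step. The strengthening removes the strong edge $(u', u)$ and adds $(u''_p, u)$, leaving all other strong parents of $u$ unchanged. Thus readiness of $u$ in the strengthening implies that $u''_p$, along with the other strong parents of $u$ in $\graph$, has already executed. Because Definition~\ref{def:wellformed}(2) guarantees that the edge from $u''_p$ is the \emph{only} strong parent edge of $u''$, it follows that $u''$ is ready in $\graph$ at this step. Admissibility of the schedule then forces every weak parent of $u''$ to have executed, in particular $u'$. Hence every strong parent of $u$ in $\graph$ has executed, so $u$ is ready in $\graph$.

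For the priority hypothesis, I show that any vertex $u^*$ with $\sanc{u^*}{t}$ in $\strengthen{a}{\graph}$ and $\nanc{u^*}{s}$ in $\strengthen{a}{\graph}$ also satisfies $\sanc{u^*}{t}$ and $\nanc{u^*}{s}$ in $\graph$, so that Definition~\ref{def:wellformed}(1) yields $\ple{\prio}{\uprio{\graph}{u^*}}$. Ancestries of $s$ are preserved: since $\nanc{u}{s}$ holds in $\graph$ by the premise of Definition~\ref{def:wellformed}(2), no path to $s$ in $\graph$ uses the removed edge $(u', u)$; symmetrically, the added edge $(u''_p, u)$ terminates at $u$, which still has no path to $s$ in the strengthening. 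Strong ancestry of $t$ in the strengthening collapses into strong ancestry of $t$ in $\graph$: any strong path to $t$ either avoids the added edge (and therefore lies entirely in $\graph$) or traverses it, passing through $u''_p$, which is the thread-predecessor of $u''$ and hence a strong ancestor of $t$ in $\graph$ because $\sanc{u''}{t}$.

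The main obstacle is handling the fact that Definition~\ref{def:strengthen} is applied \emph{simultaneously} to every strong edge $(u', u)$ meeting its premises, so the rewrites around different triples $(u', u, u'')$ may share vertices and strong paths in $\strengthen{a}{\graph}$ may weave through several rewrites at once. However, each rewrite is justified locally by a distinct weak edge whose source has already executed by admissibility, and the structural arguments above are closed under taking such local rewrites one at a time, so the readiness and priority analyses extend to the full strengthening and Theorem~\ref{thm:brent-weak-gen} delivers the stated bound.
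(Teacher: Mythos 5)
Your proposal is correct and follows essentially the same route as the paper's proof: both reduce Theorem~\ref{thm:brent} to verifying the two hypotheses of Theorem~\ref{thm:brent-weak-gen}, using admissibility plus the weak edge $(u',u'')$ and the fact that $u''_p$ is the sole strong parent of $u''$ for readiness, and collapsing strong paths through the added edge $(u''_p,u)$ back into $\graph$ via $\sanc{u''}{t}$ for the priority condition. You are in fact slightly more careful than the paper in explicitly checking that $\nanc{u^*}{s}$ is preserved and in flagging the simultaneous application of the rewrites, but the argument is the same.
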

\begin{proof}
  We need only prove the two conditions of Theorem~\ref{thm:brent-weak-gen}.
  \iffull
  \begin{enumerate}
  \item Let~$u$ be ready in~$\strengthen{a}{\graph}$.
    By definition, all ancestors of~$u$ have been executed.
    We must show that all strong ancestors of~$u$ in~$\graph$ have been executed.
    The ancestors of~$u$ are unchanged between~$\strengthen{a}{\graph}$
    and~$\graph$ unless~$u$ is~$u''$ or~$u$ of
    Definition~\ref{def:strengthen}.
    If~$u$ is~$u''$ of that definition, there is only an additional weak
    edge in~$\graph$, which does not prevent~$u$ from being ready.
    If~$u$ is~$u$ of that definition, then by construction, the only
    strong parent of~$u$ in~$\strengthen{a}{\graph}$ is~$u''_p$, so this
    vertex has been executed.
    The parent of~$u$ in~$\graph$ is~$u'$ of Definition~\ref{def:strengthen}.
    But because there is a weak edge from~$u'$ to~$u''$ and 
    the only strong parent of~$u''$ has been executed,~$u'$ must also have
    been executed in an admissible schedule.
    Therefore,~$u$ is ready in~$\graph$.
  \item
    Let~$u$ be such that there is a strong path from~$u$ to~$t$
    in~$\strengthen{a}{\graph}$ and~$\nanc{u}{s}$.
    We note that~$\sanc{u}{t}$ in~$\strengthen{a}{\graph}$ because
    if~$\wanc{u}{t}$,
    then there is a strong edge~$(u, u')$ and a strong path from~$u'$ to~$t$,
    so the edge~$(u, u')$ would have been removed in~$\strengthen{a}{\graph}$,
    a contradiction.
    Furthermore, if~$\sanc{u}{t}$ in~$\strengthen{a}{\graph}$,
    then~$\ple{\prio}{\uprio{\graph}{u}}$.
    To see this, it suffices to show that if~$\sanc{u}{t}$
    in~$\strengthen{a}{\graph}$, then~$\sanc{u}{t}$ in~$\graph$.
    Consider a path from~$u$ to~$t$ in~$\strengthen{a}{\graph}$ and
    consider an edge~$(u_1, u_2$ in this path that is not in~$\graph$.
    This must be the edge~$(u_p'', u)$ of Definition~\ref{def:strengthen}.
    But by Definition~\ref{def:wellformed}, we have~$\sanc{u''}{t}$
    in~$\graph$ and therefore~$\sanc{u_p''}{t}$ in~$\graph$.
  \end{enumerate}
  \fi
\end{proof}

\section{Correctness of Priority Inversion Type System}\label{sec:proof}

In this section, we prove that the type system of Section~\ref{sec:lang}
guarantees the absence of priority inversions.
Specifically, we show that a graph corresponding to a well-typed {\calcname}
program is {\em well-formed} by Definition~\ref{def:wellformed}
and can therefore be responsively scheduled by Theorem~\ref{thm:brent}.
We first present a {\em cost semantics} for {\calcname} that evaluates the
program and produces a graph corresponding to the execution.
Next, we present the main result.
Finally, in Section~\ref{sec:extensions}, we briefly discuss several extensions
to {\calcname} that model other common use cases of synchronization.

\subsection{Cost Semantics}

\iffull
\begin{figure}
\langfigsize
\[
\begin{array}{l l r l }
  \mathit{Frames} & \stframe & \bnfdef &
  \shole; s
  \bnfalt \kwwithlocks{v}{\shole}
  \bnfalt \kwwithlocksp{v}{\shole}{a}{\prio}
  \bnfalt \kwlet{x}{\shole}{\block}\\

  \mathit{Stacks} & \stack & \bnfdef & 
  \estack \bnfalt \scp{\stack}{\stframe}\\

  \mathit{States} & \stackstate & \bnfdef &
  \ssend{\stack}{\instr}
  \bnfalt \sreturn{\stack}{v}
  \bnfalt \scsend{\stack}{s}
  \bnfalt \screturn{\stack}
\end{array}
\]
\caption{Stack syntax}
\label{fig:stack-syn}
\end{figure}

\begin{figure}
\langfigsize
  \centering
  \def \MathparLineskip {\lineskip=\mylineskip}
\begin{mathpar}
\Rule{KS-Empty}
     {\prio \in \prios'}
     {\stackacceptsc{\sig}{\estack}{\fctx}{\prios'}{\fctx}{\prio}}
\and
\Rule{KS-Seq}
     {\stackacceptsc{\sig}{\stack}{\fctx_2}{\prios_1}{\fctx_3}{\prio_2}\\
       \forall \prio \in \prios_1.
       \styped{\sig}{\ectx}{\fctx_1}{s}{\prio}{\fctx_2}
     }
     {
       \stackacceptsc{\sig}{\scp{\stack}{\shole; s}}
                     {\fctx_1}{\prios_1}{\fctx_3}{\prio_2}
     }
\and
\Rule{KS-WithLockS}
     {\stackacceptsc{\sig}{\stack}{\fctx_1}{\prios_1}{\fctx_2}{\prio_2}\\
       \etyped{\sig}{\ectx}{v}{\kwmutext{\prio'}}
     }
     {
       \stackacceptsc{\sig}{\scp{\stack}{\kwwithlocks{v}{\shole}}}
                     {\fctx_1}{\prios_1 \cup \{\prio'\}}{\fctx_2}{\prio_2}
     }
\and
\Rule{KS-WithLockSP}
     {\stackacceptsc{\sig}{\stack}{\fctx_1}{\prios_1}{\fctx_2}{\prio}\\
       \etyped{\sig}{\ectx}{v}{\kwmutext{\prio'}}
     }
     {
       \stackacceptsc{\sig}
                     {\scp{\stack}{\kwwithlocksp{v}{\shole}{a}{\prio}}}
                     {\fctx_1}{\prios_1 \cup \{\prio'\}}{\fctx_2}
                     {\prio'}
     }
\and   
\Rule{KS-Let}
     {\stackacceptsc{\sig}{\stack}{\fctx_2}{\prios_1}{\fctx_3}{\prio_2}\\
       \forall \prio \in \prios_1.
       \styped{\sig}{\hastype{x}{\tau}}{\fctx_1}{s}{\prio}{\fctx_2}
     }
     {
       \stackaccepts{\sig}{\scp{\stack}{\kwlet{x}{\shole}{s}}}
                    {\tau}{\fctx_1}{\prios_1}{\fctx_3}{\prio_2}
     }
\end{mathpar}
\\[4ex]
\begin{mathpar}
\Rule{KS-PopInstr}
     {
       \stackaccepts{\sig}{\stack}{\tau}{\fctx_2}{\prios_1}{\fctx_2}{\prio_3}\\
       \forall \prio \in \prios_1.
       \ityped{\sig}{\ectx}{\fctx_1}{\instr}{\tau}{\prio}{\fctx_2}
     }
     {
       \sstyped{\sig}{\fctx_1}{\ssend{\stack}{\instr}}{\prio_2}{\fctx_3}
     }
\and
\Rule{KS-PopStmt}
     {
       \stackacceptsc{\sig}{\stack}{\fctx_2}{\prios_1}{\fctx_3}{\prio_2}\\
       \forall \prio \in \prios_1.
       \styped{\sig}{\ectx}{\fctx_1}{s}{\prio}{\fctx_2}
     }
     {
       \sstyped{\sig}{\fctx_1}{\scsend{\stack}{s}}{\prio_2}{\fctx_3}
     }
\and
\Rule{KS-PushInstr}
     {
       \stackaccepts{\sig}{\stack}{\tau}{\fctx_1}{\prios_1}{\fctx_2}{\prio_2}\\
       \etyped{\sig}{\ectx}{v}{\tau}
     }
     {
       \sstyped{\sig}{\fctx_1}{\sreturn{\stack}{v}}{\prio_2}{\fctx_2}
     }
\and
\Rule{KS-PushStmt}
     {
       \stackacceptsc{\sig}{\stack}{\fctx_1}{\prios_1}{\fctx_2}{\prio_2}
     }
     {
       \sstyped{\sig}{\fctx_1}{\screturn{\stack}}{\prio_2}{\fctx_2}
     }
\end{mathpar}
\caption{Typing rules for stacks and stack states.}
\label{fig:stack-statics}
\end{figure}

\fi

The dynamic semantics of {\langname} evaluate a program, at the same time
producing a graph representing the parallel execution.
For nondeterministic programs, the graph will reflect the
``choices'' made in the run of the program taken, but will also
indicate where parallelism was available and thus admit other schedules
consistent with the same ``choices''.
%
%
Constructing a parallel operational semantics that fully simulates a prompt
and admissible
schedule of the program is outside the scope of this paper; such a semantics
would follow the approach in prior work~\citep{MullerSiGoAcAgLe20}.

The dynamic semantics is specified using control stacks.
%
\iffull
The syntax of stacks is shown in Figure~\ref{fig:stack-syn}.
A stack consists of {\em frames}, each of which is a statement with a
``hole''~$\shole$, reflecting the continuation of the program.
Because of the use of ``2/3-cps'' form in instructions, frames consist only
of statements and not instructions or expressions.
Two frames,~$\kwwithlocks{v}{\shole}$
and~$\kwwithlocksp{v}{\shole}{a}{\prio}$ represent states of critical sections
that arise at runtime and will be described later.
\else
A stack~$\stack$ consists of {\em frames}, each of which is a statement with a
``hole''~$\shole$, reflecting the continuation of the program.
Because of the use of ``2/3-cps'' form in instructions, frames consist only
of statements and not instructions or expressions.
\fi
The empty stack is denoted~$\estack$.
The current execution state of a thread is represented by a stack
state~$\stackstate$;
each state contains the stack representing the current continuation.
The state~$\ssend{\stack}{\instr}$ indicates that the thread is
executing the instruction~$\instr$; $\sreturn{\stack}{v}$ indicates that
an instruction has just returned
the value~$v$; $\scsend{\stack}{s}$ indicates that the thread is
executing the statement~$s$; and~$\screturn{\stack}$ indicates that a
statement has just returned
(no return value is necessary as statements do not return values).
\iffull\else
The syntax of stacks and stack states is presented in the
supplementary material.
\fi

%
\iffull
The judgment~$\stackacceptsc{\sig}{\stack}{\fctx}{\prios_1}{\fctx'}{\prio_2}$
indicates that~$\stack$ requires the permissions~$\fctx$ to continue, and
will leave over~$\fctx'$.
It also requires that the current statement be well-typed at all priorities
in the set~$\prios_1$, and indicates that the entire stack can run at~$\prio_2$.
It is an invariant that~$\prio_2 \in \prios_1$.
The
judgment~$\stackaccepts{\sig}{\stack}{\tau}{\fctx}{\prios_1}{\fctx'}{\prio_2}$
is similar, but for a stack that is expecting a return value of type~$\tau$
from an instruction.
The rules for these judgments are shown in the top of
Figure~\ref{fig:stack-statics}.
Stack states are typed with the
judgment~$\sstyped{\sig}{\fctx}{\stackstate}{\prio}{\fctx'}$,
which indicates that~$\stackstate$ is well-typed at priority~$\prio$
with starting permissions~$\fctx$ and leaves over permissions~$\fctx'$.
The rules for this judgment,
in the bottom of Figure~\ref{fig:stack-statics},
simply match up the requirements of the stack, indicated by the stack typing
judgments, with the statement, instruction, or return value.
\else
Stack states are typed with the
judgment~$\sstyped{\sig}{\fctx}{\stackstate}{\prio}{\fctx'}$,
which indicates that~$\stackstate$ is well-typed at priority~$\prio$
with starting permissions~$\fctx$ and leaves over permissions~$\fctx'$.
This judgment depends on an auxiliary typing judgment for stacks, which
indicates the permissions needed to execute the continuation represented
by the stack, as well as the permissions left over after the program.
The full static semantics are detailed in the supplementary material.
\fi

The dynamic semantics operates over {\em thread pools}, which are collections
of ready threads.
A thread~$a$ running the stack state~$\stackstate$ at priority~$\prio$ with
signature~$\sig$ is written~$\cthread{a}{\prio}{\sig}{\stackstate}$.
Thread pools are composed with the operator~$\tpcp$.
A {\em configuration}~$\twlconfig{\tp}{\mem}{\waiters}{\locked}$ represents
a complete snapshot of a running program.
It includes a thread pool~$\tp$, as well as a memory~$\mem$,
which is a mapping from reference cells, written~$\kwassn$, to the values
they contain.
The configuration contains two additional components:
$\waiters$ is a mapping from mutexes and condition variables to threads waiting
on the mutex or CV,
and~$\locked$ is a mapping from mutexes to either the thread currently holding
the mutex or~$\lunlocked$ indicating the mutex is unlocked.
For a mutex~$\mutname$, an element of~$\waiters(\mutname)$
has the form~$(a, (u_1, u_2), \prio, \sig, \stackstate)$, indicating a
thread~$a$ at priority~$\prio$ in state~$\stackstate$ with signature~$\prio$.
The two vertices~$u_1$ and~$u_2$ represent the mutex acquisition and the
vertex following the mutex acquisition, respectively; these vertices are used
to add appropriate sync and weak edges to the graph later.
An element of~$\waiters(\cvname)$ for a CV~$\cvname$ is similar:
in the pair of vertices~$(u_1, u_2)$,~$u_1$ represents the~\texttt{wait}
operation and~$u_2$ represents the following vertex.
An element of~$\locked(\mutname)$ is of the form~$(a, u_1, u_2)$,
indicating a thread~$a$ holds the lock; as above,~$u_1$ and~$u_2$ represent
the acquire operation and the following vertex, respectively.

We present one additional typing rule for typing configurations:
\begin{mathpar}
\Rule{Global}
     {
     \bigcup_{\cvmutname \in \mathit{Dom}(\waiters)}\waiters(\cvmutname) =
     \{(a_{n+1}, u_{n+1}, \prio_{n + 1}, \sig_{n+1}, s_{n+1}), \dots,
     (a_m, u_m, \prio_m, \sig_m, s_m)\}\\
     \sigcvtype{\cvname}{\prio} \in \sig_1, \dots, \sig_n
     \rightarrow \forall i \in [1, m].
     \forall \plt{\prio'}{\prio}. \permof{\fctx_i}{\cvname}{\prio'} = \fnone\\
        \forall i \in [1, m]. \sstyped{\sig, \sig_i}{\fctx_i}{\stackstate_i}{\prio_i}{\fctx_i'}\\
        \fctxwf{\fctx_0, \dots, \fctx_m}\\
        \fctxwf{\fctx'_0, \dots, \fctx'_m}\\
        \leq 1\\
        \forall i \in [1, m]. \mtyped{\sig_i}{\mem}
     }
     {
        \cfgtyped{\sig}
        {\twlconfig{\cthread{a_1}{\prio_1}{\sig_1}{\stackstate_1} \tpcp \dots \tpcp
        \cthread{a_n}{\prio_n}{\sig_n}{\stackstate_n}}{\mem}{\waiters}{\locked}}
     }
\end{mathpar}
The rule requires each thread in the thread pool
(as well as threads blocked on a mutex or CV, which are not included in the
thread pool)
to be well-typed with its signature at its
priority and appropriate permission mappings.
%
%
It also requires that permissions are ``splittable'':
we define~$\fctxwf{\fctx_0, \dots, \fctx_m}$ to mean that
for all~$\cvname$ and all~$\prio$, if~$\permof{\fctx_i}{\cvname}{\prio} =
\fowned$, then for all~$j\neq i$, we have~$\permof{\fctx_j}{\cvname}{\prio}
= \fnone$.
Finally, the memory must be well-typed with respect to every thread's
signatures.
This is indicated with the judgment~$\mtyped{\sig}{\mem}$,
which requires that for all~$\sigrtype{\kwassn}{\tau} \in \sig$,
we have~$\mem(\kwassn) = v$ and~$\etyped{\sig}{\ectx}{v}{\tau}$.

\begin{figure}
  \footnotesize
  
  \centering
\begin{mathpar}
      \Rule{Spawn}
           {u, b\fresh}
           {
      \lconfig{\cthread{a}{\prio}{\sig}{\ssend{\stack}{
            \kwspawn{\prio'}{\tau}{\vec{\cvname}}{s'}}}
        \tpcp \tp}
              {\mem}
              {\waiters}
              {\locked}
              {\graph}
      \mstep
              \lconfig{\cthread{a}{\prio}{\sig}{\sreturn{\stack}{\kwtriv}} \tpcp
                \cthread{b}{\prio}{\sig}{\ssend{\estack}{s'}}
                \tp}
              {\mem}
              {\waiters}
              {\locked}
              {\graph \scomp{a} \gnode{u}{\sig} \cup
                \{\edgespawn{u}{b}\}}
           }
           \iffull
      \and
      \Rule{NewRef}
           {u, \kwassn\fresh}
           {
      \lconfig{\cthread{a}{\prio}{\sig}{\ssend{\stack}{
            \kwnewref{\tau}{y}}{s}}
        \tpcp \tp}
              {\mem}
              {\waiters}
              {\locked}
              {\graph}
      \mstep
      \lconfig{\cthread{a}{\prio}{\sig,\sigrtype{\kwassn}{\tau}}
        {\sreturn{\stack}{\kwref{\kwassn}}} \tpcp \tp}
              {\memupd{\mem}{\mement{\kwassn}{y}{u}{\sig}}}
              {\waiters}
              {\locked}
              {\graph \scomp{a} \gnode{u}{\sig}}
           }
      \and
      \Rule{Deref}
           {u\fresh\\
             \mem(\kwassn) = \memrent{v}{u'}{\sig'}
           }
           {
      \lconfig{\cthread{a}{\prio}{\sig}{\ssend{\stack}{
            \kwderef{\kwref{\kwassn}}}}
        \tpcp \tp}
              {\mem}
              {\waiters}
              {\locked}
              {\graph}
      \mstep
      \lconfig{\cthread{a}{\prio}{\sig, \sig'}
        {\sreturn{\stack}{v}} \tpcp \tp}
              {\mem}
              {\waiters}
              {\locked}
              {\graph \scomp{a} \gnode{u}{\sig}\}}
           }
      \and
      \Rule{Update}
           {u\fresh}
           {
      \lconfig{\cthread{a}{\prio}{\sig}{\ssend{\stack}{
            \kwassign{\kwref{\kwassn}}{y}}}
        \tpcp \tp}
              {\mem}
              {\waiters}
              {\locked}
              {\graph}
      \mstep
      \lconfig{\cthread{a}{\prio}{\sig}
        {\sreturn{\stack}{\kwtriv}} \tpcp \tp}
              {\memupd{\mem}{\mement{\kwassn}{y}{u}{\sig}}}
              {\waiters}
              {\locked}
              {\graph \scomp{a} \gnode{u}{\sig}}
           }
      \and
      \Rule{NewCV}
           {u, \cvname\fresh}
           {
      \lconfig{\cthread{a}{\prio}{\sig}
                {\ssend{\stack}{\kwnewcv{\prio'}}}
              \tpcp \tp}
              {\mem}
              {\waiters}
              {\locked}
              {\graph}
      \mstep
      \lconfig{\cthread{a}{\prio}{\sig,\sigcvtype{\cvname}{\prio'}}
        {\sreturn{\stack}{\kwcv{\cvname}}}
        \tpcp \tp}
              {\mem}
              {\waiters}
              {\locked}
              {\graph \scomp{a} \gnode{u}{\sig}}
           }
           \fi
      \and
      \Rule{Wait}
           {u_1, u_2\fresh}
           {
      \lconfig{\cthread{a}{\prio}{\sig}
                {\ssend{\stack}{\kwwait{\kwcv{\cvname}}}}
              \tpcp \tp}
              {\mem}
              {\waiters}
              {\locked}
              {\graph}
      \mstep
      \lconfig{\tp}
              {\mem}
              {\waitupd{\waiters}{\cvname}
                {\waitcons{(a, (u_1, u_2), \prio, \sig, \sreturn{\stack}{\kwtriv})}{\waiters(\cvname)}}}
              {\locked}
              {\graph \scomp{a} \gnode{u}{\sig}}
           }
      \and
      \Rule{Signal1}
           {u\fresh\\
             \waiters(\cvname) = \waitcons{(b, (\_, u'), \prio_b, \sig_b, \stackstate_b)}{\ell}
           }
           {
      \lconfig{\cthread{a}{\prio}{\sig}{\ssend{\stack}{
                  {\kwsignal{\kwcv{\cvname}}}}}
              \tpcp \tp}
              {\mem}
              {\waiters}
              {\locked}
              {\graph}\\
      \mstep
      \lconfig{\cthread{a}{\prio}{\sig}{\sreturn{\stack}{\kwtriv}} \tpcp
        \cthread{b}{\prio_b}{\sig_b}{\stackstate_b} \tpcp
        \tp}
              {\mem}
              {\waitupd{\waiters}{\cvname}{\ell}}
              {\locked}
              {\graph \scomp{a} \gnode{u}{\sig}  \cup \{\edgesync{u}{u'}\}}
           }
      \and
      \Rule{Signal2}
           {u\fresh\\ \waiters(\cvname) = \waitemp}
           {
      \lconfig{\cthread{a}{\prio}{\sig}{\ssend{\stack}{
                  {\kwsignal{\kwcv{\cvname}}}}}
              \tpcp \tp}
              {\mem}
              {\waiters}
              {\locked}
              {\graph}
      \mstep
      \lconfig{\cthread{a}{\prio}{\sig}{\sreturn{\stack}{\kwtriv}} \tpcp
        \tp}
              {\mem}
              {\waiters}
              {\locked}
              {\graph \scomp{a} \gnode{u}{\sig}}
           }
      \and
      \iffull
      \Rule{Promote}
           {u\fresh}
           {
      \lconfig{\cthread{a}{\prio}{\sig}
                {\ssend{\stack}{\kwpromote{\kwcv{\cvname}}{\prio'}}}
              \tpcp \tp}
              {\mem}
              {\waiters}
              {\locked}
              {\graph}
      \mstep
      \lconfig{\cthread{a}{\prio}{\sig,\sigcvtype{\cvname}{\prio'}}
        {\sreturn{\stack}{\kwcv{\cvname}}}
        \tpcp \tp}
              {\mem}
              {\waiters}
              {\locked}
              {\graph \scomp{a} \gnode{u}{\sig}}
           }
      \and
      \Rule{NewMutex}
           {u, \mutname\fresh}
           {
      \lconfig{\cthread{a}{\prio}{\sig}
                {\ssend{\stack}{\kwnewmutex{\prio'}}}
              \tpcp \tp}
              {\mem}
              {\waiters}
              {\locked}
              {\graph}
      \mstep
      \lconfig{\cthread{a}{\prio}{\sig, \sigcvtype{\mutname}{\prio'}}
        {\sreturn{\stack}{\kwmutex{\mutname}}}
        \tpcp \tp}
              {\mem}
              {\waiters}
              {\locked}
              {\graph \scomp{a} \gnode{u}{\sig}}
           }
      \and
      \Rule{Let1}
           {\strut}
           {
             \lconfig{\cthread{a}{\prio}{\sig}
               {\scsend{\stack}{\kwlet{x}{\instr}{s}}} \tpcp \tp}
                     {\mem}
                     {\waiters}
                     {\locked}
                     {\graph}
             \mstep
             \lconfig{\cthread{a}{\prio}{\sig}
               {\ssend{\scp{\stack}{\kwlet{x}{\shole}{s}}}{\instr}} \tpcp \tp}
                     {\mem}
                     {\waiters}
                     {\locked}
                     {\graph}
           }
       \and
      \Rule{Let2}
           {\strut}
           {
             \lconfig{\cthread{a}{\prio}{\sig}
               {\sreturn{\scp{\stack}{\kwlet{x}{\shole}{s}}}{v}} \tpcp \tp}
                     {\mem}
                     {\waiters}
                     {\locked}
                     {\graph}
             \mstep
             \lconfig{\cthread{a}{\prio}{\sig}
               {\scsend{\stack}{[v/x]s}} \tpcp \tp}
                     {\mem}
                     {\waiters}
                     {\locked}
                     {\graph}
           }
\end{mathpar}
\caption{Dynamic semantics rules (part 1)}
\label{fig:cost1}
\end{figure}
\begin{figure}
  \footnotesize
  
  \centering
  \begin{mathpar}
    \fi
      \Rule{WithLockS1}
           {u_1, u_2\fresh\\ \locked(\mutname) = \lunlocked}
           {
             \lconfig{\cthread{a}{\prio}{\sig}
               {\scsend{\stack}{\kwwithlock{\kwmutex{\mutname}}{s}}}
              \tpcp \tp}
              {\mem}
              {\waiters}
              {\locked}
              {\graph}\\
              \mstep
              \lconfig{\cthread{a}{\prio}{\sig}
                {\scsend{\scp{\stack}{\kwwithlocks{\kwmutex{\mutname}}{\shole}}}
                  {
                    s}}
                \tpcp \tp}
              {\mem}
              {\waiters}
              {\waitupd{\locked}{\mutname}{(a, u_1, u_2)}}
              {\graph \scomp{a} \gnode{u}{\sig}}
           }
      \and
      \Rule{WithLockS2}
           {u_1, u_2\fresh\\ \locked(\mutname) = (a', u_1', u_2')\\
             \ple{\prio}{\uprio{\graph}{a'}}\\
             \stackstate' =
             \scsend{\scp{\stack}{\kwwithlocks{\kwmutex{\mutname}}{\shole}}}
                    {
                      s}
           }
           {
             \lconfig{\cthread{a}{\prio}{\sig}
               {\scsend{\stack}{\kwwithlock{\kwmutex{\mutname}}{s}}}
              \tpcp \tp}
              {\mem}
              {\waiters}
              {\locked}
              {\graph}\\
      \mstep
      \lconfig{\tp}
              {\mem}
              {\waitupd{\waiters}{\mutname}
                {\waitcons{(a, (u_1, u_2), \prio, \sig, \stackstate')}{\waiters(\mutname)}}}
              {\locked}
              {\graph \scomp{a} \gnode{u_1}{\sig} \scomp{a} \gnode{u_2}{\sig}
                \cup \{\edgeweak{u_1'}{u_2}\}
              }
           }
      \and
      \Rule{WithLockS3}
           {u_1, u_2, u_1'', u_2''\fresh\\ \locked(\mutname) = (b, u_1', u_2')\\
             \nple{\prio}{\prio'}\\
             \stackstate' =
             \scsend{\scp{\stack}{\kwwithlocks{\kwmutex{\mutname}}{\shole}}}
                    {
                      s}\\
             \sigcvtype{\mutname}{\prio_{\mutname}} \in \sig\\
             \reads = \{\edgeweak{u_1''}{u_{2c}} \mid (c, (u_{1c}, u_{2c}),
             \prio_c, \sig_c, s_c) \in \waiters(\mutname)\}\\
             \prioceil{\stackstate_b}{\mutname}{b}{\prio'}
                      {\stackstate''}
           }
           {
             \lconfig{\cthread{a}{\prio}{\sig}
               {\scsend{\stack}{\kwwithlock{\kwmutex{\mutname}}{s}}}
               \tpcp \cthread{b}{\prio'}{\sig'}
               {\stackstate_b}
                \tpcp \tp}
              {\mem}
              {\waiters}
              {\locked}
              {\graph}\\
      \mstep
      \lconfig{\cthread{b'}{\prio_{\mutname}}{\sig'}
        {\stackstate''}
        \tpcp
        \tp}
              {\mem}
              {\waitupd{\waiters}{\mutname}
                {\waitcons{(a, (u_1, u_2), \prio, \sig, \stackstate')}{\waiters(\mutname)}}}
              {\waitupd{\locked}{\mutname}{(b', u_1'', u_2'')}}{}
              \\
              {\graph \scomp{b'} \gnode{u_1''}{\sig'} \scomp{b'} \gnode{u_2''}{\sig'}
                \scomp{a} \gnode{u_1}{\sig} \scomp{a} \gnode{u_2}{\sig}
                \cup \{\edgeweak{u_1''}{u_2}, \edgespawn{\lastv{\graph}{a'}}{u_1''}\}
              }
           }
      \and
      \iffull
      \Rule{WithLockS4}
           {u_1, u_2, u_1'', u_2''\fresh\\ \locked(\mutname) = (b, u_1', u_2')\\
             \waiters(\cvmutname') =
             \waitcons{\ell_1}{\waitcons{(b, \_, \prio', \sig', \stackstate_b)}
               {\ell_2}}\\
             \nple{\prio}{\prio'}\\
             \stackstate' =
             \scsend{\scp{\stack}{\kwwithlocks{\kwmutex{\mutname}}{\shole}}}
                    {
                      s}\\
             \sigcvtype{\mutname}{\prio_{\mutname}} \in \sig\\
             \reads = \{\edgeweak{u_1''}{u_{2c}} \mid (c, (u_{1c}, u_{2c}),
             \prio_c, \sig_c, s_c) \in \waiters(\mutname)\}\\
             \prioceil{\stackstate_b}{\mutname}{b}{\prio'}
                      {\stackstate''}\\
                      \waiters'(\cvmutname') =
             \waitcons{\ell_1}{\waitcons{(b', \_, \prio_\mutname, \sig', \stackstate'')}
               {\ell_2}}
           }
           {
             \lconfig{\cthread{a}{\prio}{\sig}
               {\scsend{\stack}{\kwwithlock{\kwmutex{\mutname}}{s}}}
               \tpcp \tp}
              {\mem}
              {\waiters}
              {\locked}
              {\graph}\\
      \mstep
      \lconfig{\tp}
              {\mem}
              {\waitupd{\waiters'}{\mutname}
                {\waitcons{(a, (u_1, u_2), \prio, \sig, \stackstate')}{\waiters(\mutname)}}}
              {\waitupd{\locked}{\mutname}{(b', u_1'', u_2'')}}{}
              \\
              {\graph \scomp{b'} \gnode{u_1''}{\sig'} \scomp{b'} \gnode{u_2''}{\sig'}
                \scomp{a} \gnode{u_1}{\sig} \scomp{a} \gnode{u_2}{\sig}
                \cup \{\edgeweak{u_1''}{u_2}, \edgespawn{\lastv{\graph}{a'}}{u_1''}\}
              }
           }
      \and
      \Rule{WithLockE1}
           {u\fresh\\ \waiters(\mutname) = \waitemp}
           {
             \lconfig{\cthread{a}{\prio}{\sig}
               {\screturn{\scp{\stack}{\kwwithlocks{\kwmutex{\mutname}}{\shole}}}}
               \tpcp \tp}
                     {\mem}
                     {\waiters}
                     {\locked}
                     {\graph}
      \mstep
      \lconfig{\cthread{a}{\prio}{\sig}{\screturn{\stack}} \tpcp
                \tp}
              {\mem}
              {\waiters}
              {\waitupd{\locked}{\mutname}{\lunlocked}}
              {\graph \scomp{a} \gnode{u}{\sig}}
           }
      \and
      \fi
      \Rule{WithLockE2}
           {u\fresh\\
             \waiters(\mutname) = \ell_1,\waitcons{(b, (u_1, u_2), \prio_b, \sig_b, \stackstate_b)}
                     {\ell_2}\\
                     \forall (\_, \_, \prio_c, \_, \_) \in \ell_1\ell_2.
                     \ple{\prio_c}{\prio_b}\\
             \reads = \{\edgeweak{u_1}{u_2'} \mid (c, (u_1', u_2'),
                \prio_c, \sig_c, s_c) \in \ell_1\ell_2\}
           }
           {
      \lconfig{\cthread{a}{\prio}{\sig}
        {\screturn{\scp{\stack}{\kwwithlocks{\kwmutex{\mutname}}{\shole}}}}
              \tpcp \tp}
              {\mem}
              {\waiters}
              {\locked}
              {\graph}\\
      \mstep
      \lconfig{\cthread{a}{\prio}{\sig}{\screturn{\stack}} \tpcp
        \cthread{b}{\prio_b}{\sig_b}{\stackstate_b} \tpcp
                \tp}
              {\mem}
              {\waitupd{\waiters}{\mutname}{\ell_1\ell_2}}
              {\waitupd{\locked}{\mutname}{(b, u_1, u_2)}}
              {\graph \scomp{a} \gnode{u}{\sig} \cup \{\edgesync{u}{u_2}\}
                \cup \reads
              }
           }
    \and
    \Rule{WithLockE3}
           {u, u'\fresh\\
             \waiters(\mutname) = \waitemp
           }
           {
      \lconfig{\cthread{a}{\prio}{\sig}
        {\screturn{\scp{\stack}{\kwwithlocksp{\kwmutex{\mutname}}{\shole}{a'}{\prio'}}}}
              \tpcp \tp}
              {\mem}
              {\waiters}
              {\locked}
              {\graph}\\
      \mstep
      \lconfig{\cthread{a'}{\prio'}{\sig}{\screturn{\stack}} \tpcp
                \tp}
              {\mem}
              {\waiters}
              {\waitupd{\locked}{\mutname}{\lunlocked}}
              {\graph \scomp{a} \gnode{u}{\sig} \scomp{a'} \gnode{u'}{\sig} \cup
                \{\edgesync{u}{u'}\}
              }
           }
     \and
     \iffull
     \Rule{WithLockE4}
           {u, u'\fresh\\
             \waiters(\mutname) = \ell_1,\waitcons{(b, (u_1, u_2), \prio_b, \sig_b, \stackstate_b)}
                     {\ell_2}\\
                     \forall (\_, \_, \prio_c, \_, \_) \in \ell_1\ell_2.
                     \ple{\prio_c}{\prio_b}\\
             \reads = \{\edgeweak{u_1}{u_2'} \mid (c, (u_1', u_2'),
                \prio_c, \sig_c, \stackstate_c) \in \ell_1\ell_2\}
           }
           {
      \lconfig{\cthread{a}{\prio}{\sig}
        {\screturn{\scp{\stack}{\kwwithlocksp{\kwmutex{\mutname}}{\shole}{a'}{\prio'}}}}
              \tpcp \tp}
              {\mem}
              {\waiters}
              {\locked}
              {\graph}\\
      \mstep
      \lconfig{\cthread{a'}{\prio'}{\sig}{\screturn{\stack}} \tpcp
        \cthread{b}{\prio_b}{\sig_b}{\stackstate_b} \tpcp
                \tp}
              {\mem}
              {\waitupd{\waiters}{\mutname}{\ell_1\ell_2}}
              {\waitupd{\locked}{\mutname}{(b, u_1, u_2)}}
              {}\\
              \graph \scomp{a} \gnode{u}{\sig} \scomp{a'} \gnode{u'}{\sig} \cup
                \{\edgesync{u}{u_2}, \edgesync{u}{u'}\}
                \cup \reads
           }
\end{mathpar}
  \caption{Dynamic semantics rules (part 2)}
\label{fig:cost2}
\end{figure}
\begin{figure}
  \footnotesize
  
  \centering
  \begin{mathpar}
    \Rule{If1}
           {n > 0\\ u \fresh}
           {
             \lconfig{\cthread{a}{\prio}{\sig}{\scsend{\stack}{\kwif{\kwn}{s_1}{s_2}}}
               \tpcp \tp}
              {\mem}
              {\waiters}
              {\locked}
              {\graph}
      \mstep
      \lconfig{\cthread{a}{\prio}{\sig}{\scsend{\stack}{s_1}} \tpcp
                \tp}
              {\mem}
              {\waiters}
              {\locked}
              {\graph \scomp{a} \gnode{u}{\sig}}
           }
      \and
      \Rule{If2}
           {n = 0\\ u \fresh}
           {
             \lconfig{\cthread{a}{\prio}{\sig}{\scsend{\stack}{\kwif{\kwn}{s_1}{s_2}}}
                 \tpcp \tp}
              {\mem}
              {\waiters}
              {\locked}
              {\graph}
      \mstep
      \lconfig{\cthread{a}{\prio}{\sig}{\scsend{\stack}{s_2}} \tpcp
                \tp}
              {\mem}
              {\waiters}
              {\locked}
              {\graph \scomp{a} \gnode{u}{\sig}}
           }
      \and
      \Rule{While}
           {\strut}
           {
      \lconfig{\cthread{a}{\prio}{\sig}{\scsend{\stack}{\kwwhile{v}{s}}} \tpcp \tp}
              {\mem}
              {\waiters}
              {\locked}
              {\graph}
      \mstep
      \lconfig{\cthread{a}{\prio}{\sig}{\scsend{\stack}{\kwif{v}{(s; \kwwhile{v}{s})}{\kwskip}}}
        \tpcp \tp}
              {\mem}
              {\waiters}
              {\locked}
              {\graph \scomp{a} \gnode{u}{\sig}}
           }
      \and
      \Rule{Skip}
           {u \fresh}
           {
      \lconfig{\cthread{a}{\prio}{\sig}{\scsend{\stack}{\kwskip}} \tpcp \tp}
              {\mem}
              {\waiters}
              {\locked}
              {\graph}
      \mstep
      \lconfig{\cthread{a}{\prio}{\sig}{\screturn{\stack}} \tpcp
                \tp}
              {\mem}
              {\waiters}
              {\locked}
              {\graph \scomp{a} \gnode{u}{\sig}}
           }
      \and
       \Rule{Seq1}
           {\strut}
           {
             \lconfig{\cthread{a}{\prio}{\sig}
               {\scsend{\stack}{s_1; s_2}} \tpcp \tp}
                     {\mem}
                     {\waiters}
                     {\locked}
                     {\graph}
             \mstep
             \lconfig{\cthread{a}{\prio}{\sig}
               {\scsend{\scp{\stack}{\shole; s_2}}{s_1}} \tpcp \tp}
                     {\mem}
                     {\waiters}
                     {\locked}
                     {\graph}
           }
      \and
      \Rule{Seq2}
           {\strut}
           {
             \lconfig{\cthread{a}{\prio}{\sig}
               {\screturn{\scp{\stack}{\shole; s_2}}} \tpcp \tp}
                     {\mem}
                     {\waiters}
                     {\locked}
                     {\graph}
             \mstep
             \lconfig{\cthread{a}{\prio}{\sig}
               {\scsend{\stack}{s_2}} \tpcp \tp}
                     {\mem}
                     {\waiters}
                     {\locked}
                     {\graph}
           }
  \fi
  \end{mathpar}
  \caption{Dynamic semantics rules\iffull{} (part 3)\fi}
\label{fig:cost3}
\end{figure}
\iffull
\begin{figure}
  \langfigsize
  \centering
  \begin{mathpar}
    \RuleNolabel
        {\strut}
        {\prioceil{\scp{\stack}{\kwwithlocks{\kwmutex{\mutname}}{\shole}}}
          {\mutname}{a}{\prio}
          {\scp{\stack}{\kwwithlocksp{\kwmutex{\mutname}}{\shole}{a}{\prio}}}
        }
    \and
    \RuleNolabel
        {\prioceil{\stack}{\mutname}{a}{\prio}{\stack'}\\\\
          \stframe \neq \kwwithlocks{\kwmutex{\mutname}}{\shole}}
        {\prioceil{\scp{\stack}{\stframe}}{\mutname}{a}{\prio}
          {\scp{\stack'}{\stframe}}}
    \and
    \RuleNolabel
        {\prioceil{\stack}{\mutname}{a}{\prio}{\stack'}
        }
        {\prioceil{\scsend{\stack}{s}}{\mutname}{a}{\prio}{\scsend{\stack'}{s}}
        }
    \and
    \RuleNolabel
        {\prioceil{\stack}{\mutname}{a}{\prio}{\stack'}}
        {\prioceil{\screturn{\stack}}{\mutname}{a}{\prio}{\screturn{\stack'}}}
    \and
    \RuleNolabel
        {\prioceil{\stack}{\mutname}{a}{\prio}{\stack'}}
        {\prioceil{\ssend{\stack}{\instr}}{\mutname}{a}{\prio}
          {\ssend{\stack'}{\instr}}
        }
    \and
    \RuleNolabel
        {\prioceil{\stack}{\mutname}{a}{\prio}{\stack'}}
        {\prioceil{\sreturn{\stack}{v}}{\mutname}{a}{\prio}
          {\sreturn{\stack'}{v}}
        }
  \end{mathpar}
  \caption{Rules for dynamically changing thread priorities.}
  \label{fig:prio-ceil}
\end{figure}
\fi

The dynamic semantics judgment is
$
\lconfig{\tp}{\mem}{\waiters}{\locked}{\graph} \mstep
\lconfig{\tp'}{\mem'}{\waiters'}{\locked'}{\graph'}
$,
indicating that the configuration steps and and if the DAG is~$\graph$
before the step, it is~$\graph'$ after.
\iffull
The rules are in Figures~\ref{fig:cost1}--\ref{fig:cost3}.
\else
Selected rules are shown in Figure~\ref{fig:cost3}; the full rules are
available in the supplementary material.
\fi
Most rules create a fresh vertex~$u$ and add it to the graph:
the notation~$\graph \scomp{a} \gnode{u}{\sig}$
adds the vertex~$u$ onto the thread~$a$ in~$\graph$.
We also associate with each vertex
the signature of the thread at that point in the program.
We will also use the notations~$\graph \cup E$ to add the edges in~$E$ to
the graph: we will denote a create edge, a sync edge, and a weak edge
as~$\edgespawn{u_1}{u_2}$, $\edgesync{u_1}{u_2}$, and~$\edgeweak{u_1}{u_2}$,
respectively.

The most unusual rules are those dealing with CVs and mutexes.
Rule~\rulename{Wait} adds the current thread to~$\waiters$.
Note that the use of ``2/3-cps'' form means that we don't have to evaluate
the subexpression to a condition variable handle; the syntax and typing rules
ensure this is the only form it can take.
The~\rulename{Wait} rule adds two vertices to the thread; one represents
the~\texttt{wait} operation itself, and the other represents the next operation
in the thread following the~\texttt{wait}; this is the operation that will
be enabled by a~\texttt{signal}, so this vertex is added to~$\waiters$ so that
we may later add a sync edge leading to it.
There are two rules for signal, depending on whether any threads are
currently waiting on the CV.
If there are (\rulename{Signal1}), one is chosen; it is removed
from~$\waiters(\cvname)$ and added back to the thread pool.
We also add the sync edge~$(u, u')$ to the DAG, where~$u$ is the vertex
representing the signal and~$u'$ is the vertex following the~\texttt{wait}
operation in the waiting thread.
If there are no threads waiting (\rulename{Signal2}), the vertex~$u$ is
added to the DAG but otherwise no action is taken.

There are four rules for acquiring the lock to start
a~$\kwwithlock{v}{s}$ critical section.
If the lock is currently unlocked, \rulename{WithLockS1}
transitions it to locked by updating~$\locked(\mutname)$ to the current thread.
As with~\rulename{Wait}, we add two vertices to the thread, representing
the acquire operation and the first operation in the critical section.
The statement is also changed to~$\kwwithlocks{\kwmutex{\mutname}}{s}$
to indicate that the lock has already been acquired.
If the lock is currently held by a higher-priority thread than the current
thread (\rulename{WithLockS2}),
the current thread is added to~$\waiters(\mutname)$
and a weak edge is added from~$u_1'$, the vertex that successfully acquired
the lock, to~$u_2$, the newly added vertex in the current thread.
%
%
%
If the lock is currently held by a thread that is not higher priority than
the current thread,
the thread holding the lock is promoted to priority~$\prio_{\mutname}$,
the priority ceiling of the mutex.
This is done by \rulename{WithLockS3} if the thread holding the lock is
active, and \rulename{WithLockS4} \iffull\else (omitted for space, but
included in the supplementary material) \fi
if it is waiting on some other lock or CV.
The rule creates a new thread~$b'$ at~$\prio_{\mutname}$ to
run the remainder of the critical section.
The auxiliary judgment~$\prioceil{\stackstate_b}{\mutname}{b}{\prio'}{\stackstate''}$
replaces the critical section with one of the
form~$\kwwithlocksp{v}{s}{b}{\prio'}$, which indicates that
the thread should return control to thread~$b$ at~$\prio'$ when the
critical section finishes.
\iffull
The rules for this judgment are defined in Figure~\ref{fig:prio-ceil}.
\else
The formal rules for the judgment are in the supplementary material.
\fi
Two new vertices~$u_1''$ and~$u_2''$ are added to begin thread~$b'$---these
are needed to update~$\locked(\mutname)$ to indicate that~$b'$ now holds the
lock.
Finally,
weak edges are added from~$u_1''$
to every other thread in~$\waiters(\mutname)$.

There are also four rules for ending a critical section.
If the thread has not been promoted to a higher priority and no other threads
are waiting for the lock, rule \rulename{WithLockE1} simply
updates~$\locked(\mutname)$ to~$\lunlocked$, releasing the lock.
If other threads are waiting, the highest priority thread~$b$ is chosen
from~$\waiters(\mutname)$.
Rule \rulename{WithLockE2} assigns~$\locked(\mutname)$ to~$b$ and updates
the list of waiting threads.
The rule also adds a synchronization edge from the vertex representing the
release of the lock to the vertex in~$b$ that acquires the lock.
Finally, the rule adds weak edges to the waiting threads.
If the thread holding the lock has been promoted to the lock's priority
ceiling, corresponding rules (\rulename{WithLockE3} if no other threads are
waiting, otherwise \rulename{WithLockE4}) additionally move the thread back
to its original name and priority.

\subsection{Correctness Proof}

The main goal of this section is to prove that the graph produced by an
execution of a {\langname} program is well-formed, meaning that it contains
no priority inversions (that are not handled by the dynamic priority ceiling
mechanism of the operational semantics).
We prove this together with standard type preservation and the preservation
of a number of additional invariants on the configuration and graph that
aid in the proof of well-formedness.
\iffull
First, Lemma~\ref{lem:prio-change-pres} shows that typing of stacks and
stack states is preserved by a dynamic priority change as accomplished
by the~$\prioceil{\stackstate}{\mutname}{a}{\prio}{\stackstate'}$ judgment
and its auxiliary judgment on
stacks,~$\prioceil{\stack}{\mutname}{a}{\prio}{\stack'}$,



\begin{lemma}\label{lem:prio-change-pres}
\begin{itemize}
\item If~$\stackacceptsc{\sig}{\stack}{\fctx_1}{\prios_1}{\fctx_2}{\prio}$
and~$\sigcvtype{\mutname}{\prio'} \in \sig$
and~$\prioceil{\stack}{\mutname}{a}{\prio}{\stack'}$
then~$\stackacceptsc{\sig}{\stack'}{\fctx_1}{\prios_1}{\fctx_2}{\prio'}$
\item If~$\sstyped{\sig}{\fctx}{\stackstate}{\prio}{\fctx'}$
and~$\sigcvtype{\mutname}{\prio'} \in \sig$
and~$\prioceil{\stackstate}{\mutname}{a}{\prio}{\stackstate'}$
then~$\sstyped{\sig}{\fctx}{\stackstate'}{\prio'}{\fctx'}$
\end{itemize}
\end{lemma}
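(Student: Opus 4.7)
The plan is to prove both parts by simultaneous induction on the derivations of $\prioceil{\stack}{\mutname}{a}{\prio}{\stack'}$ and $\prioceil{\stackstate}{\mutname}{a}{\prio}{\stackstate'}$, following the structure of the rules in Figure~\ref{fig:prio-ceil}. The first part is the substantive one; the second part reduces to it by case-analyzing the shape of $\stackstate$ and reconstructing via the appropriate stack-state typing rule (\rulename{KS-PopInstr}, \rulename{KS-PopStmt}, \rulename{KS-PushInstr}, or \rulename{KS-PushStmt}), each of which depends only on the stack typing.

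The critical case is the base rule of the stack judgment, which rewrites the innermost frame $\kwwithlocks{\kwmutex{\mutname}}{\shole}$ to $\kwwithlocksp{\kwmutex{\mutname}}{\shole}{a}{\prio}$. Starting from $\stackacceptsc{\sig}{\scp{\stack_0}{\kwwithlocks{\kwmutex{\mutname}}{\shole}}}{\fctx_1}{\prios_1}{\fctx_2}{\prio}$, I invert \rulename{KS-WithLockS} to obtain an inner derivation $\stackacceptsc{\sig}{\stack_0}{\fctx_1}{\prios_0}{\fctx_2}{\prio}$ (with $\prios_1 = \prios_0 \cup \{\prio''\}$) together with $\etyped{\sig}{\ectx}{\kwmutex{\mutname}}{\kwmutext{\prio''}}$ for some $\prio''$. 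Since signatures contain at most one entry for each mutex, the hypothesis $\sigcvtype{\mutname}{\prio'} \in \sig$ forces $\prio'' = \prio'$. I then apply \rulename{KS-WithLockSP} to build $\stackacceptsc{\sig}{\scp{\stack_0}{\kwwithlocksp{\kwmutex{\mutname}}{\shole}{a}{\prio}}}{\fctx_1}{\prios_1}{\fctx_2}{\prio'}$. This is exactly the output priority shift we need: the inner continuation still expects $\prio$ (preserved in the frame's $a,\prio$ tag), but the stack as a whole is now accepted at the mutex's ceiling $\prio'$.

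For the inductive cases of the stack judgment, the top frame $\stframe$ is unchanged and the priority ceiling recurses into the tail. I invert the typing of the outer frame (one of \rulename{KS-Seq}, \rulename{KS-WithLockS}, \rulename{KS-WithLockSP}, or \rulename{KS-Let}) to peel off its premises, apply the inductive hypothesis to the inner stack (which changes its output priority from $\prio$ to $\prio'$), and reassemble using the same rule. Crucially, each of these frame-typing rules is parametric in the output priority of its inner stack derivation (the premise $\stackacceptsc{\sig}{\stack}{\dots}{\prio_2}$ appears as a hypothesis and the same $\prio_2$ is the conclusion's output priority), so the reconstruction goes through unchanged, and the signature $\sig$ and permission maps are threaded identically.

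The main obstacle I anticipate is bookkeeping in the inductive cases: each stack-frame typing rule must be shown to be monotone in the inner stack's output priority, which amounts to observing that none of the premises other than the recursive one mention $\prio_2$. A secondary subtlety is the \rulename{KS-WithLockSP} inductive case, where one must not confuse the frame's stored priority $\prio$ (which is an arbitrary annotation and plays no role in typing the inner continuation except as the latter's output priority) with the $\prio$ being passed through the priority-change judgment; inversion makes clear that the two are equal on the incoming derivation, and the IH rewrites both in lockstep.
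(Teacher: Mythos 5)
Your proposal matches the paper's proof: both proceed by induction on the derivation of $\prioceil{\stack}{\mutname}{a}{\prio}{\stack'}$, handle the key case by inverting \rulename{KS-WithLockS} and reassembling with \rulename{KS-WithLockSP} (shifting the output priority to the ceiling $\prio'$ while the inner continuation's typing is unchanged), dispatch the remaining frames by the inductive hypothesis, and derive the stack-state part from the stack part. The extra observations you make (uniqueness of the mutex's signature entry, parametricity of the frame rules in the output priority) are exactly the details the paper leaves implicit.
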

\begin{proof}
\begin{itemize}
\item By induction on the derivation of~$\prioceil{\stack}{\mutname}{a}{\prio}{\stack'}$.
\iffull
\begin{itemize}
\item~$\stack = \scp{\stack_0}{\kwwithlocks{\kwmutex{\mutname}}{\shole}}$
and~$\stack' = \scp{\stack_0}{\kwwithlocksp{\kwmutex{\mutname}}{\shole}}{\mutname}{\prio}$.
By inversion,~$\stackacceptsc{\sig}{\stack_0}{\fctx_1}{\prios_1'}{\fctx_2}{\prio}$
where~$\prios_1 = \prios_1' \cup \{\prio'\}$
and~$\etyped{\sig}{\ectx}{\kwmutex{\mutname}}{\kwmutext{\prio'}}$.
By~\rulename{KS-WithLockSP},
$\stackacceptsc{\sig}{\stack'}{\prios_1}{\fctx_1}{\prios_1}{\fctx_2}{\prio'}$.
\item By induction.
\end{itemize}
\fi
\item By part (1).
\end{itemize}
\end{proof}
\fi

We now present the invariants that we use to prove well-formedness of graphs.
Invariant~(\ref{inv:typed}) is simply that the configuration is well-typed; the
fact that this invariant is preserved corresponds to a standard Preservation
result.
Invariant~(\ref{inv:wf}) is itself that the graph is well-formed.
The remaining invariants state properties that are required to show that
this invariant is preserved.
Invariants~(\ref{inv:waiters})--(\ref{inv:waiters-prio}) deal with mutexes
and Invariants~(\ref{inv:cv-waiters})--(\ref{inv:no-signal}) deal with CVs.
Two invariants directly ensure facets of well-formedness:
Invariant~(\ref{inv:waiters}) ensures that weak edges are present from
threads that hold locks to threads waiting for the lock; this is required for
the definition of well-formedness.
Invariant~(\ref{inv:no-signal}) ensures that high-priority threads do not
receive low-priority vertices on their critical paths due to sync edges
by stating that such vertices have no permission for CVs which would
be required in order to signal them.
Invariants~(\ref{inv:waiters}),~(\ref{inv:mutex-prio}),
and~(\ref{inv:waiters-prio}) place requirements on priorities of threads
dealing with mutexes: first, every mutex is associated with a priority, its
priority ceiling~$\prio_\mutname$, in every signature in which the mutex is
present; second, every thread waiting on a mutex must have a priority less
than both the ceiling and the thread currently holding the lock.
%
Invariant~(\ref{inv:unlocked-empty}) states that no threads are waiting for an
unlocked lock, which limits the steps at which we have to maintain the
invariants on waiting threads.
Invariant~(\ref{inv:cv-waiters}) states that any threads waiting on a CV
have a lower priority than the CV's handle.
The last invariant,~$(\ref{inv:can-promote})$, as well as the last
part of~$(\ref{inv:waiters})$, are only necessary to prove Progress; they
state that any thread holding a lock at a priority lower than its ceiling
can be promoted to the priority ceiling.

\begin{definition}[Invariants]\label{def:invs}
We require the following invariants on configurations
$\lconfig{\tp}{\mem}{\waiters}{\locked}{\graph}$,
where~$\tp = \cthread{a_1}{\prio_1}{\sig_1}{s_1} \tpcp \dots \tpcp
        \cthread{a_n}{\prio_n}{\sig_n}{s_n}$.
\begin{enumerate}
\item \label{inv:typed} $\cfgtyped{\sig}
        {\twlconfig{\tp}{\mem}{\waiters}{\locked}}$.
        In the below, we will use~$\fctx_i$ to refer to
        the permission used in the typing
        derivation for~$a_i \in \tp$
\item \label{inv:wf} $\graph$ is well-formed
\item \label{inv:waiters} If~$\locked(\mutname) = (a, u_1, u_2)$
and~$(a', u', \prio_{a'}, \sig_{a'}, \stackstate_{a'}) \in \waiters(\mutname)$,
then~$\edgeweak{u_1}{u'} \in \graph$ and
$\ple{\prio_{a'}}{\uprio{\graph}{a}}$ and
if~$\prio_{a'} \neq \prio_\mutname$,
then~$\prioceil{\stackstate_{a'}}{\mutname}{a'}{\prio_{a'}}{\stackstate'}$.
\item \label{inv:unlocked-empty} If~$\locked(\mutname) = \lunlocked$
then~$\waiters(\mutname) = \waitemp$
\item \label{inv:mutex-prio}
For all mutexes~$\mutname$, there is a priority~$\prio_\mutname$
such that for all~$i$,
if~$\sigcvtype{\mutname}{\prio} \in \sig_i$, then $\prio = \prio_\mutname$
\item \label{inv:waiters-prio} For all mutexes~$\mutname$,
if~$(a, (u_1, u_2), \prio, \sig, \stackstate) \in \waiters(\mutname)$,
then~$\ple{\prio}{\prio_\mutname}$.

\item \label{inv:cv-waiters} For all CVs~$\cvname$ and
all~$(a, (u_1, u_2), \prio_a, \sig_a, \stackstate_a') \in \waiters(\cvname)$,
we have~$\sigcvtype{\cvname}{\prio} \in \sig_a$
where~$\ple{\prio_a}{\prio}$
\item \label{inv:no-signal} If~$\not\sanc{(u', \sig')}{(u, \sig)}$
and~$\sigcvtype{\cvname}{\prio} \in \sig$ and the last vertex in~$a_i$
is a descendant of~$u'$, then
\begin{itemize}
\item For all~$\prio'$ such that~$\nple{\prio}{\prio'}$,
$\permof{\fctx_i}{\cvname}{\prio'} = \fnone$ and
\item If~$\nple{\prio}{\uprio{\graph}{u'}}$
and~$\ple{\prio}{\uprio{\graph}{a_i}}$, then for all~$\prio'$,
$\permof{\fctx_i}{\cvname}{\prio'} = \fnone$
\end{itemize}
\item \label{inv:can-promote} If~$\locked(\mutname) = (a, u_1, u_2)$,
then~$\cthread{a}{\prio}{\sig}{\stackstate} \in \tp$
or~$(a, (u_1, u_2), \prio, \sig, \stackstate) \in \waiters(\cvmutname)$
and~$\sigcvtype{\mutname}{\prio_{\mutname}} \in \sig$
and if~$\prio \neq \prio_\mutname$,
then~$\prioceil{\stackstate}{\mutname}{a}{\prio}{\stackstate'}$.
\end{enumerate}
\end{definition}

Theorem~\ref{thm:preservation} (Preservation) states that a step of the dynamic
semantics preserves the invariants.

\begin{theorem}\label{thm:preservation}
Suppose~$\tp = \cthread{a_1}{\prio_1}{\sig_1}{s_1} \tpcp \dots \tpcp
        \cthread{a_n}{\prio_n}{\sig_n}{s_n}$
and~$\lconfig{\tp}{\mem}{\waiters}{\locked}{\graph}$ meets the invariants
of Definition~\ref{def:invs}.
If $\lconfig{\tp}{\mem}{\waiters}{\locked}{\graph}
\mstep
\lconfig{\tp'}{\mem'}{\waiters'}{\locked'}{\graph'}$
then~$\lconfig{\tp'}{\mem'}{\waiters'}{\locked'}{\graph'}$ meets the
invariants.
\end{theorem}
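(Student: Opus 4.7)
The plan is to proceed by case analysis on the dynamic-semantics rule applied in the step, verifying each of the nine invariants in Definition~\ref{def:invs} for every rule. Most rules fall into two easy categories. The purely local control-flow rules (\rulename{Let1}, \rulename{Let2}, \rulename{Seq1}, \rulename{Seq2}, \rulename{If1}, \rulename{If2}, \rulename{While}, \rulename{Skip}) and the resource-allocation rules (\rulename{NewRef}, \rulename{Deref}, \rulename{Update}, \rulename{NewCV}, \rulename{NewMutex}) either add a single fresh vertex to a single thread or do not touch the graph at all. For such rules, invariants~(\ref{inv:waiters})--(\ref{inv:can-promote}) are immediate because $\waiters$ and $\locked$ are essentially unchanged (or only grow with fresh signature entries); invariant~(\ref{inv:typed}) follows by a standard substitution and inversion argument against Figures~\ref{fig:inst-statics} and~\ref{fig:stmt-statics}; and invariant~(\ref{inv:wf}) holds because adding a single in-thread vertex creates no new strong-ancestor relations across threads.

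For the synchronization rules, \rulename{Spawn} requires inversion on the Spawn typing rule together with the splitting judgment of Figure~\ref{fig:split-perm}; its second premise is precisely what keeps invariant~(\ref{inv:no-signal}) true after the permission context is split between parent and child. \rulename{Wait} just moves the current thread into $\waiters(\cvname)$, and invariant~(\ref{inv:cv-waiters}) is re-established by the \rulename{Wait} typing premise $\ple{\prio}{\prio'}$. \rulename{Signal2} and \rulename{Promote} merely add a single vertex and update a single thread's signature or permission context, each directly controlled by the corresponding typing rule.

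The hard cases are \rulename{Signal1}, \rulename{WithLockS2}, \rulename{WithLockS3}/\rulename{WithLockS4}, and \rulename{WithLockE2}/\rulename{WithLockE4}, which install sync edges (and sometimes weak edges or spawn a promoted replacement thread). For \rulename{Signal1}, the new edge $\edgesync{u}{u'}$ makes every strong ancestor of $u$ a fresh strong ancestor of the waiting thread's last vertex, so I must show each such ancestor has priority at least that of the waiter. Invariant~(\ref{inv:no-signal}) supplies exactly this, because the \rulename{Signal} typing premise $\permof{\fctx}{\cvname}{\prio} \neq \fnone$ rules out a low-priority vertex from being a strong ancestor of the signaling vertex; combined with invariant~(\ref{inv:cv-waiters}), the waiter is at no higher priority than any relevant ancestor of the signaler. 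For the lock-release rules, invariant~(\ref{inv:waiters}) already guarantees that the weak edge $\edgeweak{u_1}{u_2}$ required by condition~2 of Definition~\ref{def:wellformed} was installed when the waiter was enqueued, so condition~2 survives the addition of the new sync edge; condition~1 is maintained because every lock contender has priority at most $\prio_\mutname$ by invariant~(\ref{inv:waiters-prio}).

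I expect the main obstacle to be the priority-ceiling rules \rulename{WithLockS3}/\rulename{WithLockS4}, which spawn a replacement thread at $\prio_\mutname$ that continues the promoted thread's critical section via $\prioceil{\stackstate_b}{\mutname}{b}{\prio'}{\stackstate''}$. Preserving invariant~(\ref{inv:typed}) under this stack rewrite is precisely what Lemma~\ref{lem:prio-change-pres} delivers. Preserving invariant~(\ref{inv:no-signal}) is more delicate, since the promoted thread's descendants now execute at the ceiling and must not be able to signal CVs below that ceiling; this will require combining the \rulename{WithLock} typing rule's requirement that the critical section type-check at both its own priority and at the ceiling with invariant~(\ref{inv:mutex-prio}) to identify the ceiling uniformly across signatures, and is where I expect the bulk of the bookkeeping. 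Once these synchronization and promotion cases are dispatched, the remaining invariants follow mechanically.
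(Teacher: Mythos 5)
Your proposal follows essentially the same route as the paper's proof: a case analysis over the dynamic-semantics rules checking each invariant in turn, dispatching the control-flow and allocation rules trivially, using the \rulename{Spawn} typing premise and permission splitting for invariant~(\ref{inv:no-signal}), using invariants~(\ref{inv:cv-waiters}) and~(\ref{inv:no-signal}) to justify the new sync edge in \rulename{Signal1}, using invariants~(\ref{inv:waiters}) and~(\ref{inv:waiters-prio}) for the lock-release rules, and invoking Lemma~\ref{lem:prio-change-pres} for the priority-ceiling rules. The identified hard cases and the invariants you lean on in each match the paper's argument, so the plan is sound as stated.
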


\begin{proof}
By induction on the derivation of
$\lconfig{\tp}{\mem}{\waiters}{\locked}{\graph}
\mstep
\lconfig{\tp'}{\mem'}{\waiters'}{\locked'}{\graph'}$.
\iffull
\begin{itemize}
\item \rulename{Spawn}.
Then~$\tp = \cthread{a}{\prio}{\sig}{\ssend{\stack}{
            \kwspawn{\prio'}{\tau}{\vec{\cvname}}{s'}}}
        \tpcp \tp_0$
and~$\tp' = \cthread{a}{\prio}{\sig}{\sreturn{\stack}{\kwtriv}} \tpcp
                \cthread{b}{\prio}{\sig}{\ssend{\estack}{s'}}
                \tp_0$.
\begin{enumerate}
\item
By inversion on the typing rules~\rulename{KS-PopInstr} and~\rulename{Spawn},
$\stackaccepts{\sig}{\stack}{\kwunit}{\fctx_2}{\prios_1}{\fctx'}{\prio}$
and
$\fctx = \fctx_1, \fctx_2$ and
$\styped{\sig}{\ctx}{\fctx_1}{s'}{\prio'}{\fctx_0}$
and $\forall \cvname, \prio''. \permof{\fctx}{\cvname}{\prio''} \neq \fnone
  \rightarrow
  \permof{\fctx}{\cvname}{\prio} \neq \fnone$.
By~\rulename{unitI} and~\rulename{KS-PushInstr},
$\sstyped{\sig}{\fctx_2}{\sreturn{\stack}{\kwtriv}}{\prio}{\fctx'}$.
For all~$\cvname, \prio$, $\splitsto{\permof{\fctx}{\cvname}{\prio}}{\permof{\fctx_1}{\cvname}{\prio}}{\permof{\fctx_2}{\cvname}{\prio}}$,
so permissions are preserved.
\item No new sync edges are created,
so well-formedness is preserved.
\item No items are added to~$\waiters$ or $\locked$.
\item $\locked$, $\waiters$ are unchanged.
\item $\sig_b' = \sig_a'$, so the priority of any mutexes is preserved.
No items are added to~$\waiters$.
\item No items are added to~$\waiters$.
\item No items are added to~$\waiters$.
\item If~$u$ meets the requirements of condition~(\ref{inv:no-signal}),
  then the last
vertex in~$b$ will as well because if~$\permof{\fctx}{\cvname}{\prio} = 0$,
then~$\permof{\fctx_2}{\cvname}{\prio} = 0$.
If~$\prio_\Sigma$ is the priority of the CV in condition~(\ref{inv:no-signal})
and~$\nple{\prio_\Sigma}{\prio}$ and~$\ple{\prio_\Sigma}{\prio'}$,
then~$\permof{\fctx_2}{\cvname}{\prio_0} = \fnone$ for all~$\prio_0$ because
by assumption,~$\permof{\fctx}{\cvname}{\prio} = \fnone$, so by the
contrapositive of the premise of
the typing rule,~$\permof{\fctx}{\cvname}{\prio_0} = \fnone$.
\item If~$\locked(\mutname) = (a, \_, \_)$,
then we have~$\prioceil{\stack}{\mutname}{a}{\prio}{\stack^p}$
for some~$\stack^p$. This remains unchanged.
\end{enumerate}

\item \rulename{NewRef}
Then~$\tp = \cthread{a}{\prio}{\sig}{\ssend{\stack}{
            \kwnewref{\tau}{y}}}
        \tpcp \tp_0$
and~$\tp' = \cthread{a}{\prio}{\sig,\sigrtype{\kwassn}{\tau}}
        {\sreturn{\stack}{\kwref{\kwassn}}} \tpcp \tp_0$
and~$\mem' = \memupd{\mem}{\mement{\kwassn}{y}{u}{\sig}}$.
\begin{enumerate}
\item By inversion,
$\ityped{\sig}{\ctx}{\fctx}{\kwnewref{\tau}}{\kwreft{\tau}}{\prio}{\fctx}$
and $\stackaccepts{\sig}{\stack}{\kwreft{\tau}}{\fctx}{\prios_1}{\fctx'}$
and~$\etyped{\sig}{\ctx}{y}{\tau}$.
By \rulename{RevVal},~$\etyped{\sig,\sigrtype{\kwassn}{\tau}}
{\ctx}{\kwref{\kwassn}}{\kwreft{\tau}}$.
We have~$\mtyped{\sig,\sigrtype{\kwassn}{\tau}}{\mem'}$.
By~\rulename{KS-PushInstr},
$\sstyped{\sig}{\fctx}{\sreturn{\stack}{\kwref{\kwassn}}}{\prio}{\fctx'}$.
\item No new sync edges are created,
so well-formedness is preserved.
\item No items are added to~$\waiters$ or $\locked$.
\item $\locked$, $\waiters$ unchanged.
\item $\kwassn$ is not a mutex, so the condition is preserved.
\item No items are added to~$\waiters$.
\item No items are added to~$\waiters$.
\item Permissions are not split or reassigned by this step.
\item If~$\locked(\mutname) = (a, \_, \_)$,
then we have~$\prioceil{\stack}{\mutname}{a}{\prio}{\stack^p}$
for some~$\stack^p$. This remains unchanged.
\end{enumerate}

\item \rulename{Deref}
Then~$\tp = \cthread{a}{\prio}{\sig}{\ssend{\stack}{
            \kwderef{\kwref{\kwassn}}}}
        \tpcp \tp_0$
and~$\tp' = \cthread{a}{\prio}{\sig, \sig'}
        {\sreturn{\stack}{v}} \tpcp \tp_0$
where~$\mem(\kwassn) = \memrent{v}{u'}{\sig'}$.
\begin{enumerate}
\item By inversion,
$\ityped{\sig}{\ctx}{\fctx}{\kwderef{\kwref{\kwassn}}}{\tau}{\prio}{\fctx}$
and~$\sigrtype{\kwassn}{\tau} \in \sig$
and~$\stackaccepts{\sig}{\stack}{\tau}{\fctx}{\prios_1}{\fctx'}{\prio}$.
Because~$\mtyped{\sig}{\mem}$, we
have~$\etyped{\sig}{\ctx}{v}{\tau}$.
By~\rulename{KS-PushInstr},
$\sstyped{\sig}{\fctx}{\sreturn{\stack}{v}}{\prio}{\fctx'}$.
\item No new sync edges are created,
so well-formedness is preserved.
\item No items are added to~$\waiters$ or $\locked$.
\item $\locked$, $\waiters$ unchanged.
\item The condition held of~$\sig$ and~$\sig'$, so it holds of~$\sig, \sig'$.
\item No items are added to~$\waiters$.
\item No items are added to~$\waiters$.
\item Permissions are not split or reassigned by this step.
\item If~$\locked(\mutname) = (a, \_, \_)$,
then we have~$\prioceil{\stack}{\mutname}{a}{\prio}{\stack^p}$
for some~$\stack^p$. This remains unchanged.
\end{enumerate}

\item \rulename{Update}
Then~$\tp = \cthread{a}{\prio}{\sig}{\ssend{\stack}{
            \kwassign{\kwref{\kwassn}}{y}}}
        \tpcp \tp_0$
and~$\tp' = \cthread{a}{\prio}{\sig}
        {\sreturn{\stack}{\kwtriv}} \tpcp \tp_0$
and~$\mem' = \memupd{\mem}{\mement{\kwassn}{y}{u}{\sig}}$.
\begin{enumerate}
\item By inversion,
$\ityped{\sig}{\ctx}{\fctx}{\kwassign{\kwref{\kwassn}}{y}}{\kwunit}{\prio}{\fctx}$
and~$\etyped{\sig}{\ctx}{\kwref{\kwassn}}{\kwreft{\tau}}$
and~$\etyped{\sig}{\ctx}{y}{\tau}$
and~$\stackaccepts{\sig}{\stack}{\kwunit}{\fctx}{\prios_1}{\fctx'}{\prio}$.
By~\rulename{KS-PushInstr},
$\sstyped{\sig}{\fctx}{\sreturn{\stack}{\kwtriv}}{\prio}{\fctx'}$.
We still have~$\mtyped{\sig}{\mem'}$ because~$y$ has type~$\tau$.
\item No new sync edges are created,
so well-formedness is preserved.
\item No items are added to~$\waiters$ or $\locked$.
\item $\locked$, $\waiters$ unchanged.
\item Signatures are not changed.
\item No items are added to~$\waiters$.
\item No items are added to~$\waiters$.
\item Permissions are not split or reassigned by this step.
\item If~$\locked(\mutname) = (a, \_, \_)$,
then we have~$\prioceil{\stack}{\mutname}{a}{\prio}{\stack^p}$
for some~$\stack^p$. This remains unchanged.
\end{enumerate}

\item \rulename{NewCV}
Then~$\tp = \cthread{a}{\prio}{\sig}
                {\ssend{\stack}{\kwnewcv{\prio'}}}
              \tpcp \tp_0$
and~$\tp' = \cthread{a}{\prio}{\sig,\sigcvtype{\cvname}{\prio'}}
        {\sreturn{\stack}{\kwcv{\cvname}}}
        \tpcp \tp_0$.
\begin{enumerate}
\item By inversion,
$\ityped{\sig}{\ctx}{\fctx}{\kwnewcv{\prio'}}{\kwcvt{\cvname}{\prio'}}{\prio}{\fctx'}$
and~$\stackaccepts{\sig}{\stack}{\kwcvt{\cvname}{\prio'}}{\fctx'}{\prios_1}{\fctx''}{\prio}$.
By \rulename{CV},
$\etyped{\sig, \sigcvtype{\cvname}{\prio'}}{\ctx}{\kwcv{\cvname}}
{\kwcvt{\cvname}{\prio'}}$.
By~\rulename{KS-PushInstr},
$\sstyped{\sig}{\fctx'}{\sreturn{\stack}{\kwcv{\cvname}}}{\prio}{\fctx''}$.
\item No new sync edges are created,
so well-formedness is preserved.
\item No items are added to~$\waiters$ or $\locked$.
\item $\locked$, $\waiters$ are unchanged.
\item $\kwassn$ is not a mutex, so this property is preserved.
\item No items are added to~$\waiters$.
\item No items are added to~$\waiters$.
\item We have that~$\permof{\fctx'}{\cvname}{\prio_0} = \fnone$
  for all~$\prio_0$ such that~$\nple{\prio'}{\prio_0}$ by typing rule
  ~\rulename{NewCV}.
  Vertex~$u$ has no proper descendants, so if the last vertex of~$a_i$ is a
  descendant of~$u$, then~$a_i = a$
  and~$\uprio{\graph'}{u} = \uprio{\graph'}{a_i}$,
  so the premise of the second part of invariant~\ref{inv:no-signal},
  $\nple{\prio}{\uprio{\graph'}{u}}$ and~$\ple{\prio}{\uprio{\graph'}{a_i}}$,
  is a contradiction.
\item If~$\locked(\mutname') = (a, \_, \_)$,
then we have~$\prioceil{\stack}{\mutname'}{a}{\prio}{\stack^p}$
for some~$\stack^p$. This remains unchanged.
\end{enumerate}

\item \rulename{Wait}
Then~$\tp = \cthread{a}{\prio}{\sig}
                {\ssend{\stack}{\kwwait{\kwcv{\cvname}}}}
              \tpcp \tp'$
and~$\waiters' = \waitupd{\waiters}{\cvname}
                {\waitcons{(a, (u_1, u_2), \prio, \sig, [\kwtriv/x]s)}{\waiters(\cvname)}}$.
\begin{enumerate}
\item By inversion,
$\ityped{\sig}{\ctx}{\fctx}{\kwwait{\kwcv{\cvname}}}{\kwunit}{\prio}{\fctx}$
and~$\stackaccepts{\sig}{\stack}{\kwunit}{\fctx}{\prios_1}{\fctx'}{\prio}$.
By~\rulename{KS-PushInstr},
$\sstyped{\sig}{\fctx}{\sreturn{\stack}{\kwtriv}}{\prio}{\fctx'}$.
\item  No new weak edges or sync edges are created,
so well-formedness is preserved.
\item $\cvname$ is not a mutex, so~$\cvname \not\in \mathit{Dom}(\locked)$.
\item $\cvname$ is not a mutex.
\item $\cvname$ is not a mutex, so this condition does not apply.
\item $\cvname$ is not a mutex, so this condition does not apply.
\item By inversion on \rulename{Wait} and~\rulename{CV},
$\sigcvtype{\cvname}{\prio'} \in \sig$
where~$\meetc[\worlds]{\ctx}{\ple{\prio}{\prio'}}$.

\item Permissions are not split or reassigned by this step.
\item If~$\locked(\mutname') = (a, \_, \_)$,
then we have~$\prioceil{\stack}{\mutname'}{a}{\prio}{\stack^p}$
for some~$\stack^p$. This remains unchanged, although~$a$ is moved from~$\tp$
to~$\waiters'(\mutname)$.
\end{enumerate}

\item \rulename{Signal1}
Then~$\tp = \cthread{a}{\prio}{\sig}{\ssend{\stack}{
                  {\kwsignal{\kwcv{\cvname}}}}}
              \tpcp \tp_0$
and~$\tp' = \cthread{a}{\prio}{\sig}{\sreturn{\stack}{\kwtriv}} \tpcp
        \cthread{b}{\prio_b}{\sig_b}{\stackstate_b}
        \tp_0$
and~$(b, (\_, u'), \prio_b, \sig_b, \stackstate_b) \in \waiters(\cvname)$
and by inversion,~$\permof{\fctx}{\cvname}{\prio} \neq \fnone$.
\begin{enumerate}
\item By inversion,
$\sstyped{\sig_b}{\fctx_b}{\stackstate_b}{\prio_b}{\fctx_b'}$
and
$\ityped{\sig}{\ctx}{\fctx}{\kwsignal{\kwcv{\cvname}}}{\kwunit}{\prio}{\fctx}$
and~$\stackaccepts{\sig}{\stack}{\kwunit}{\fctx}{\prios_1}{\fctx'}{\prio}$.
By~\rulename{KS-PushInstr},
$\sstyped{\sig}{\fctx}{\sreturn{\stack}{\kwtriv}}{\prio}{\fctx'}$.
\item We need to show that the edge $(u, u')$ does not violate well-formedness.
It suffices to show that for all strong ancestors~$u_0$ of~$u$,
$\ple{\prio_b}{\uprio{\graph'}{u_0}}$.
By assumption,~$\sigcvtype{\cvname}{\prio_\cvname} \in \sig_b$
and~$\ple{\prio_b}{\prio_{\cvname}}$.
If~$\sanc{u_0}{u'}$, then this must already hold by well-formedness.
If~$\not\sanc{u_0}{u'}$, then $\ple{\prio_{\cvname}}{\uprio{\graph'}{u_0}}$
because otherwise, $\permof{\fctx}{\cvname}{\prio} = \fnone$.
\item No items are added to~$\waiters$ or~$\locked$.
\item $\cvname$ is not a mutex, so the condition is unchanged.
\item $\cvname$ is not a mutex, so this condition doesn't apply.
\item $\cvname$ is not a mutex, so this condition doesn't apply.
\item No items are added to~$\waiters$.
\item Permissions are not split or reassigned by this step.
\item If~$\locked(\mutname') = (a, \_, \_)$,
then we have~$\prioceil{\stack}{\mutname'}{a}{\prio}{\stack^p}$
for some~$\stack^p$. This remains unchanged.
\end{enumerate}

\item \rulename{Signal2}. Similar to \rulename{Signal1}.

\item \rulename{Promote}
Then~$\tp = \cthread{a}{\prio}{\sig}
                {\ssend{\stack}{\kwpromote{\kwcv{\cvname}}{\prio'}}}
              \tpcp \tp_0$
and~$\tp' = \cthread{a}{\prio}{\sig,\sigcvtype{\cvname}{\prio'}}
        {\sreturn{\stack}{\kwcv{\cvname}}}
        \tpcp \tp_0$.
\begin{enumerate}
\item By inversion,
$\ityped{\sig}{\ctx}{\fctx}{\kwpromote{\kwcv{\cvname}}{\prio'}}
{\kwcvt{\cvname}{\prio'}}{\prio}{\fctx'}$
and
$\stackaccepts{\sig}{\stack}{\kwcvt{\cvname}{\prio'}}{\fctx'}{\prios_1}
{\fctx''}{\prio}$
and
$\forall \prio_0. \nple{\prio_2}{\prio_0} \rightarrow
  \permof{\fctx}{\cvname}{\prio_0} = \fowned$ and
  $\forall \prio_0. \nple{\prio_2}{\prio_0} \rightarrow
  \permof{\fctx'}{\cvname}{\prio_0} = \fnone$ and
  $\forall \prio_0. \ple{\prio_2}{\prio_0} \rightarrow
  \permof{\fctx'}{\cvname}{\prio_0} = \permof{\fctx}{\cvname}{\prio_0}$.
By~\rulename{CV} and~\rulename{KS-PushInstr},
$\sstyped{\sig,\sigcvtype{\cvname}{\prio'}}{\fctx'}
{\sreturn{\stack}{\kwcv{\cvname}}}{\prio}{\fctx''}$.
\item No new weak edges or sync edges are created,
so well-formedness is preserved.
\item No items are added to~$\waiters$ or~$\locked$.
\item $\cvname$ is not a mutex, so this condition is unchanged.
\item $\cvname$ is not a mutex, so this condition doesn't apply.
\item $\cvname$ is not a mutex, so this condition doesn't apply.
\item No items are added to~$\waiters$.
\item We now have~$\sigcvtype{\cvname}{\prio'}$ in~$\sig$, so
we must check the conditions for all~$u'$ such that~$\not\sanc{u'}{u}$.
Let~$u_e$ be a descendant of~$u'$ that is also the last vertex of a thread
in~$\graph$.
If~$u_e = u$,
then its priority will be~$\prio$ and it will be typed with $\fctx''$.
For all~$\prio''$ such that~$\nple{\prio'}{\prio''}$, we have
$\permof{\fctx''}{\cvname}{\prio''} = \fnone$, as required.
If~$u_e$ is not a descendant of~$u$,
then it must have permission~$\fnone$ for~$\cvname$ at any priority~$\prio''$
such that~$\nple{\prio'}{\prio''}$ because
$\permof{\fctx}{\cvname}{\prio''} = \fowned$.
\item If~$\locked(\mutname') = (a, \_, \_)$,
then we have~$\prioceil{\stack}{\mutname'}{a}{\prio}{\stack^p}$
for some~$\stack^p$. This remains unchanged.
\end{enumerate}

\item \rulename{NewMutex}
Then~$\tp = \cthread{a}{\prio}{\sig}
                {\ssend{\stack}{\kwnewmutex{\prio'}}}
              \tpcp \tp_0$
and~$\tp' = \cthread{a}{\prio}{\sig, \sigcvtype{\mutname}{\prio'}}
        {\sreturn{\stack}{\kwmutex{\mutname}}}
        \tpcp \tp_0$.
\begin{enumerate}
\item By inversion,
$\ityped{\sig}{\ctx}{\fctx}{\kwnewmutex{\prio'}}{\kwmutext{\prio'}}{\prio}{\fctx}$
and~$\stackaccepts{\sig}{\stack}{\kwmutext{\prio'}}{\fctx}{\prios_1}{\fctx'}{\prio}$.
By~\rulename{Mutex} and~\rulename{KS-PushInstr},
$\sstyped{\sig,\sigcvtype{\mutname}{\prio'}}{\fctx}
{\sreturn{\stack}{\kwmutex{\mutname}}}{\prio}{\fctx'}$.
\item  No new sync edges are created, so well-formedness is preserved.
\item No items are added to~$\waiters$ or~$\locked$.
\item $\locked$, $\waiters$ are unchanged.
\item $\mutname$ is fresh, so~$\sigcvtype{\mutname}{\prio'}$ is the only
entry for~$\mutname$.
\item No items are added to~$\waiters$.
\item No items are added to~$\waiters$.
\item Permissions are not split or reassigned by this step.
\item If~$\locked(\mutname') = (a, \_, \_)$,
then we have~$\prioceil{\stack}{\mutname'}{a}{\prio}{\stack^p}$
for some~$\stack^p$. This remains unchanged.
\end{enumerate}

\item \rulename{Let1}
Then~$\tp = \cthread{a}{\prio}{\sig}
               {\scsend{\stack}{\kwlet{x}{\instr}{s}}{v}} \tpcp \tp_0$
and~$\tp' = \cthread{a}{\prio}{\sig}
               {\ssend{\scp{\stack}{\kwlet{x}{\shole}{s}}}{\instr}} \tpcp \tp_0$.
\begin{enumerate}
\item By inversion,
$\stackacceptsc{\sig}{\stack}{\fctx_3}{\prios_1}{\fctx_4}{\prio}$
and for all~$\prio' \in \prios_1$,
we have~$\ityped{\sig}{\ectx}{\fctx}{\instr}{\tau}{\prio'}{\fctx_2}$
and~$\styped{\sig}{\hastype{x}{\tau}}{\fctx_2}{s}{\prio'}{\fctx_3}$.
By~\rulename{KS-Let},
$\stackaccepts{\sig}{\scp{\stack}{\kwlet{x}{\shole}{s}}}{\tau}
{\fctx_2}{\prios_1}{\fctx_4}{\prio}$.
By~\rulename{KS-PopInstr},
$\sstyped{\sig}{\fctx}{\ssend{\scp{\stack}{\kwlet{x}{\shole}{s}}}{\instr}}
{\prio}{\fctx_4}$.
\item  No new sync edges are created, so well-formedness is preserved.
\item No items are added to~$\waiters$ or~$\locked$.
\item $\locked$ is unchanged.
\item Signatures are not changed.
\item No items are added to~$\waiters$.
\item No items are added to~$\waiters$.
\item Permissions are not split or reassigned by this step.
\item If~$\locked(\mutname) = (a, \_, \_)$,
then we have~$\prioceil{\stack}{\mutname}{a}{\prio}{\stack^p}$
for some~$\stack^p$. This remains unchanged.
\end{enumerate}.

\item \rulename{Let2}
Then~$\tp = \cthread{a}{\prio}{\sig}
               {\sreturn{\scp{\stack}{\kwlet{x}{\shole}{s}}}{v}} \tpcp \tp_0$
and~$\tp' = \cthread{a}{\prio}{\sig}
               {\scsend{\stack}{[v/x]s}} \tpcp \tp_0$.
\begin{enumerate}
\item By inversion,
$\stackacceptsc{\sig}{\stack}{\fctx''}{\prios_1}{\fctx'}{\prio}$
and~$\etyped{\sig}{\ectx}{v}{\tau}$
and for all~$\prio' \in \prios_1$,
we have~$\styped{\sig}{\hastype{x}{\tau}}{\fctx}{s}{\prio'}{\fctx''}$.
By substitution,~$\styped{\sig}{\ectx}{\fctx}{[v/x]s}{\prio'}{\fctx''}$.
By~\rulename{KS-PopStmt},
$\sstyped{\sig}{\fctx}{\scsend{\stack}{[v/x]s}}{\prio}{\fctx'}$.
\item  No new sync edges are created, so well-formedness is preserved.
\item No items are added to~$\waiters$ or~$\locked$.
\item $\locked$, $\waiters$ are unchanged.
\item Signatures are not changed.
\item No items are added to~$\waiters$.
\item No items are added to~$\waiters$.
\item Permissions are not split or reassigned by this step.
\item If~$\locked(\mutname) = (a, \_, \_)$,
then we have~$\prioceil{\stack}{\mutname}{a}{\prio}{\stack^p}$
for some~$\stack^p$. This remains unchanged.
\end{enumerate}.

\item \rulename{WithLockS1}
Then~$\tp = \cthread{a}{\prio}{\sig}
               {\scsend{\stack}{\kwwithlock{\kwmutex{\mutname}}{s}}}
              \tpcp \tp_0$
and~$\tp' = \cthread{a}{\prio}{\sig}
                {\scsend{\scp{\stack}{\kwwithlocks{\kwmutex{\mutname}}{\shole}}}
                  {
                    s}}
        \tpcp \tp_0$.
and~$\locked' = \waitupd{\locked}{\mutname}{(a, u_1, u_2)}$.
\begin{enumerate}
\item By inversion,
$\stackacceptsc{\sig}{\stack}{\fctx''}{\prios_1}{\fctx'}{\prio}$
and for all~$\prio' \in \prios_1$,
we have~$\styped{\sig}{\ectx}{\fctx}{\kwwithlock{\kwmutex{\mutname}}{s}}{\prio'}{\fctx''}$
By inversion on~\rulename{WithLock},
we have~$\sigcvtype{\mutname}{\prio_\mutname} \in \sig$
and~$\styped{\sig}{\ectx}{\fctx}{s}{\prio'}{\fctx''}$
and~$\styped{\sig}{\ectx}{\fctx}{s}{\prio_\mutname}{\fctx''}$.
By~\rulename{KS-WithLockS},
$\stackacceptsc{\sig}{\scp{\stack}{\kwwithlocks{\kwmutex{\mutname}}{\shole}}}
{\fctx''}{\prios_1 \cup \{\prio_\mutname\}}{\fctx'}{\prio}$.
By~\rulename{Seq} and~\rulename{KS-PopStmt},
$\sstyped{\sig}{\fctx}
{\scsend{\scp{\stack}{\kwwithlocks{\kwmutex{\mutname}}{\shole}}}
  {
    s}}{\prio}{\fctx'}$.
\item No sync edges are added, so well-formedness is preserved.
\item By assumption, $\waiters(\mutname) = \waitemp$.
\item $\locked'(\mutname) \neq \lunlocked$.
\item Signatures are not changed.
\item No items are added to~$\waiters$.
\item $\mutname$ is not a CV, so the invariant is preserved.
\item Permissions are not split or reassigned by this step.
\item Suppose~$\locked(\mutname') = (a, \_, \_)$.
If~$\mutname = \mutname'$,
then we have
\[\prioceil{\scp{\stack}{\kwwithlocks{\kwmutex{\mutname}}{\shole}}}
{\mutname}{a}{\prio}{\kwwithlocksp{\kwmutex{\mutname}}{\shole}{a}{\prio}}\]
Otherwise, the invariant is unchanged.
\end{enumerate}

\item \rulename{WithLockS2}
Then~$\tp = \cthread{a}{\prio}{\sig}
               {\scsend{\stack}{\kwwithlock{\kwmutex{\mutname}}{s}}}
              \tpcp \tp'$
and~$\waiters' =\waitupd{\waiters}{\mutname}
                {\waitcons{(a, (u_1, u_2), \prio, \sig, \stackstate')}{\waiters(\mutname)}}$
                where~$\stackstate' =
             \scsend{\scp{\stack}{\kwwithlocks{\kwmutex{\mutname}}{\shole}}}
                    {
                      s}$
and~$\locked(\mutname) = (a', u_1', u_2')$
and~$\ple{\prio}{\uprio{\graph}{a'}}$.
\begin{enumerate}
\item
By inversion,
$\stackacceptsc{\sig}{\stack}{\fctx''}{\prios_1}{\fctx'}{\prio}$
and for all~$\prio' \in \prios_1$,
we have~$\styped{\sig}{\ectx}{\fctx}{\kwwithlock{\kwmutex{\mutname}}{s}}{\prio'}{\fctx''}$
By inversion on~\rulename{WithLock},
we have~$\sigcvtype{\mutname}{\prio_\mutname} \in \sig$
and~$\styped{\sig}{\ectx}{\fctx}{s}{\prio'}{\fctx''}$
and~$\styped{\sig}{\ectx}{\fctx}{s}{\prio_\mutname}{\fctx''}$.
By~\rulename{KS-WithLockS},
$\stackacceptsc{\sig}{\scp{\stack}{\kwwithlocks{\kwmutex{\mutname}}{\shole}}}
{\fctx''}{\prios_1 \cup \{\prio_\mutname\}}{\fctx'}{\prio}$.
By~\rulename{Seq} and~\rulename{KS-PopStmt},
$\sstyped{\sig}{\fctx}
{\scsend{\scp{\stack}{\kwwithlocks{\kwmutex{\mutname}}{\shole}}}
  {
    s}}{\prio}{\fctx'}$.
\item No sync edges are added, so well-formedness is preserved.
\item By assumption, for all~$(c, (u_1'', u_2''), \prio_c,
\sig_c, s_c) \in \waiters(\mutname)$, 
we have a weak edge~$(u_1', u_2'')$ and~$\ple{\prio_c}{\uprio{\graph}{a'}}$.
We add a weak edge~$(u_1', u_2)$ to~$\graph'$, so the invariant on weak edges
is preserved.
Because~$\ple{\prio}{\uprio{\graph}{a'}}$, the invariant on priorities is
also preserved.
We have
\[\prioceil{\scp{\stack}{\kwwithlocks{\kwmutex{\mutname}}{\shole}}}
{\mutname}{a}{\prio}{\kwwithlocksp{\kwmutex{\mutname}}{\shole}{a}{\prio}}\]
so this requirement is also preserved.
\item $\locked'(\mutname) \neq \lunlocked$.
\item Signatures are not changed.
\item By inversion on \rulename{WithLock} and \rulename{Mutex},
$\sigcvtype{\mutname}{\prio_\mutname} \in \sig$
and~$\ple{\prio}{\prio_\mutname}$.

\item $\mutname$ is not a CV, so the invariant is preserved.
\item Permissions are not split or reassigned by this step.
\item If~$\locked(\mutname') = (a, \_, \_)$,
then we have~$\prioceil{\stack}{\mutname'}{a}{\prio}{\stack^p}$
for some~$\stack^p$. This remains unchanged.
\end{enumerate}

\item \rulename{WithLockS3}
Then~$\tp = \cthread{a}{\prio}{\sig}
               {\scsend{\stack}{\kwwithlock{\kwmutex{\mutname}}{s}}}
               \tpcp \cthread{b}{\prio'}{\sig'}{\stackstate_b}
                \tpcp \tp
              \tpcp \tp_0$
and~$\tp' = \cthread{b'}{\prio_{\mutname}}{\sig'}
        {\stackstate''}
        \tpcp \tp_0$
and~$\waiters' = \waitupd{\waiters}{\mutname}
                {\waitcons{(a, (u_1, u_2), \prio, \sig, \stackstate')}{\waiters(\mutname)}}$
where~$\stackstate' =
             \scsend{\scp{\stack}{\kwwithlocks{\kwmutex{\mutname}}{\shole}}}
                    {
                      s}$
and~$\locked(\mutname) = (a', u_1', u_2')$.
\begin{enumerate}
\item By inversion,
$\stackacceptsc{\sig}{\stack}{\fctx''}{\prios_1}{\fctx'}{\prio}$
and for all~$\prio' \in \prios_1$,
we have~$\styped{\sig}{\ectx}{\fctx}{\kwwithlock{\kwmutex{\mutname}}{s}}{\prio'}{\fctx''}$
By inversion on~\rulename{WithLock},
we have~$\sigcvtype{\mutname}{\prio_\mutname} \in \sig$
and~$\styped{\sig}{\ectx}{\fctx}{s}{\prio'}{\fctx''}$
and~$\styped{\sig}{\ectx}{\fctx}{s}{\prio_\mutname}{\fctx''}$.
By~\rulename{KS-WithLockS},
$\stackacceptsc{\sig}{\scp{\stack}{\kwwithlocks{\kwmutex{\mutname}}{\shole}}}
{\fctx''}{\prios_1 \cup \{\prio_\mutname\}}{\fctx'}{\prio}$.
By~\rulename{Seq} and~\rulename{KS-PopStmt},
$\sstyped{\sig}{\fctx}
{\scsend{\scp{\stack}{\kwwithlocks{\kwmutex{\mutname}}{\shole}}}
  {
    s}}{\prio}{\fctx'}$.
Also by inversion,~$\sstyped{\sig'}{\fctx_b}{\stackstate_b}{\prio'}{\fctx_b'}$.
By Lemma~\ref{lem:prio-change-pres},
$\sstyped{\sig'}{\fctx_b}{\stackstate''}{\prio_\mutname}{\fctx_b'}$.
\item No sync edges are added, so well-formedness is preserved.
\item By assumption, for all~$(c, (u_{1c}, u_{2c}), \prio_c,
\sig_c, s_c) \in \waiters(\mutname)$,
$\ple{\prio_c}{\ple{\uprio{\graph}{a'}}{\prio_{\mutname}}}$.
By inversion,~$\ple{\prio}{\prio_{\mutname}}$.
We add all of the relevant weak edges from~$u_1''$.
We have~$\prioceil{\scp{\stack}{\kwwithlocks{\kwmutex{\mutname}}{\shole}}}
{\mutname}{a}{\prio}{\kwwithlocksp{\kwmutex{\mutname}}{\shole}{a}{\prio}}$,
so this requirement is also preserved.
\item $\locked'(\mutname) \neq \lunlocked$.
\item Signatures are not changed.
\item By inversion on \rulename{WithLock} and \rulename{Mutex},
$\sigcvtype{\mutname}{\prio_\mutname} \in \sig$
and~$\ple{\prio}{\prio_\mutname}$.
\item $\mutname$ is not a CV, so the invariant is preserved.
\item Permissions are not split or reassigned by this step.
\item If~$\locked(\mutname') = (a, \_, \_)$,
then we have~$\prioceil{\stack}{\mutname'}{a}{\prio}{\stack^p}$
for some~$\stack^p$. This remains unchanged.
The priority of~$b'$ is~$\prio_\mutname$, so this property no longer applies
to~$b'$.
\end{enumerate}

\item \rulename{WithLockS4}
Then~$\tp = \cthread{a}{\prio}{\sig}
               {\scsend{\stack}{\kwwithlock{\kwmutex{\mutname}}{s}}}
                \tpcp \tp
              \tpcp \tp'$
and~$\waiters' = \waitupd{\waitupd{\waiters}{\cvmutname'}
               {\waitcons{\ell_1}{\waitcons{(b', \_, \prio_\mutname, \sig', \stackstate'')}
               {\ell_2}}}}
               {\mutname}
                {\waitcons{(a, (u_1, u_2), \prio, \sig, \stackstate')}{\waiters(\mutname)}}$
where~$\waiters(\cvmutname') = 
\waitcons{\ell_1}{\waitcons{(b, \_, \prio', \sig', \stackstate_b)}
               {\ell_2}}$
and $\stackstate' =
             \scsend{\scp{\stack}{\kwwithlocks{\kwmutex{\mutname}}{\shole}}}
                    {
                      s}$
and~$\locked(\mutname) = (a', u_1', u_2')$.
\begin{enumerate}
\item By inversion,
$\stackacceptsc{\sig}{\stack}{\fctx''}{\prios_1}{\fctx'}{\prio}$
and for all~$\prio' \in \prios_1$,
we have~$\styped{\sig}{\ectx}{\fctx}{\kwwithlock{\kwmutex{\mutname}}{s}}{\prio'}{\fctx''}$
By inversion on~\rulename{WithLock},
we have~$\sigcvtype{\mutname}{\prio_\mutname} \in \sig$
and~$\styped{\sig}{\ectx}{\fctx}{s}{\prio'}{\fctx''}$
and~$\styped{\sig}{\ectx}{\fctx}{s}{\prio_\mutname}{\fctx''}$.
By~\rulename{KS-WithLockS},
$\stackacceptsc{\sig}{\scp{\stack}{\kwwithlocks{\kwmutex{\mutname}}{\shole}}}
{\fctx''}{\prios_1 \cup \{\prio_\mutname\}}{\fctx'}{\prio}$.
By~\rulename{Seq} and~\rulename{KS-PopStmt},
$\sstyped{\sig}{\fctx}
{\scsend{\scp{\stack}{\kwwithlocks{\kwmutex{\mutname}}{\shole}}}
  {
    s}}{\prio}{\fctx'}$.
Also by inversion,~$\sstyped{\sig'}{\fctx_b}{\stackstate_b}{\prio'}{\fctx_b'}$.
By Lemma~\ref{lem:prio-change-pres},
$\sstyped{\sig'}{\fctx_b}{\stackstate''}{\prio_\mutname}{\fctx_b'}$.
\item No sync edges are added, so well-formedness is preserved.
\item By assumption, for all~$(c, (u_{1c}, u_{2c}), \prio_c,
\sig_c, s_c) \in \waiters(\mutname)$,
$\ple{\prio_c}{\ple{\uprio{\graph}{a'}}{\prio_{\mutname}}}$.
By inversion,~$\ple{\prio}{\prio_{\mutname}}$.
We add all of the relevant weak edges from~$u_1''$.
We have~$\prioceil{\scp{\stack}{\kwwithlocks{\kwmutex{\mutname}}{\shole}}}
{\mutname}{a}{\prio}{\kwwithlocksp{\kwmutex{\mutname}}{\shole}{a}{\prio}}$,
so this requirement is also preserved.
\item $\locked'(\mutname) \neq \lunlocked$.
\item Signatures are not changed.
\item By inversion on \rulename{WithLock} and \rulename{Mutex},
$\sigcvtype{\mutname}{\prio_\mutname} \in \sig$
and~$\ple{\prio}{\prio_\mutname}$.
\item $\mutname$ is not a CV, so the invariant is preserved.
\item Permissions are not split or reassigned by this step.
\item If~$\locked(\mutname') = (a, \_, \_)$,
then we have~$\prioceil{\stack}{\mutname'}{a}{\prio}{\stack^p}$
for some~$\stack^p$. This remains unchanged.
The priority of~$b'$ is~$\prio_\mutname$, so this property no longer applies
to~$b'$.
\end{enumerate}

\item \rulename{WithLockE1}
Then~$\tp = \cthread{a}{\prio}{\sig}
               {\screturn{\scp{\stack}{\kwwithlocks{\kwmutex{\mutname}}{\shole}}}}
              \tpcp \tp_0$
and~$\tp' = \cthread{a}{\prio}{\sig}{\screturn{\stack}} \tpcp
                \tp_0$.
\begin{enumerate}
\item By inversion,
$\stackacceptsc{\sig}{\stack}{\fctx}{\prios_1}{\fctx'}{\prio}$.
By~\rulename{KS-PushStmt},
$\sstyped{\sig}{\fctx}{\screturn{\stack}}{\prio}{\fctx'}$.
\item No sync edges are added, so well-formedness is preserved.
\item $\locked'(\mutname) = \lunlocked$.
\item $\waiters' = \waiters$ and, by assumption,~$\waiters(\mutname) = \waitemp$.
\item Signatures are not changed.
\item No items are added to~$\waiters$.
\item $\mutname$ is not a CV, so the invariant is preserved.
\item Permissions are not split or reassigned by this step.
\item If~$\locked(\mutname') = (a, \_, \_)$,
then we have~$\prioceil{\stack}{\mutname'}{a}{\prio}{\stack^p}$
for some~$\stack^p$. This remains unchanged.
\end{enumerate}

\item \rulename{WithLockE2}
Then~$\tp = \cthread{a}{\prio}{\sig}
        {\screturn{\scp{\stack}{\kwwithlocks{\kwmutex{\mutname}}{\shole}}}}
              \tpcp \tp_0$
and~$\tp' = \cthread{a}{\prio}{\sig}{\screturn{\stack}} \tpcp
        \cthread{b}{\prio_b}{\sig_b}{\stackstate_b}\tpcp
                \tp_0$.
\begin{enumerate}
\item By inversion,
$\stackacceptsc{\sig}{\stack}{\fctx}{\prios_1}{\fctx'}{\prio}$.
By~\rulename{KS-PushStmt},
$\sstyped{\sig}{\fctx}{\screturn{\stack}}{\prio}{\fctx'}$.
We also have~$\sstyped{\sig_b}{\fctx_b}{\stackstate_b}{\prio_b}{\fctx_b'}$.
\item With the added sync edge, we now have~$\sanc{u}{u_2}$.
By inversion on \rulename{Unlock} and \rulename{Mutex},
$\sigcvtype{\mutname}{\prio} \in \sig$.
By invariant~\ref{inv:waiters-prio},~$\ple{\prio_b}{\prio}$.
If~$\sanc{u_0}{u}$ and~$u_0$ is not
an ancestor of the start of~$u$'s thread,
then~$\ple{\uprio{\graph'}{u}}{\uprio{\graph'}{u_0}}$ by well-formedness,
so~$\ple{\prio_b}{\uprio{\graph'}{u_0}}$.
This addresses the first component of well-formedness.
We also have the strong edge~$(u_a, u_b)$, where~$\locked(\mutname) =
(a, u_a, u_b)$, and~$\wanc{u_a}{u_2}$ by assumption.
Also by assumption,~$\edgeweak{u_a}{u_2} \in \graph$, so this satisfies the
second component of well-formedness.
\item $\locked'(\mutname) = (b, u_1, u_2)$ and we add all of the requisite
edges to~$\graph'$.
By assumption,
$\forall (\_, \_, \prio_c, \_, \_) \in \ell_1\ell_2 = \waiters'(\mutname),
\ple{\prio_c}{\prio_b}$,
so the invariant on priorities is preserved.
\item $\locked'(\mutname) \neq \lunlocked$.
\item Signatures are not changed.
\item No items are added to~$\waiters$.
\item $\mutname$ is not a CV, so the invariant is preserved.
\item Permissions are not split or reassigned by this step.
\item If~$\locked'(\mutname') = (a, \_, \_)$, then~$\mutname' \neq \mutname$.
We have~$\prioceil{\stack}{\mutname'}{a}{\prio}{\stack^p}$
for some~$\stack^p$. This remains unchanged.
In the case of~$\locked'(\mutname) = (b, u_1, u_2)$,
we have~$\prioceil{\stackstate_b}{\mutname'}{a}{\prio}{\stackstate^p}$
by invariant~\ref{inv:waiters}.
\end{enumerate}

\item \rulename{WithLockE3}
Then~$\tp = \cthread{a}{\prio}{\sig}
        {\screturn{\scp{\stack}{\kwwithlocksp{\kwmutex{\mutname}}{\shole}{a'}{\prio'}}}}
              \tpcp \tp_0$
and~$\tp' = \cthread{a'}{\prio'}{\sig}{\screturn{\stack}} \tpcp
                \tp_0$.
\begin{enumerate}
\item By inversion,
$\stackacceptsc{\sig}{\stack}{\fctx}{\prios_1}{\fctx'}{\prio'}$.
By~\rulename{KS-PushStmt},
$\sstyped{\sig}{\fctx}{\screturn{\stack}}{\prio'}{\fctx'}$.

\item
The edge~$(u, u')$ preserves the well-formedness because,
by assumption,~$\ple{\uprio{\graph'}{u'} = \prio'}{\prio = \uprio{\graph'}{u}}$.

\item $\locked'(\mutname) = \lunlocked$.
\item $\waiters' = \waiters$ and, by assumption,~$\waiters(\mutname) = \waitemp$.
\item Signatures are not changed.
\item No items are added to~$\waiters$.
\item $\mutname$ is not a CV, so the invariant is preserved.
\item Permissions are not split or reassigned by this step.
\item If~$\locked(\mutname') = (a, \_, \_)$,
then we have~$\prioceil{\stack}{\mutname'}{a}{\prio}{\stack^p}$
for some~$\stack^p$. This remains unchanged.
\end{enumerate}

\item \rulename{WithLockE4}
Then~$\tp = \cthread{a}{\prio}{\sig}
        {\screturn{\scp{\stack}{\kwwithlocksp{\kwmutex{\mutname}}{\shole}{a'}{\prio'}}}}
              \tpcp \tp_0$
and~$\tp' = \cthread{a'}{\prio'}{\sig}{\screturn{\stack}} \tpcp
        \cthread{b}{\prio_b}{\sig_b}{\stackstate_b} \tpcp
                \tp_0$.
\begin{enumerate}
\item By inversion,
$\stackacceptsc{\sig}{\stack}{\fctx}{\prios_1}{\fctx'}{\prio'}$.
By~\rulename{KS-PushStmt},
$\sstyped{\sig}{\fctx}{\screturn{\stack}}{\prio'}{\fctx'}$.
We also have~$\sstyped{\sig_b}{\fctx_b}{\stackstate_b}{\prio_b}{\fctx_b'}$.
\item With the added sync edge~$(u, u_2)$, we now have~$\sanc{u}{u_2}$.
By inversion on \rulename{Unlock} and \rulename{Mutex},
$\sigcvtype{\mutname}{\prio} \in \sig$.
By assumption,~$\ple{\prio_b}{\prio}$. If~$\sanc{u_0}{u}$ and~$u_0$ is not
an ancestor of the start of~$u$'s thread,
then~$\ple{\uprio{\graph'}{u}}{\uprio{\graph'}{u_0}}$ by well-formedness,
so~$\ple{\prio_b}{\uprio{\graph'}{u_0}}$.
This addresses the first component of well-formedness.
The edge~$(u, u')$ also preserves the invariant because,
by assumption~$\ple{\uprio{\graph'}{u'} = \prio'}{\prio = \uprio{\graph'}{u}}$,
and the same argument as above applies to strong ancestors of~$u$.
We also have the strong edge~$(u_a, u_b)$, where~$\locked(\mutname) =
(a, u_a, u_b)$, and~$\wanc{u_a}{u_2}$ by assumption.
Also by assumption,~$\edgeweak{u_a}{u_2} \in \graph$, so this satisfies the
second component of well-formedness.
\item $\locked'(\mutname) = (b, u_1, u_2)$ and we add all of the requisite
edges to~$\graph'$.
By assumption,
$\forall (\_, \_, \prio_c, \_, \_) \in \ell_1\ell_2 = \waiters'(\mutname),
\ple{\prio_c}{\prio_b}$,
so the invariant on priorities is preserved.
\item $\locked'(\mutname) \neq \lunlocked$.
\item Signatures are not changed.
\item No items are added to~$\waiters$.
\item $\mutname$ is not a CV, so the invariant is preserved.
\item Permissions are not split or reassigned by this step.
\item If~$\locked'(\mutname') = (a, \_, \_)$, then~$\mutname' \neq \mutname$.
We have~$\prioceil{\stack}{\mutname'}{a}{\prio}{\stack^p}$
for some~$\stack^p$. This remains unchanged.
In the case of~$\locked'(\mutname) = (b, u_1, u_2)$,
we have~$\prioceil{\stackstate_b}{\mutname'}{a}{\prio}{\stackstate^p}$
by assumption.
\end{enumerate}

\item \rulename{If1}.
Then~$\tp = \cthread{a}{\prio}{\sig}{\scsend{\stack}{\kwif{\kwn}{s_1}{s_2}}} \tpcp \tp_0$
and~$\tp' = \cthread{a}{\prio}{\sig}{\scsend{\stack}{s_1}} \tpcp \tp_0$.
\begin{enumerate}
\item By inversion,
$\stackacceptsc{\sig}{\stack}{\fctx''}{\prios_1}{\fctx'}{\prio}$
and for all~$\prio' \in \prios_1$,
we have~$\styped{\sig}{\ectx}{\fctx}{\kwif{\kwn}{s_1}{s_2}}{\prio'}{\fctx''}$.
By inversion on~\rulename{If},
$\styped{\sig}{\ectx}{\fctx}{s_1}{\prio'}{\fctx''}$.
By \rulename~\rulename{KS-PopStmt},
$\sstyped{\sig}{\fctx}{\scsend{\stack}{s_1}}{\prio}{\fctx'}$.
\item No sync edges are added, so well-formedness is preserved.
\item No items are added to~$\waiters$ or~$\locked$.
\item $\locked$, $\waiters$ are not changed.
\item No signatures are altered.
\item No items are added to~$\waiters$.
\item No items are added to~$\waiters$.
\item Permissions are not split or reassigned by this step.
\item If~$\locked(\mutname) = (a, \_, \_)$,
then we have~$\prioceil{\stack}{\mutname}{a}{\prio}{\stack^p}$
for some~$\stack^p$. This remains unchanged.
\end{enumerate}

\item \rulename{If2}. Similar to \rulename{If1}.

\item \rulename{While}.
Then~$\tp = \cthread{a}{\prio}{\sig}{\scsend{\stack}{\kwwhile{v}{s}}} \tpcp \tp_0$
and~$\tp' = \cthread{a}{\prio}{\sig}
{\scsend{\stack}{\kwif{v}{(s; \kwwhile{v}{s})}{\kwskip}}} \tpcp \tp_0$.
\begin{enumerate}
\item By inversion,
$\stackacceptsc{\sig}{\stack}{\fctx}{\prios_1}{\fctx'}{\prio}$
and for all~$\prio' \in \prios_1$,
we have~$\styped{\sig}{\ectx}{\fctx}{\kwwhile{v}{s}}{\prio'}{\fctx}$.
By inversion on~\rulename{While},~$\etyped{\sig}{\ectx}{v}{\kwnat}$
and $\styped{\sig}{\ectx}{\fctx}{s}{\prio}{\fctx}$.
By~\rulename{Seq},
$\styped{\sig}{\ectx}{\fctx}{s; \kwwhile{v}{s}}{\prio}{\fctx}$.
By~\rulename{Skip} and~\rulename{If},
$\styped{\sig}{\ectx}{\fctx}{\kwif{v}{(s; \kwwhile{v}{s})}{\kwskip}}{\prio}{\fctx}$.
By~\rulename{KS-PopStmt},
$\sstyped{\sig}{\fctx}{\scsend{\stack}{\kwif{v}{(s; \kwwhile{v}{s})}{\kwskip}}}
{\prio}{\fctx'}$.
\item No sync edges are added, so well-formedness is preserved.
\item No items are added to~$\waiters$ or~$\locked$.
\item $\locked$, $\waiters$ are not changed.
\item No signatures are altered.
\item No items are added to~$\waiters$.
\item No items are added to~$\waiters$.
\item Permissions are not split or reassigned by this step.
\item If~$\locked(\mutname) = (a, \_, \_)$,
then we have~$\prioceil{\stack}{\mutname}{a}{\prio}{\stack^p}$
for some~$\stack^p$. This remains unchanged.
\end{enumerate}

\item \rulename{Skip}.
\begin{enumerate}
\item By inversion,
$\stackacceptsc{\sig}{\stack}{\fctx}{\prios_1}{\fctx'}{\prio}$.
By~\rulename{KS-PushStmt},
$\sstyped{\sig}{\fctx}{\screturn{\stack}}{\prio}{\fctx'}$.
\item No sync edges are added, so well-formedness is preserved.
\item No items are added to~$\waiters$ or~$\locked$.
\item $\locked$, $\waiters$ are not changed.
\item No signatures are altered.
\item No items are added to~$\waiters$.
\item No items are added to~$\waiters$.
\item Permissions are not split or reassigned by this step.
\item If~$\locked(\mutname) = (a, \_, \_)$,
then we have~$\prioceil{\stack}{\mutname}{a}{\prio}{\stack^p}$
for some~$\stack^p$. This remains unchanged.
\end{enumerate}


\item \rulename{Seq1}.
Then~$\tp = \cthread{a}{\prio}{\sig}
               {\scsend{\stack}{s_1; s_2}} \tpcp \tp_0$
and~$\tp' = \cthread{a}{\prio}{\sig}
               {\scsend{\scp{\stack}{\shole; s_2}}{s_1}}\tpcp \tp_0$.
\begin{enumerate}
\item By inversion,
$\stackacceptsc{\sig}{\stack}{\fctx_2}{\prios_1}{\fctx'}{\prio}$
and for all~$\prio' \in \prios_1$,
we have~$\styped{\sig}{\ectx}{\fctx}{s_1}{\prio'}{\fctx_1}$
and~$\styped{\sig}{\ectx}{\fctx_1}{s_2}{\prio'}{\fctx_2}$.
By~\rulename{KS-Seq},
$\stackacceptsc{\sig}{\scp{\stack}{\shole; s_2}}{\fctx_1}{\prios_1}{\fctx'}{\prio}$.
By~\rulename{KS-PopStmt},
$\sstyped{\sig}{\fctx}{\scsend{\scp{\stack}{\shole; s_2}}{s_1}}{\prio}{\fctx'}$.
\item No sync edges are added, so well-formedness is preserved.
\item No items are added to~$\waiters$ or~$\locked$.
\item $\locked$, $\waiters$ are not changed.
\item No signatures are altered.
\item No items are added to~$\waiters$.
\item No items are added to~$\waiters$.
\item Permissions are not split or reassigned by this step.
\item If~$\locked(\mutname) = (a, \_, \_)$,
then we have~$\prioceil{\stack}{\mutname}{a}{\prio}{\stack^p}$
for some~$\stack^p$. This remains unchanged.
\end{enumerate}

\item \rulename{Seq2}.
Then~$\tp = \cthread{a}{\prio}{\sig}
               {\screturn{\scp{\stack}{\shole; s_2}}} \tpcp \tp_0$
and~$\tp' = \cthread{a}{\prio}{\sig}{\scsend{\stack}{s_2}} \tpcp \tp$.
\begin{enumerate}
\item By inversion,
$\stackacceptsc{\sig}{\stack}{\fctx''}{\prios_1}{\fctx'}{\prio}$
and for all~$\prio' \in \prios_1$,
we have~$\styped{\sig}{\ectx}{\fctx}{s_2}{\prio'}{\fctx''}$.
By~\rulename{KS-PopStmt},
$\sstyped{\sig}{\fctx}{\scsend{\stack}{s_2}}{\prio}{\fctx'}$.
\item No sync edges are added, so well-formedness is preserved.
\item No items are added to~$\waiters$ or~$\locked$.
\item $\locked$, $\waiters$ are not changed.
\item No signatures are altered.
\item No items are added to~$\waiters$.
\item No items are added to~$\waiters$.
\item Permissions are not split or reassigned by this step.
\item If~$\locked(\mutname) = (a, \_, \_)$,
then we have~$\prioceil{\stack}{\mutname}{a}{\prio}{\stack^p}$
for some~$\stack^p$. This remains unchanged.
\end{enumerate}

\end{itemize}
\else
A full proof is available in the supplementary material.
\fi
\end{proof}

Next, we prove a Progress result stating that if a configuration meets the
invariants, then every non-blocked thread can take a step according to the
dynamic semantics.

\begin{theorem}\label{thm:progress}
Suppose~$\lconfig{\tp}{\mem}{\waiters}{\locked}{\graph}$ meets the invariants
of Definition~\ref{def:invs}.
Then for all~$\cthread{a}{\prio}{\sig}{\stackstate} \in \tp$,
either~$\stackstate = \screturn{\estack}$ or
$\lconfig{\cthread{a}{\prio}{\sig}{\stackstate} \tpcp \tp_0}{\mem}{\waiters}{\locked}
{\graph}
\mstep
\lconfig{\tp'}{\mem'}{\waiters'}{\locked'}{\graph'}$
where~$\cthread{a}{\prio}{\sig}{\stackstate} \tpcp \tp_0 = \tp$.
\end{theorem}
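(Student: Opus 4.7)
The plan is to prove Progress by case analysis on the structure of the stack state $\stackstate$, with further case analysis on the instruction or statement being executed in the $\ssend{\stack}{\instr}$ and $\scsend{\stack}{s}$ cases. For each case I will exhibit the applicable rule of the dynamic semantics and show that the invariants of Definition~\ref{def:invs} establish all the side conditions of that rule. Most cases are essentially mechanical, so the core of the proposal is identifying which invariants are needed where.

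First I would handle the four shapes of $\stackstate$ at a high level. The state $\screturn{\estack}$ is the explicitly excluded terminal case. For $\sreturn{\scp{\stack_0}{\stframe}}{v}$ and $\screturn{\scp{\stack_0}{\stframe}}$ the applicable rule depends only on $\stframe$: frames of the form $\shole; s$, $\kwlet{x}{\shole}{s}$, and $\kwwithlocks{v}{\shole}$ each admit a matching pop/push rule, and the typing judgment for stacks in invariant~(\ref{inv:typed}) ensures that the ``expected'' shape of $\stackstate$ (value vs.\ return, and the type of the expected value) matches the frame. For $\ssend{\stack}{\instr}$, each instruction has a rule whose only nontrivial preconditions are well-typedness (e.g., that a dereferenced value is actually a reference cell with an entry in $\mem$, which follows from $\mtyped{\sig}{\mem}$) and freshness of new names; for \rulename{Signal1}/\rulename{Signal2} I simply split on whether $\waiters(\cvname)$ is empty.

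The main obstacle is the statement case $\scsend{\stack}{\kwwithlock{\kwmutex{\mutname}}{s}}$, which requires picking one of \rulename{WithLockS1}--\rulename{WithLockS4}. The choice is determined by $\locked(\mutname)$ and, in the held case, by the priority of the holder and whether the holder is active or blocked. If $\locked(\mutname) = \lunlocked$, \rulename{WithLockS1} applies directly. Otherwise $\locked(\mutname) = (a', u_1', u_2')$, and I split on whether $\ple{\prio}{\uprio{\graph}{a'}}$: if so, \rulename{WithLockS2} applies and the new weak edge is consistent with invariant~(\ref{inv:waiters}). If not, I need to promote $a'$ to the priority ceiling $\prio_\mutname$; here invariant~(\ref{inv:can-promote}) is essential, as it guarantees that $a'$ appears in $\tp$ or in some $\waiters(\cvmutname')$ with a stack state $\stackstate_{a'}$ satisfying $\prioceil{\stackstate_{a'}}{\mutname}{a'}{\uprio{\graph}{a'}}{\stackstate''}$, so exactly one of \rulename{WithLockS3} or \rulename{WithLockS4} applies depending on where $a'$ resides. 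Invariant~(\ref{inv:mutex-prio}) ensures the $\prio_\mutname$ used in the promoted frame is unambiguous.

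The remaining nontrivial cases are the end-of-critical-section frames $\screturn{\scp{\stack}{\kwwithlocks{\kwmutex{\mutname}}{\shole}}}$ and $\screturn{\scp{\stack}{\kwwithlocksp{\kwmutex{\mutname}}{\shole}{a'}{\prio'}}}$: for each, I split on whether $\waiters(\mutname) = \waitemp$, choosing \rulename{WithLockE1}/\rulename{WithLockE3} in the empty case and \rulename{WithLockE2}/\rulename{WithLockE4} otherwise, selecting the highest-priority waiter $b$ so that the premise $\forall (\_,\_,\prio_c,\_,\_)\in \ell_1\ell_2.\ \ple{\prio_c}{\prio_b}$ is satisfied; such a $b$ exists since $\prios$ is totally ordered. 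The remaining statement forms ($\kwlet$, $\kwif$, $\kwwhile$, $\kwskip$, sequential composition) step unconditionally to a successor statement, and typing guarantees (via invariant~(\ref{inv:typed})) that the scrutinee of $\kwif$ is indeed a natural number, so either \rulename{If1} or \rulename{If2} applies. Together these cases exhaust the possibilities for $\stackstate$, establishing the theorem.
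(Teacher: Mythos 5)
Your proposal is correct and follows essentially the same route as the paper's proof: a case analysis driven by the shape of the stack state (which the paper organizes as induction on the stack-state typing derivation, but the rules are syntax-directed so the decompositions coincide), invoking memory typing for dereferences, invariant~(\ref{inv:mutex-prio}) and~(\ref{inv:can-promote}) to discharge the \rulename{WithLockS3}/\rulename{WithLockS4} case, and the emptiness of $\waiters(\mutname)$ to choose among the unlock rules. The only detail you elide is that invariant~(\ref{inv:can-promote}) yields the $\prioceil{\stackstate_{a'}}{\mutname}{a'}{\cdot}{\stackstate''}$ judgment only when the holder's priority differs from $\prio_\mutname$, which the paper discharges from $\ple{\prio}{\prio_\mutname}$ and $\nple{\prio}{\uprio{\graph}{a'}}$ -- a one-line gap that your cited invariants already cover.
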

\begin{proof}
By the invariants and inversion on \rulename{Global}, we have
$\sstyped{\sig}{\fctx}{\stackstate}{\prio}{\fctx'}$.
Proceed by induction on this derivation.
A full proof is available in the supplementary material.
\iffull
\begin{itemize}
\item \rulename{KS-PopInstr}.
Then~$\stackstate = \ssend{\stack}{\instr}$
and~$\stackaccepts{\sig}{\stack}{\tau}{\fctx''}{\prios_1}{\fctx'}{\prio}$
and~$\ityped{\sig}{\ectx}{\fctx}{\instr}{\tau}{\prio}{\fctx''}$.
Proceed by nested induction on the latter derivation.
\begin{itemize}
\item \rulename{Spawn}. Apply \rulename{Spawn}.
\item \rulename{NewRef}. Apply \rulename{NewRef}.
\item \rulename{Deref}. Then because the context is empty, by inversion
on~\rulename{RefVal}, $\instr = \kwderef{\kwref{\kwassn}}$
and~$\sigrtype{\kwassn}{\tau} \in \sig$.
By memory typing,~$\kwassn \in \dom{\mem}$.
Apply \rulename{Deref}.
\item \rulename{Update}. Apply \rulename{Update}.
\item \rulename{Wait}. Then because the context is empty, by inversion
on~\rulename{CV}, $\instr = \kwwait{\kwcv{\cvname}}$.
Apply~\rulename{Wait}.
\item \rulename{Signal}. Then because the context is empty, by inversion
on~\rulename{CV}, $\instr = \kwsignal{\kwcv{\cvname}}$.
Apply~\rulename{Signal1} or~\rulename{Signal2} depending on
whether~$\waiters(\cvname)$ is empty.
\item \rulename{Promote}. Then because the context is empty, by inversion
on~\rulename{CV}, $\instr = \kwpromote{\kwcv{\cvname}}{\prio_2}$.
Apply~\rulename{Promote}.
\item \rulename{NewCV}. Apply \rulename{NewCV}.
\item \rulename{NewMutex}. Apply \rulename{NewMutex}.
\end{itemize}

\item \rulename{KS-PopStmt}.
Then~$\stackstate = \scsend{\stack}{s}$
and~$\stackacceptsc{\sig}{\stack}{\fctx''}{\prios_1}{\fctx'}{\prio}$
and~$\styped{\sig}{\ectx}{\fctx}{s}{\prio}{\fctx''}$.
Proceed by nested induction on the latter derivation.
\begin{itemize}
\item \rulename{WithLock}. Then because the context is empty, by inversion
on~\rulename{Mutex}, $s = (\kwwithlock{\kwmutex{\mutname}}{s'})$.
If~$\locked(\mutname) = \lunlocked$, apply \rulename{WithLockS1}.
Otherwise, if~$\locked(\mutname) = (a', u_1', u_2')$
and~$\ple{\prio}{\uprio{\graph}{a'}}$, then apply \rulename{WithLockS2}.
Otherwise, by invariant~\ref{inv:mutex-prio}, there exists~$\prio_{\mutname}$ such that
$\sigcvtype{\mutname}{\prio_\mutname} \in \sig$.
We must have~$\uprio{\graph}{a'} \neq \prio_\mutname$
because~$\ple{\prio}{\prio_\mutname}$ and~$\nple{\prio}{\uprio{\graph}{a'}}$.
Therefore, by the invariants,
there is a~$\cthread{b}{\prio'}{\sig'}{\stackstate_b}$
in~$\tp_0$ or~$\waiters$
such that~$\prioceil{\stackstate_b}{\mutname}{b}{\prio'}{\stackstate''}$
and we can apply \rulename{WithLockS3} or \rulename{WithLockS4} respectively.
\item \rulename{Let}. Apply~\rulename{Let1}.
\item \rulename{If}. Then~$s = (\kwif{v}{s_1}{s_2})$.
Because the context is empty, by inversion on~\rulename{$\kw{nat}$I},
$v = \kwn$ for some~$n$. Apply \rulename{If1} or \rulename{If2}.
\item \rulename{While}. Apply \rulename{While}.
\item \rulename{Skip}. Apply \rulename{Skip}.
\item \rulename{Seq}. Apply \rulename{Seq1}.
\end{itemize}

\item \rulename{KS-PushInstr}.
Then~$\stackstate = \sreturn{\stack}{v}$
and~$\stackaccepts{\sig}{\stack}{\tau}{\fctx}{\prios_1}{\fctx'}{\prio}$
and~$\etyped{\sig}{\ectx}{v}{\tau}$.
By inversion,~$\stack = \scp{\stack_0}{\kwlet{x}{\shole}{s}}$.
Apply \rulename{Let2}.

\item \rulename{KS-PushStmt}.
Then~$\stackstate = \screturn{\stack}$
and~$\stackacceptsc{\sig}{\stack}{\fctx}{\prios_1}{\fctx'}{\prio}$.
Proceed by nested induction on this derivation.
\begin{itemize}
\item \rulename{KS-Empty}. Then~$\stackstate = \screturn{\estack}$.
\item \rulename{KS-Seq}. Apply \rulename{Seq2}.
\item \rulename{KS-WithLockS}.
Then because the context is empty, by inversion
on~\rulename{Mutex},
$\stack = \scp{\stack_0}{\kwwithlocks{\kwmutex{\mutname}}{\shole}}$.
Apply \rulename{WithLockE1} or \rulename{WithLockE2} depending on
whether~$\waiters(\mutname)$ is empty.
\item \rulename{KS-WithLockSP}.
Then because the context is empty, by inversion
on~\rulename{Mutex},
$\stack = \scp{\stack_0}{\kwwithlocksp{\kwmutex{\mutname}}{\shole}}{a'}{\prio'}$.
Apply \rulename{WithLockE3} or \rulename{WithLockE4} depending on
whether~$\waiters(\mutname)$ is empty.
\end{itemize}
\end{itemize}
\fi
\end{proof}

It is a direct consequence of Progress and Preservation that any configuration
reachable from an initial configuration can continue to make progress (or has
finished executing) and contains a well-formed graph.

\begin{theorem}[Soundness]
Suppose~$\sstyped{\esig}{\fctx}{\stackstate}{\prio}{\fctx'}$,
where~$\fctx$ and~$\fctx'$ are valid permission mappings,
and~$\lconfig{\cthread{a}{\prio}{\esig}{\stackstate}}{\emem}{\emptyset}{\emptyset}{\egraph}
\mstep^* \lconfig{\tp'}{\mem'}{\waiters'}{\locked'}{\graph'}$.
Then~$\graph'$ is well-formed and
for all~$\cthread{a}{\prio}{\sig}{\stackstate} \in \tp$,
either~$\stackstate = \screturn{\estack}$ or
$\lconfig{\cthread{a}{\prio}{\sig}{\stackstate} \tpcp \tp'_0}{\mem'}{\waiters'}
          {\locked'}
{\graph'}
\mstep
\lconfig{\tp''}{\mem''}{\waiters''}{\locked''}{\graph''}
$,
where~$\tp' = \cthread{a}{\prio}{\sig}{\stackstate} \tpcp \tp'_0$.
\end{theorem}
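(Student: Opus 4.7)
The plan is to derive Soundness as an essentially immediate consequence of the Preservation theorem (Theorem~\ref{thm:preservation}) and the Progress theorem (Theorem~\ref{thm:progress}) that have just been established, using the standard type-safety recipe. First I would proceed by induction on the length of the multi-step reduction $\mstep^*$. In the base case, the reduction has zero steps, so I must show that the initial configuration $\lconfig{\cthread{a}{\prio}{\esig}{\stackstate}}{\emem}{\emptyset}{\emptyset}{\egraph}$ satisfies all of the invariants of Definition~\ref{def:invs}. The inductive step then follows by applying Preservation to the last step of the reduction to conclude that $\lconfig{\tp'}{\mem'}{\waiters'}{\locked'}{\graph'}$ also satisfies the invariants, and in particular that $\graph'$ is well-formed (Invariant~\ref{inv:wf}). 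Finally, Progress applied to the final configuration gives the required either-or conclusion for each thread in $\tp'$.

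The bulk of the work is verifying that the initial configuration meets all nine invariants. Most of these are vacuous or trivial: $\waiters$ and $\locked$ are empty, so Invariants~\ref{inv:waiters}, \ref{inv:waiters-prio}, \ref{inv:cv-waiters}, and \ref{inv:can-promote} hold vacuously; Invariant~\ref{inv:unlocked-empty} holds because there are no mutexes in $\locked$ to consider; Invariant~\ref{inv:mutex-prio} holds because the signature $\esig$ is empty. Invariant~\ref{inv:wf} holds because $\egraph$ contains no edges at all, so both clauses of Definition~\ref{def:wellformed} are vacuous. Invariant~\ref{inv:no-signal} holds vacuously because the only vertex belongs to the single thread and there are no condition variables in $\esig$. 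The only nontrivial check is Invariant~\ref{inv:typed}, which requires the initial configuration to be well-typed by the \rulename{Global} rule; this follows directly from the hypothesis $\sstyped{\esig}{\fctx}{\stackstate}{\prio}{\fctx'}$ together with the assumption that $\fctx$ and $\fctx'$ are valid (splittable) permission mappings, and the fact that $\mtyped{\esig}{\emem}$ holds trivially.

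I do not anticipate any serious obstacles, since the heavy lifting has already been carried out in Preservation and Progress. The one subtlety worth mentioning explicitly in the proof is that the invariants are a conjunction that must all hold at the base case, so the verification must be done for every clause rather than just the well-formedness clause we ultimately care about; this is why we carry the strengthened invariants through Preservation rather than trying to prove well-formedness in isolation. After dispatching the base case, the inductive step is a one-line appeal: if $\lconfig{\tp}{\mem}{\waiters}{\locked}{\graph} \mstep^* \lconfig{\tp''}{\mem''}{\waiters''}{\locked''}{\graph''} \mstep \lconfig{\tp'}{\mem'}{\waiters'}{\locked'}{\graph'}$, then by the IH the intermediate configuration meets the invariants, and by Preservation so does the final one; Progress then yields the per-thread conclusion.
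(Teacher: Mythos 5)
Your proposal is correct and follows essentially the same route as the paper's own proof: establish the invariants for the initial configuration (well-typedness via \rulename{Global}, the rest trivially or vacuously), then apply Preservation inductively along the reduction sequence and finish with Progress. Your version simply spells out the base-case verification in more detail than the paper does.
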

\begin{proof}
By \rulename{Global}, the initial configuration is well-typed.
All other invariants are met by an initial condition.
The result holds by inductive application of Theorems~\ref{thm:preservation}
and~\ref{thm:progress}.
\end{proof}

\subsection{Extensions}\label{sec:extensions}
Below, we discuss two synchronization operations that are not
currently modeled in {\calcname} but could be added without difficulty.
We have excluded them thus far in the interest of
keeping the semantics and proofs as simple as possible and focusing on the
key points.

\newcommand{\kwtrywith}{\kw{trywith}}
\paragraph{Trylock.} In many implementations of mutex-based synchronization,
\texttt{trylock} is a nonblocking construct that attempts to acquire a mutex;
if the mutex is already locked, \texttt{trylock} returns immediately with a
return value or error code indicating that it failed to acquire the mutex.
We could model a variant of~$\kwwithlock{v}{s}$ that does something similar;
we can call it~$\kwtrywith$.
In the context of our syntax, on failure,~$\kwtrywith$ would run an alternative
statement or set a designated variable to indicate failure.
The restrictions on the use of~$\kwtrywith$ would be identical to those
for~\texttt{with}.
Its dynamic semantics would include a rule corresponding
to~\rulename{WithLockS1}, as well as an additional rule to perform the
desired alternative action if~$\locked(\mutname) \neq \lunlocked$.

\paragraph{Broadcast.} The \texttt{signal} operation
wakes up one thread waiting on the CV.
Many implementations of CVs also include a \texttt{broadcast}
primitive that wakes up all waiting threads.
This operation would be straightforward to include in {\calcname}; its typing
restrictions would be identical to those of \texttt{signal}.
The dynamic semantics rule (analogous to \rulename{Signal1}) would add all
threads from~$\waiters(\cvname)$ back to the thread pool and add sync edges
from the current thread to all waiting threads.

\else
\section{Proof Outline}\label{sec:proof-outline}

In this section, we give an overview of the soundness proof for the
type system of Section~\ref{sec:lang}, that is, the proof that a well-typed
program is free of priority inversions.
The proof has three main parts:
\begin{enumerate}
\item First, it is necessary to formalize a model and
  definition of priority inversions which we will use for the proof.
  In order to represent priority inversions with condition variables, the
  model must capture complex dependences between threads, and not simply
  immediate waits-for relations.
  We build on prior work~\citep{blellochgr95, blellochgr96}
  on {\em cost models} that represent programs as graphs that capture all
  of the dependences necessary to schedule a program, as well as
  recent work~\citep{mah-responsive-2017, mah-priorities-2018,
    MullerSiGoAcAgLe20} that has used variations of these models to define
  priority inversions.
\item We establish a link between {\langname} and the cost model described
  above by formalizing a {\em cost semantics} which produces a graph in the
  cost model from a {\langname} program.
\item We prove, using techniques based on progress and preservation, that
  a well-typed {\langname} program yields a cost graph that does not have
  priority inversions.
\end{enumerate}
The full proof is available in the long version of the paper~\citep{fullv}.

\subsection{A Graph Model for Priority Inversions with Mutexes and Condition Variables}

In this section, we overview the formalization of priority inversions that we
will use to prove the correctness of the type system.
Recall from Sections~\ref{sec:intro} and~\ref{sec:overview} that the obvious
definitions of a priority inversion in terms of priorities of blocking
threads fail to capture some situations in which a high-priority thread
can wait on a lower-priority thread.
Instead, we start from a careful, foundational definition of priority inversions
in terms of the impact they have on programs: intuitively, a priority
inversion exists if a high-priority thread may be delayed by a low-priority
thread when running under a reasonable scheduler; priority inversions are
therefore inherently a problem of {\em cost}.
To formalize the above intuition,
we use a model based on prior work~\citep{mah-responsive-2017,
  mah-priorities-2018, MullerSiGoAcAgLe20} which represents parallel
programs as {\em cost graphs}, and establishes results about the efficient
schedulability of such programs, provided the graphs are
``well-formed'', which corresponds to the absence of priority
inversions.
We extend the graph model and definition of well-formedness, and define a
priority inversion in a {\langname} program to be any behavior that leads
to an ill-formed graph.
This section begins with an overview of cost graph models for parallel programs
and related results and definitions, before outlining the definition of our
model.

\subsubsection{Preliminaries on Graph Models}

We model executions of parallel programs using
Directed Acyclic Graphs, or DAGs, in which
vertices represent units of computation
and edges represent the dependences between portions of the program.
Edges can represent sequential dependences between instructions in a single
thread as well as synchronizations between threads; as in the presentation of
{\langname}, we will use names like~$a$ to refer to threads: in the graph,
this corresponds to the chain of vertices corresponding to instructions in
thread~$a$.
We will refer to edges of the above forms as {\em strong edges}, and
additionally use {\em weak edges}~\citep{MullerSiGoAcAgLe20} to
capture happens-before relationships
between computations that occur at runtime but are not the result of
explicit synchronization.
As an example, weak edges were originally used to track happens-before
relationships induced by global memory: if the computation~$u$
writes a value into memory and that value is read by the computation~$u'$,
there may be a weak edge~$(u, u')$.
We use weak edges for a different purpose, which will be described later.
We track the priorities of threads in the graph; all of the vertices of a
thread are assigned the priority of that thread.

Because {\langname} allows arbitrary
synchronization, it is possible for programs to contain
{\em deadlocks}, where two or more threads depend on each other in a cyclic
fashion; such a condition will manifest as a cycle in the dependence graph.
However, because little of interest can be said about the running time of
programs with deadlocks, our results will focus on non-deadlocking programs,
whose graphs are acyclic.
If a program is not known to be deadlock-free, the restrictions imposed by
our type system are the same, and a program without priority inversions will
be accepted by the type system.
The guarantee we prove, however, would be a conditional one: the program
can be efficiently scheduled {\em assuming the run of the program does not
  deadlock}.
To be guaranteed efficient execution, our type system could be combined with
static or dynamic deadlock detection, which is outside the scope of this
paper.

We now describe, at a high level, how we use the graph representation
described above to represent programs with mutexes and condition variables.

\subsubsection{Graph Models for Mutexes and Condition Variables}

\begin{figure}
  \footnotesize
  \begin{tikzpicture}[scale=0.75]
    \tikzstyle{node}=[draw,rectangle,rounded corners,minimum width=14pt];
    \tikzstyle{enode}=[draw,circle,minimum width=10pt];
    \tikzstyle{edge}=[draw,thick,-stealth];

    \node[enode] (0) at (0, 0) {};
    \node[node] (w1) at (1.5, -0.5) {\texttt{wait}};
    \node[enode] (n1) at (1.5, -1.5) {};
    \node[node] (s1) at (0, -1) {\texttt{signal}};
    \node[node] (s2) at (0, -2) {\texttt{signal}};
    \node[node] (w2) at (-1.5, -2) {\texttt{wait}};
    \node[enode] (n2) at (-1.5, -3) {};
    
    \path[edge] (0)--(w1);
    \path[edge] (0) to [bend right] (w2);
    \path[edge] (0)--(s1);
    \path[edge] (w1)--(n1);
    \path[edge] (w2)--(n2);
    \path[edge] (s1)--(s2);
    \path[edge] (s1)--(n1);
    \path[edge] (s2)--(n2);

    \node[enode] (1) at (6, 0) {};
    \node[node] (1w1) at (7.5, -0.5) {\texttt{wait}};
    \node[enode] (1n1) at (7.5, -2) {};
    \node[node] (1s1) at (6, -1.5) {\texttt{signal}};
    \node[node] (1s2) at (6, -3) {\texttt{signal}};
    \node[node] (1w2) at (4.5, -1) {\texttt{wait}};
    \node[enode] (1n2) at (4.5, -2.5) {};
    
    \path[edge] (1)--(1w1);
    \path[edge] (1)--(1w2);
    \path[edge] (1)--(1s1);
    \path[edge] (1w1)--(1n1);
    \path[edge] (1w2)--(1n2);
    \path[edge] (1s1)--(1s2);
    \path[edge] (1s1)--(1n1);

    \node[enode] (2) at (12, 0) {};
    \node[node] (2w1) at (13.5, -1.25) {\texttt{wait}};
    \node[enode] (2n1) at (13.5, -3) {};
    \node[node] (2s1) at (12, -1) {\texttt{signal}};
    \node[node] (2s2) at (12, -2.25) {\texttt{signal}};
    \node[node] (2w2) at (10.5, -1.75) {\texttt{wait}};
    \node[enode] (2n2) at (10.5, -3) {};
    
    \path[edge] (2) to [bend left] (2w1);
    \path[edge] (2) to [bend right] (2w2);
    \path[edge] (2)--(2s1);
    \path[edge] (2w1)--(2n1);
    \path[edge] (2w2)--(2n2);
    \path[edge] (2s1)--(2s2);
    \path[edge] (2s2)--(2n1);
  \end{tikzpicture}
  \caption{Several graphs representing the same program with condition variables.
    Vertices are ordered vertically in order of execution.}
  \label{fig:cv-dags}
\end{figure}

%
Because {\langname} programs synchronize using
first-class data, rather than control flow, it is not possible to represent
a given piece of code using one definite graph.
%
Instead, a graph will represent one particular execution of a program and is
necessarily dependent on scheduling decisions made at runtime.

\begin{figure}
  \begin{minipage}{0.27\textwidth}
    \footnotesize
    \centering
  \begin{tikzpicture}[scale=0.6]
    \tikzstyle{node}=[draw,rectangle,rounded corners,minimum width=14pt];
    \tikzstyle{enode}=[draw,circle,minimum width=10pt];
    \tikzstyle{edge}=[draw,thick,-stealth];
    \tikzstyle{wedge}=[draw,dashed,thick,-stealth];

    \node[enode] (0) at (0, -1) {};
    \node[node] (l1) at (0, -2) {\texttt{lock1}};
    \node[node] (l2) at (2, -3.5) {\texttt{lock2}};
    \node[node] (s1) at (0, -3) {$s$};
    \node[node] (s2) at (2, -5) {$s$};
    \node[node] (u1) at (0, -4) {\texttt{unlock1}};
    \node[node] (u2) at (2, -6) {\texttt{unlock2}};
    
    \path[edge] (0)--(l1);
    \draw[edge] (0) to [bend left] (l2);
    \path[edge] (l1)--(s1);
    \path[edge] (s1)--(u1);
    \path[edge] (l2)--(s2);
    \path[edge] (s2)--(u2);
    \path[edge] (u1)--(s2);
    \path[wedge] (l1)--(s2);
  \end{tikzpicture}
  \caption{Two threads contending on a lock.}
  \label{fig:mutex-dag}
  \end{minipage}\hfill%
  \begin{minipage}{0.35\textwidth}
    \footnotesize
    \centering
    \begin{tikzpicture}[scale=0.6]
    \tikzstyle{node}=[draw,rectangle,rounded corners,minimum width=14pt];
    \tikzstyle{enode}=[draw,circle,minimum width=10pt];
    \tikzstyle{edge}=[draw,thick,-stealth];
    \tikzstyle{wedge}=[draw,dashed,thick,-stealth];

    \node[enode] (0) at (0, -1) {};
    \node[node] (l1) at (0, -2) {\texttt{lock1}};
    \node[node] (l2) at (4, -4) {\texttt{lock2}};
    \node[node] (s1) at (0, -3) {$s_1$};
    \node[node] (s1') at (1.5, -4) {$s_2$};
    \node[node] (s2) at (4, -5) {$s$};
    \node[node] (u1) at (1.5, -5) {\texttt{unlock1}};
    \node[enode] (end1) at (0, -6) {};
    \node[node] (u2) at (4, -6) {\texttt{unlock2}};
    
    \path[edge] (0)--(l1);
    \draw[edge] (0) to [bend left] (l2);
    \path[edge] (l1)--(s1);
    \path[edge] (s1)--(s1');
    \path[edge] (s1')--(u1);
    \path[edge] (u1)--(end1);
    \path[edge] (l2)--(s2);
    \path[edge] (s2)--(u2);
    \path[edge] (u1)--(s2);
    \path[edge] (s1)--(end1);
    \path[wedge] (l1)--(s2);
    \path[wedge] (s1')--(s2);
  \end{tikzpicture}
    \caption{Thread 1 is promoted to the priority ceiling.}
  \label{fig:mutex-dag-ceil}
  \end{minipage}\hfill%
  \begin{minipage}{0.25\textwidth}
    \footnotesize
    \centering
  \begin{tikzpicture}[scale=0.6]
    \tikzstyle{node}=[draw,rectangle,rounded corners,minimum width=14pt];
    \tikzstyle{enode}=[draw,circle,minimum width=10pt];
    \tikzstyle{edge}=[draw,thick,-stealth];
    \tikzstyle{wedge}=[draw,dashed,thick,-stealth];

    \node[enode] (0) at (0, -1) {};
    \node[node] (l1) at (0, -2) {\texttt{lock1}};
    \node[node] (l2) at (2, -3.5) {\texttt{lock2}};
    \node[node] (s1) at (0, -3) {$s$};
    \node[node] (s2) at (2, -5) {$s$};
    \node[node] (u1) at (0, -4) {\texttt{unlock1}};
    \node[node] (u2) at (2, -6) {\texttt{unlock2}};
    
    \path[edge] (0)--(l1);
    \draw[edge] (0) to [bend left] (l2);
    \path[edge] (s1)--(u1);
    \path[edge] (l2)--(s2);
    \path[edge] (s2)--(u2);
    \path[edge] (u1)--(s2);
    \path[edge] (l2)--(s1);
  \end{tikzpicture}
  \caption{The strengthening of Figure~\ref{fig:mutex-dag}.}
  \label{fig:mutex-dag-strong}
  \end{minipage}
\end{figure}

This dynamicity can be seen clearly by considering how we would model waiting
on, and signaling, a condition variable.
Suppose (the instruction represented by) vertex~$u$ signals a condition
variable, unblocking some thread: let~$u'$ represent the vertex in the
newly-unblocked thread that becomes ready as a result of the signal.
We represent the dependence between the signal and wait operations by
adding the edge~$(u, u')$ to the graph.
At runtime, it is easy to determine what threads are blocked on a particular
condition variable, and we do so when formalizing the operational
semantics in the full proof.
However, it is not possible in general to determine this statically, because
which~\texttt{wait} operations have run before a given~\texttt{signal}
operation depends on the precise runtime interleaving of the threads involved,
and can be different from one run of a program to another.
As an illustration, Figure~\ref{fig:cv-dags} shows three possible DAGs
that might arise from the program
\begin{lstlisting}
  CV cv; spawn {wait(cv)}; spawn {wait(cv)}; signal(cv); signal(cv);
\end{lstlisting}

The graph representations of programs with mutexes similarly depend on the
order in which threads acquire mutexes.
If a vertex~$u$ releases a
mutex and unblocks vertex~$u'$, this induces an edge~$(u, u')$.
In addition, we add a weak edge from the vertex that successfully acquires
the mutex to~$u'$, indicating the happens-before relation that the
successful acquire, necessarily, ran before the unsuccessful thread.
Such a DAG is illustrated in Figure~\ref{fig:mutex-dag}, where~\texttt{lock}
indicates the beginning of a~$\kwwithlock{v}{s}$ critical section
and~\texttt{unlock} indicates the end of a critical section.

The model can also represent programs in which
priority inversions are prevented using dynamic mechanisms such as
priority inheritance or priority ceiling; we will present a graph
for priority ceiling since that is the mechanism we focus on in this
paper, but other mechanisms can be represented in a similar way.
Figure~\ref{fig:mutex-dag-ceil} represents the same scenario as above but
now supposes that the thread that acquires the mutex is
running at a priority lower than that of both the second thread and the
priority ceiling, so that when~\texttt{lock2} runs, the first thread is
promoted to the priority ceiling of the mutex.
We represent this by spawning another thread (at the
priority ceiling) to complete the first thread's critical section
(represented by~$s_2$ in the figure).
Another weak edge is added from this thread to the (still unsuccessful) thread.
When the critical section finishes, control of the first thread returns to
the original thread at the original priority, represented by a sync edge
from the release (\texttt{unlock1}) operation to the next operation in the
original thread.

\subsubsection{Well-formedness and Response Time}

We now turn our attention to defining priority inversions in the extended cost
model for synchronization with mutexes and condition variables.

Given a set of processors and a series of time steps,
a {\em schedule} of a DAG is an assignment of vertices to processors
at each time step respecting the dependences;
this corresponds to executing the program represented by
the DAG on a parallel machine.
We wish to find a schedule of a DAG that minimizes the {\em response time}
of a particular thread,
which we define to be the number of steps (inclusive) from when the first
vertex of the thread is spawned to when the last vertex is executed.
Results such as Brent's Theorem~\citep{brent74}
give approximately-optimal bounds for schedules meeting certain requirements.
The requirement we will use is that the schedule is {\em prompt}~\citep{mah-priorities-2018, mah-responsive-2017}.
At any time step, a prompt schedule first assigns all ready nodes at the
highest priority, then the next highest and so on.
Processors are only left idle if no ready vertices remain.

Prompt schedules guarantee upper bounds on response times of threads for
DAGs that are {\em well-formed}~\citep{mah-priorities-2018, MullerSiGoAcAgLe20}.
Intuitively, a DAG is well-formed if lower-priority work does not fall
on the critical path of higher-priority threads.
This corresponds to the absence of priority inversions, and means that programs
without priority inversions can be efficiently scheduled.

In the presence of weak edges, it is possible for lower-priority work that does
not actually fall on the critical path (because of happens-before relations)
to \textit{appear} to fall on the critical path.
As an example, in Figure~\ref{fig:mutex-dag}, the vertex~\texttt{lock1}
appears to be on the critical path of the right thread: it 
is an ancestor of~\texttt{unlock2} (the last vertex of the right thread)
and not an ancestor of~\texttt{lock2} (the first vertex of the right thread).
However,~\texttt{lock2} will never have to wait for~\texttt{lock1} to complete
(which is, for all practical purposes, the definition of ``being on the
critical path'') because the weak edge indicates that~\texttt{lock1} has
already completed when~\texttt{lock2} becomes ready.
We thus extend the definition of well-formedness to
require that any synchronization edge that
results from thread~$a$ waiting on a lock held by thread~$b$ is
preceded by a weak edge indicating that thread~$b$ acquired the lock before
thread~$a$ attempted to acquire it.
We then use a technique called {\em strengthening}~\citep{MullerSiGoAcAgLe20},
which converts weak edges to strong edges while maintaining a
conservative approximation of the same dependences with
respect to a thread~$a$.
In our bound, the computations that might delay a thread~$a$ are those on
the critical path of~$a$ in the strengthened graph.

The strengthening effectively encodes, using strong edges, the worst case
allowed by the weak edges.
As an example, in the DAG in Figure~\ref{fig:mutex-dag-strong}, the worst
case is that the left thread does not run~$s$ until after the right thread
attempts to lock the mutex (vertex~\texttt{lock2}).
We encode this with a strong edge from~\texttt{lock2} to~$s$ on the left
thread.
Figure~\ref{fig:mutex-dag-strong} shows the strengthening of the
graph in Figure~\ref{fig:mutex-dag}.
Using our new definitions of well-formedness
and strengthening, we prove the following:

\begin{theorem}\label{thm:brent}
  Let~$\graph$ be a well-formed graph and~$a$ be a thread in the graph.
  For any prompt schedule of~$\graph$ that respects both weak and strong
  edges, the response time of~$a$ can be bounded by quantities that depend
  only on work at priorities higher than that of~$a$.
\end{theorem}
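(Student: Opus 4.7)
The plan is to reduce the theorem to a general scheduling bound by adapting the strengthening technique of prior work~\citep{MullerSiGoAcAgLe20} to the richer class of DAGs arising from {\langname} programs. The proof will proceed in three main steps, ultimately yielding the bound $\resptimeof{a} \leq \frac{1}{P}\left[\prioworkof{\compwork{}{a}}{\psnlt{\prio}} + (P-1)\longsp{\compwork{}{a}}{a}\right]$ that characterizes ``depends only on high-priority work.''

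First, I would prove a general Brent-style bound applicable to any graph $\graph$ equipped with an $a$-strengthening $\strengthen{a}{\graph}$ satisfying two abstract properties: (i) if a vertex is ready in $\strengthen{a}{\graph}$ under a given admissible schedule, then it is also ready in $\graph$; and (ii) every strong path in $\strengthen{a}{\graph}$ ending at the last vertex of $a$ and avoiding ancestors of $a$'s first vertex consists only of vertices of priority not less than that of $a$. Given these two properties, a standard per-step case analysis of a prompt schedule yields the bound: at each step, either $P$ units of competitor work at priority not less than $a$'s are executed, or the remaining $a$-span in the strengthening drops by one, giving the claimed sum bound.

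Second, I would instantiate this framework by defining an $a$-strengthening tailored to the DAGs produced by {\langname} programs. The novelty is handling the weak edges arising from contended mutex acquisition and from priority-ceiling promotion. For each strong edge $(u', u)$ accompanied by a weak edge $(u', u'')$ where $\wanc{u'}{t}$ and $\sanc{u}{t}$, the strengthening removes both edges and inserts a single strong edge from the parent of $u''$ to $u$. Intuitively, this reroutes the critical path so it no longer passes through the potentially-low-priority contending thread $u'$ belongs to, while still encoding the true runtime ordering guaranteed by the weak edge.

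Third, I would verify that well-formedness guarantees both conditions required by the general bound. Condition (ii) follows from the priority clause of well-formedness combined with the fact that the rerouted edge $(u''_p, u)$ has $u''_p$ as a strong ancestor of $t$ already of appropriately high priority. Condition (i) follows because the added strong edge conservatively encodes a dependency already implied by the weak edge $(u', u'')$ together with the second clause of well-formedness, which guarantees that $u''$ has a unique strong parent that must itself have executed before $u''$ in any admissible schedule. The main obstacle will be condition (i) in the presence of priority-ceiling promotion, which generates several weak edges from a single successful acquirer to all currently waiting threads; verifying that the strengthening composes correctly across these simultaneous edges will require careful appeal to the waiter-list invariants maintained by the cost semantics, in particular the invariant that weak edges from a lock-holder to each waiter are present at all times while the lock is held.
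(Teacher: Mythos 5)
Your proposal matches the paper's proof essentially step for step: the paper states the abstract two-condition bound as Theorem~\ref{thm:brent-weak-gen} (deferring its proof to prior work), defines the strengthening exactly as you describe in Definition~\ref{def:strengthen}, and verifies conditions (i) and (ii) from well-formedness by the same arguments you outline. The only minor difference is that your closing worry about appealing to the cost-semantics waiter-list invariants is unnecessary at this stage---clause 2 of Definition~\ref{def:wellformed} already guarantees the required weak edge for every problematic strong edge in any well-formed graph, so the verification stays purely graph-theoretic, and the cost-semantics invariants only enter later when proving that the graphs produced by well-typed programs are in fact well-formed.
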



\subsection{Cost Semantics}

The next phase of the proof is to develop a dynamic semantics for {\langname}
that evaluates a program and, at the same time, produces a cost graph that
represents the parallel execution.
The semantics operates over collections of ready threads, $\tp$, called
{\em thread pools}.
%
%
%
A {\em configuration}~$\twlconfig{\tp}{\mem}{\waiters}{\locked}$ represents
a snapshot of a running program.
It includes a thread pool as well as a memory~$\mem$,
which maps references to their contents.
The configuration contains two additional components:
$\waiters$ is a mapping from mutexes and condition variables to threads waiting
on the mutex or CV,
and~$\locked$ is a mapping from mutexes to either the thread currently holding
the mutex or~$\lunlocked$ indicating the mutex is unlocked.
The dynamic semantics judgment is
$
\lconfig{\tp}{\mem}{\waiters}{\locked}{\graph} \mstep
\lconfig{\tp'}{\mem'}{\waiters'}{\locked'}{\graph'}
$,
indicating that the configuration steps and and if the DAG is~$\graph$
before the step, it is~$\graph'$ after.

The dynamic semantics tracks information that would be available to the
runtime scheduler, namely what threads are waiting on condition variables and
mutexes, and what threads hold which mutexes.
As the program executes, the semantics records dependency relations between
threads, including both spawns and synchronizations, in the graph~$\graph$.
Note that all of the information in~$\graph$ is necessary to expose complex
priority inversions involving condition variables.
Consider the example of the Introduction, in which high-priority thread~$B$
is waiting for low-priority
thread~$A$ to spawn thread~$C$, which will eventually signal a condition
variable.
The existence of this priority inversion cannot be seen solely using information
collected in~$\waiters$, as it depends on 1) a thread waiting on a CV,
2) an inter-thread dependence resulting from a spawn, and 3) a thread signaling
a CV.
All of this information is collected in the cost graph, but only 1) is
collected in~$\waiters$.

The operational semantics models the dynamic priority ceiling protocol by
raising the priority of the thread that holds a lock if a higher-priority
thread attempts to acquire it.
This change of priority is reflected in the graph using the mechanisms
described in the previous subsection.

\subsection{Correctness Proof}

We prove that a graph produced by an execution of a well-typed
{\langname} program is well-formed, meaning that it contains no
priority inversions (that are not handled by dynamic priority
ceiling).
The essence of the proof is a type preservation proof; in addition to
stating that a well-typed program remains well-typed under a step of the
operational semantics, the theorem also states that a well-formed graph
remains well-formed.
We also ensure that
a stronger set of invariants are preserved throughout execution, which
are necessary to prove that well-formedness is preserved on certain transitions.
Some of the key invariants are:
\begin{itemize}
\item Weak edges are present from
threads that hold locks to threads waiting for the lock; this is required for
the definition of well-formedness.

\item High-priority threads do not
  receive low-priority vertices on their critical paths due to sync edges;
  we enforce this
  by stating that such vertices have no permission for CVs which would
  be required in order to signal them.

\item Every thread waiting on a mutex must have a priority less
  than both the ceiling and the thread currently holding the lock (i.e.,
  annotated priority ceilings are correct).

\item No threads are waiting for an
unlocked lock.

\item Any threads waiting on a CV have a lower priority than the CV's handle.
\end{itemize}

The preservation theorem is stated roughly as follows:
\begin{theorem}\label{thm:preservation}
Suppose~$\lconfig{\tp}{\mem}{\waiters}{\locked}{\graph}$ meets the invariants
above, including well-typedness and well-formedness.
If $\lconfig{\tp}{\mem}{\waiters}{\locked}{\graph}
\mstep
\lconfig{\tp'}{\mem'}{\waiters'}{\locked'}{\graph'}$
then~$\lconfig{\tp'}{\mem'}{\waiters'}{\locked'}{\graph'}$ meets the
invariants.
\end{theorem}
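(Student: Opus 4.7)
The plan is to proceed by induction on the derivation of the step $\lconfig{\tp}{\mem}{\waiters}{\locked}{\graph} \mstep \lconfig{\tp'}{\mem'}{\waiters'}{\locked'}{\graph'}$, with one case per dynamic-semantics rule; in each case I re-establish each of the nine invariants in Definition~\ref{def:invs}. For the purely control-flow rules (\rulename{Let1}/\rulename{Let2}, \rulename{Seq1}/\rulename{Seq2}, \rulename{If1}/\rulename{If2}, \rulename{While}, \rulename{Skip}) and the reference-manipulation rules (\rulename{NewRef}, \rulename{Deref}, \rulename{Update}), the verification is entirely routine: invariant~(\ref{inv:typed}) follows by inversion on the appropriate stack and stack-state typing rules and reassembly with the pushed/popped frame, and the remaining invariants are preserved because $\waiters$, $\locked$, and the CV/mutex portions of all signatures are untouched and no new edges are added to $\graph$.

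Before tackling the harder cases I would state and prove an auxiliary lemma that the dynamic priority-change judgment $\prioceil{\stackstate}{\mutname}{a}{\prio}{\stackstate'}$ preserves stack-state typing: if $\sstyped{\sig}{\fctx}{\stackstate}{\prio}{\fctx'}$ and $\sigcvtype{\mutname}{\prio_\mutname} \in \sig$, then $\sstyped{\sig}{\fctx}{\stackstate'}{\prio_\mutname}{\fctx'}$. This goes by induction on the structure of the stack, using the fact that \rulename{KS-WithLockSP} already admits the ceiling priority on the frame it converts. This lemma is what lets the mutex-promotion rules \rulename{WithLockS3}/\rulename{WithLockS4} re-type the promoted thread, and it is also needed when \rulename{WithLockE3}/\rulename{WithLockE4} restores the original thread identity.

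The subtle cases concern synchronization edges. In \rulename{Signal1} a new strong edge from the signal vertex $u$ to the awakened vertex $u'$ is added; to preserve invariant~(\ref{inv:wf}) I must show every strong ancestor of $u$ has priority at least that of the awakened thread $\prio_b$. Invariant~(\ref{inv:cv-waiters}) gives $\ple{\prio_b}{\prio_\cvname}$ where $\prio_\cvname$ is the CV's priority in $\sig_b$; invariant~(\ref{inv:no-signal}) combined with the typing premise $\permof{\fctx}{\cvname}{\prio} \neq \fnone$ on \rulename{Signal} rules out any strong ancestor of $u$ at a priority that fails $\ple{\prio_\cvname}{\cdot}$, since such an ancestor would force $\permof{\fctx}{\cvname}{\prio}=\fnone$ by the second clause of~(\ref{inv:no-signal}). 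An analogous argument using invariant~(\ref{inv:waiters-prio}) and the mutex's ceiling handles \rulename{WithLockE2}. For the weak edge introduced by \rulename{WithLockS2}/\rulename{WithLockS3}/\rulename{WithLockS4}, the weak-edge clause of Definition~\ref{def:wellformed} is satisfied by construction because the source is the vertex in the holding thread that acquired the lock.

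The main obstacle will be the mutex-promotion rules \rulename{WithLockS3}/\rulename{WithLockS4} together with \rulename{Promote}, where the signature changes. In the \rulename{WithLockS} cases I must simultaneously re-establish invariant~(\ref{inv:typed}) via the priority-change lemma, re-verify invariant~(\ref{inv:waiters})'s priority bound for every thread already waiting on the mutex against the new promoted holder (using invariant~(\ref{inv:mutex-prio}) and~(\ref{inv:waiters-prio}) on the old configuration), and track the newly spawned thread identity through invariant~(\ref{inv:can-promote}). In \rulename{Promote} the signature is extended with $\sigcvtype{\cvname}{\prio_2}$ for the promoted CV; to preserve invariant~(\ref{inv:no-signal}) I must argue that no other thread in $\tp$ has nonzero permission for $\cvname$ at any $\prio_0$ with $\nple{\prio_2}{\prio_0}$. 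This is where the \rulename{Promote} typing premise forcing $\permof{\fctx}{\cvname}{\prio_0} = \fowned$ at all such $\prio_0$ combines with the splittability condition $\fctxwf{\fctx_0,\dots,\fctx_m}$ from \rulename{Global}: ownership by the promoting thread forces every other thread's permission at those priorities to be $\fnone$, which restores~(\ref{inv:no-signal}) after the signature extension. Getting this bookkeeping exactly right across the interaction of promotion, signaling, and ceiling escalation is the most delicate part of the proof.
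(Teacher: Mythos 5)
Your proposal matches the paper's proof in both structure and substance: induction on the step derivation with per-rule verification of the nine invariants, the auxiliary priority-change preservation lemma (the paper's Lemma~\ref{lem:prio-change-pres}), and the same key arguments for the delicate cases — using invariants~(\ref{inv:cv-waiters}) and~(\ref{inv:no-signal}) with the \rulename{Signal} typing premise to justify the new sync edge in \rulename{Signal1}, and using the $\fowned$-at-all-lower-priorities premise of \rulename{Promote} together with splittability to re-establish invariant~(\ref{inv:no-signal}) after the signature extension. No gaps; this is essentially the paper's proof.
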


We also prove a fairly standard Progress result stating that if a
configuration meets the
invariants, then every non-blocked thread can take a step according to the
dynamic semantics.
The full cost semantics and proof details are available in the full
version of the paper~\citep{fullv}.

\fi
\section{Implementations and Case Studies}

This section describes our Rust and C++ implementations of the type system
rules and the corresponding case studies we conducted to 
evaluate the usability of our type system rules.  
Both implementations expose a library with the ability to spawn threads at
various priorities and perform synchronization, and both statically enforce
versions of the restrictions on synchronization primitives
discussed in Section~\ref{sec:overview}.
We have conducted case studies using three application benchmarks, two
developed from scratch and one real-world interactive application,
the Memcached object caching server~\cite{Memcached09} (v1.5.13), which we
ported to use our C++ type system. 
We discuss the Rust implementation first, followed by the C++ implementation,
focusing on where it differs from the Rust one, and then the case studies
illustrating our experience working the type systems.

\subsection{Rust Implementation}\label{sec:rust}

The Rust implementation focuses on restricting the use of CVs using Rust's
rich type system, which includes notions of ownership and affinity; it does
not enforce the restrictions on mutexes.
%
%
%
Our library provides wrappers for threading and CV operations and checks the
relevant restrictions on priorities and ownership before invoking equivalent
operations of the standard Rust threading library.
%
%
The full library consists of 551 lines of code.\footnote{LoC figures are
measured by \texttt{tokei}.}

%
%
  
To encode the seven restrictions of Section~\ref{sec:overview}, our library
needs to represent priorities and ownership.
Priorities are represented as types that implement the trait \code{Priority}.
Priorities are simply markers that are never instantiated; they are used
only as type parameters for other generic
constructs.  Another trait, \code{Ge} (greater-or-equal), is used to establish
a type-level partial ordering between these types.  The priority of a thread
is indicated with a \code{Token} type, parameterized over a priority.  Each
thread has a token, and some library functions require a reference to a token
as proof of the current thread's priority.  The \code{Token} type is defined
such that safe code cannot access another thread's token, and a new token can
only be created by spawning a thread to go with it.

The library also defines types for each ownership level
(\code{Owned}, \code{Shared}, and \code{NoAccess})
with the~\code{Ownership} trait.
These ownership levels should not be confused with the ownership features of
Rust's type system itself (which they resemble); these types make ownership
explicit in the type of the CV.
%
%
%
%
%
Each CV, of type \code{Condvar}, is parameterized with its priority level,
as well as one of these Ownership levels for each priority.
This gives rise to the limitation that the
number of priorities must be fixed by the library; for our examples, three
(\code{Low}, \code{Medium}, and \code{High}) are sufficient.
We define a \code{split} method that, using Rust's affine-by-default type system,
ensures each of the three ownership priorities follows the appropriate rules
(e.g., \code{Owned} status implies that no other ``handle'' to this CV has
\code{Owned} or \code{Shared} access at that same priority).
%
%
Rust's affine type system also ensures that the CV handle returned by these
operations is used in the future and others (e.g., old handles that have
been split) are invalid.
The priority and ownership(s) of a CV allow us to enforce the appropriate
restrictions on uses of CVs.
%
An important aspect of enforcing these restrictions is that sharing of
\code{Condvar}s across threads must be restricted.
We implement this using another trait, \code{PrioSend}, and various functions
that allow for passing CVs across threads in safe ways.

As an example, Figure~\ref{fig:rust-ex} shows simplified Rust code for the
ill-typed producer-consumer example of Figure~\ref{fig:pc-terr} (only the body
of the main function is shown; the code of the producer and consumer threads
have been stripped to include only the signal and wait).
The main thread is initialized with a token~\code{token} of type
\code{token<Low>}.
On line 1, we use this token to create a CV and immediately promote it to
priority~\code{High}.
Rust is able to infer the type
\code{CV<High, NoAccess, NoAccess, Owned>} for~\code{cv}, indicating
a High-priority CV owned at priority High with no ownership at Low or Medium,
as required by restriction~4.
On line 2, we split the CV into two handles to be passed to the two threads.
The~\code{hollow\_split} method gives one of these handles,~\code{cv\_to\_c},
no ownership.
We then pass this CV to the consumer thread (at priority
High as indicated by the type of its token), which waits on it.
The other CV handle is passed (on lines 5-6) to the producer thread, also
at priority \code{High}, which signals it using the~\code{notify\_one} method.
This spawn should be ill-typed by restriction~3 because the main thread
has no ownership at priority \code{Low}, and indeed, when the code is compiled,
line 5 triggers the error shown at the bottom of the figure.

\begin{figure}
  \begin{lstlisting}[morekeywords={let,new,::}]
let cv = Condvar::new(token).promote::<High>();
let (cv_to_p, cv_to_c) = cv.hollow_split();
let c_th = prio::spawn_and_pass_hollow(cv_to_c, |cv, token: &Token<High>|
                                                  {cv.dummy_wait(token);});
let (cv2, p_th) = prio::spawn_and_split(cv_to_p, token, |cv, token: &Token<High>|
                                                          {cv.notify_one(token);});
\end{lstlisting}
\code{error[E0277]: the trait bound `NoAccess: Partial` is not satisfied}
\caption{A skeleton of the ill-typed producer-consumer example in Rust,
  and the resulting error message.}
  \label{fig:rust-ex}
\end{figure}

\paragraph{Discrepancies and Limitations.}
Our Rust implementation diverges in small ways from the type system discussed
in \secref{lang}.  First, the Rust library is limited to a fixed number
of priorities.
%
In addition, ownership
is explicit in the type of the CV (although Rust is able to
infer some of these types automatically) and the
programmer must manually invoke methods to appropriately split and transfer
ownership.    
Finally, it is possible to use \code{unsafe} code to sidestep the thread safety
mechanisms used to enforce some of the library's restrictions.

\subsection{C++ Implementation}\label{sec:cpp}

We also implemented a C++
library (consisting of 1,252 lines of code) that approximates the features and restrictions of {\calcname};
in addition to the features of the Rust library, the C++ implementation also
enforces our restrictions on mutexes.
The C++ library defines wrappers around threading and synchronization features
provided by I-Cilk~\cite{MullerSiGoAcAgLe20, SingerGoMuAgLeAc20}.
The C++ library represents priorities using a strategy drawn from
prior work~\cite{MullerSiGoAcAgLe20}:  each priority is represented
as a \code{class} and the relationship between two priorities is captured
through class hierarchy via inheritance.
Every thread has a priority
known at compile time, initialized by invoking a parameterized function via
\spawn
with the appropriate priority type.
Similarly, every CV is also
initialized with its own priority type and every mutex is initialized with its
ceiling priority type.  When a thread operates on a mutex or CV, the type system
checks for priority inversions according to the restrictions discussed in
\secref{overview} by using static asserts and \code{is\_base\_of} on the
priority types of the thread and the mutex or CV.
Because every critical section needs to be type checked with both the priority
of the current thread and the priority ceiling of the mutex, we
require the programmer to lift any critical section into its own parameterized
function so it can be explicitly instantiated with both priority levels.

Encoding the notion of ownership into the type system is more complex, as C++
has a non-affine type system and does not inherently support static ownership
checking.
Ownerships are represented as \texttt{enums} (\code{none},
\code{shared}, or \code{owned}).  When a CV is created, the CV is initialized
with a priority and a list of ownership states, one for each priority level
used in the program. CVs are implemented with variadic templates that allow
different programs to initialize CVs with varying max priorities so, unlike
in the Rust library, the number of priorities is not fixed by the library
(though it must still be known at compile time).
%
This templated CV serves as a wrapper, henceforth referred to as the
\code{wrapper\_CV}, with the appropriate priority and ownership types.

As in Rust, we define explicit \emph{ownership-changing
operations} on \code{wrapper\_CV}s such as ownership splitting, transferring,
and promotion on the CV, and such operations invalidate old \code{wrapper\_CV}s
that represent previous ownership levels.
Doing so is a major challenge of the C++ implementation, as it is not
automatically enforced as in Rust.
%
Change of ownership types is achieved by statically invalidating the
original \code{wrapper\_CV} variable in the current lexical scope of thread $a$
and creating a new \code{wrapper\_CV} variable associated with the same
underlying CV to be used after the \spawn statement.  To invalidate the
original \code{wrapper\_CV} variable, we utilize a compile-time
counter \code{counter} that can be incremented at compile time.  The
\code{counter} comes with an increment method \code{next} that evaluates to a integer 
literal at compile time, and the value is incremented by one each time 
\code{next} is encountered in the current translation unit.


We use these compile-time counters as ``poison flags'' to indicate the validity of a
given \code{wrapper\_CV} variable.  Whenever a new \code{wrapper\_CV} variable
is introduced (whether via explicit declaration by the programmer or via
ownership-changing macros), a corresponding local variable of type
\code{counter} associated with the \code{wrapper\_CV} is also introduced in
the same lexical scope.  The counter starts out with value $0$, indicating a
valid \code{wrapper\_CV}.  When any ownership-changing macro is invoked that
invalidates the \code{wrapper\_CV}, the macro also invokes \code{next} on the
\code{counter}, bringing its value to $1$, indicating an invalid
\code{wrapper\_CV}.  Any operations on the \code{wrapper\_CV}---signaling or
invoking ownership-changing macros---use static assert to ensure that the
\code{wrapper\_CV} is still valid and to perform the necessary checks
discussed in \secref{overview}.

\paragraph{Discrepancies and Limitations.}
Many of the discrepancies between {\calcname} and the Rust implementation
(e.g., explicit ownership changes),
are also present in the C++ implementation.
In addition, in {\calcname}, \code{if}
statements may alter the ownership map provided that both the conditional and
else branches alter it the same way (\rulename{If} in \secref{lang}
\figref{stmt-statics}), whereas in C++ the programmer must explicitly invoke
the ownership-changing macros before the \code{if} statement if the branches
require ownership change.  This is because we have no good ways to check
whether the two branches alter the ownership map in the same way otherwise.

To enforce typing rules, we provide additional facilities and impose
certain programming restrictions.  First, to ensure a CV is passed with
an ownership transfer, operations invoked on the \code{wrapper\_CV} perform
name mangling.  Thus, the programmer must use the appropriate macros to
transfer ownership when spawning a thread with a CV and invoke a macro upon
function entry to enable the use of the CV.  Without either operation, the
code will fail to compile.  Second, to allow for a CV to be split within a
loop, the library provides an alternate macro for splitting that does not
invalidate the input \code{wrapper\_CV} and asserts that the
split does not change the ownership of the input.  Finally, to ensure that the
if-else and looping constructs interact with CVs appropriately, the library
requires the programmer to use specialized constructs provided.
These specialized constructs ensure that no CVs with scope beyond that of the
construct's body are invalidated within the constructs, meaning no changes to
the ownership permission levels.  The library macro-defines away ordinary C++
if / looping constructs so that the compiler outputs an error if ordinary C++
constructs are used in a translation unit that uses CVs.
Lastly, as in prior
work~\cite{MullerSiGoAcAgLe20}, the programmer should not use unsafe type
casts to modify priorities or ownership.

\subsection{Application Case Studies}

To evaluate the usability of our type systems, we developed 
a chat server in Rust (356 lines of code) and an email client in C++
(1,252 lines of code), and we ported the Memcached server, a large 
interactive application (~20,100 lines of C code) to
use our C++ type system.  We describe each application in turn and discuss our
experience.

\paragraph{Chat Server in Rust.}
Our simple multi-user chat server utilizes a tiered set of threads.
Threads at one priority handle each
connected user's TCP stream, and accept and handle new users.  Threads at a
lower priority write recent messages to the channel's metadata. Each channel is
associated with a CV that is signaled whenever a new message arrives from any
connected client.  A new thread is spawned for each connected user, which takes
in the TCP stream and the CV.  Each of these threads acts as both a producer and
consumer, both sending messages from the connected user and routinely waiting on
the CV for new messages from other connected clients.

%

\paragraph{Email Client in C++.}
Using our C++ implementation, we implemented a multi-user shared email server
based on
prior work~\citep{MullerSiGoAcAgLe20}.  The server utilizes 5 priority levels.  At the
highest priority, the server accepts new client connections (one thread per
connection), listens to requests from connected clients, and spawns off other
types of threads to perform requests received: \code{send} threads process
requests to send emails (highest priority); \code{sort} threads handle
requests for sorting emails (2nd highest); and \code{compress} threads
compress emails and \code{print} threads handle requests to print emails (3rd
highest).  There is also a \code{dequeue} thread that dequeues compression
tasks that arise from sending emails and generates threads to perform the
compressions (4th highest).  The main thread, which performs setup and
tear-down tasks, is at the lowest priority.

The interesting interaction is how \code{compress} threads are generated.  Two
different kinds of threads interact in a multi-producer/single consumer model:
the (multiple) \code{send} threads act as producers, and the \code{dequeue}
thread acts as the consumer.  Since the number of \code{send} threads is not
known at compile time (it depends on the number of send requests), we utilize
the alternative splitting macro (that does not invalidate the input) within a
loop to split off CV signal permissions as we spawn off each \code{send}
thread.
When a \code{send} thread pushes the number of emails in an inbox over a
threshold of uncompressed messages, it enqueues information on what to
compress and signals the \code{dequeue} thread to wake up and generate a
\code{compress} thread.
%



\paragraph{The Memcached Object Cache Server in C.}
%
The Memcached object caching server~\cite{Memcached09} acts as a distributed
in-memory cache. It is a key-value store, with the core purpose of maintaining a
hash table of objects that can be updated or retrieved by clients.  Memcached is
written in C, and uses pthreads and I/O multiplexing to handle many
clients.   

The main thread in Memcached performs most of the setup code and spawns off other
long-running threads, such as a resize thread that resizes the hash table
when necessary, an LRU maintenance thread that maintains the (approximately)
least-recently-used (LRU) ordering of items in buckets in the hash table, an LRU
crawler thread that frees up items that have gone ``cold'' in the cache, a slab
maintainer thread for balancing free memory between different size classes of
Memcached's internal memory pool, and a logger thread that aggregates and logs
messages and statistics from the other threads.

We split the application into three different priority levels ---
whenever the actual resizing occurs, it is done as a high-priority task (spawned
off by the resizing thread); on the other hand, any logging related activities
are done as low-priority tasks.  Activities relating to handling client requests,
maintaining LRUs of the cache, and memory pool management, are done as 
medium-priority tasks.
This is convenient, as these tasks perform signaling and waiting on condition
variables within critical sections of common locks.\footnote{As explained in
  \secref{overview}, threads acquiring the same
lock must have the same priority level to avoid priority inversions.}
We chose to keep the single medium priority, as all the tasks at this
priority are inter-dependent (e.g., a client request may run out of
memory if the slab thread runs at a lower priority or if the LRU crawler thread
does not free the memory in time).
If we desired more fine-grained priority levels, we could have changed the
priority
levels of certain critical sections and potentially promoted CVs from lower to
higher priorities at more places.

Porting the Memcached server required modifying about 11,400 lines of memcached
code, or about 57\%.
The work was completed by one graduate student, who was also designing and
maintaining the C++ implementation of the type system.
Including time to adapt and extend the C++ library implementation, the
conversion required around 80 person-hours.
Many of the tasks involved in the conversion were mechanical changes,
including converting code from C to C++, as well as mechanical changes required
by our library (e.g. converting functions to commands, converting if
statements), and could have been significantly sped up by a refactoring tool.
We estimate that the conversion itself would take less than 40
person-hours without a refactoring tool, and on the order of 8 to 16
person-hours with a refactoring tool.

\paragraph{Discussion.}
Being able to type check the Memcached server using our C++ type system gives us
some confidence that our typing rules are not overly restrictive.  For the most
part, incorporating the priority annotations as we developed the applications
from scratch (i.e., the chat server and the email client) was fairly
straightforward.  Compilation errors that we encountered due to mishandling of
types (e.g., forgetting to use the right split function at the necessary code
point when transferring ownerships of CVs) are easily fixed and, indeed, helped
determine correct priorities for the threads.
As an example, the initial design of the Rust chat server
had the threads handling connected users at a higher
priority than those handling new users, but this design had a subtle priority
inversion, which was caught by the static restrictions.

The process of
converting the Memcached server was less straightforward, as the code base is
quite mature and uses coding patterns that do not work as naturally with the way
we encoded the typing rules.  For instance, the lock-acquire and lock-release
operations are not always well-nested (e.g., the lock is acquired in one
function and released in a different one), which required us to refactor the
code so that we can lift the critical section into its own parameterized
function.  Another example is the use of CVs involving control constructs (i.e.,
if-else and loops).
While transforming code to use the specialized macros for these constructs is
quite mechanical and straightforward, it can be error-prone if there are
multiple levels of nesting of such constructs (as occurs in Memcached).
If we were to write the code from scratch, we would have rewritten the control
flow to avoid complex nesting of control constructs.

The implementations and case studies also provided evidence that our system is
usable and is not overly complex.
None of the implementors are experts in the theory of type systems.
The implementation of the Rust library was performed by an advanced
undergraduate who was not involved in the design of the type system, after
reading the presentation in Section~\ref{sec:overview} and one or two
discussion meetings with the first author.
The implementation of the Rust chat server was performed by a different
advanced undergraduate, who was not involved in the design or implementation
of the type system, with similar preparation.

\section{Related Work}\label{sec:related}

\paragraph{Cooperative and Competitive Threading.}
Work on {\em cooperative} threading, in which parallel workers cooperate to
complete a (largely computational) task
dates back to the late 1970s and 1980s
with systems such as Id~\citep{id-78} and Multilisp~\citep{halstead84, halstead85}.
Since then, this model has been studied in the context of many languages,
including parallel dialects of
ML~\citep{manticore-damp07, manticore-implicit-11, rmab-mm-2016, AroraWeAc21},
Haskell~\citep{spjlekech08,haskell-dp-2007},
and Java~\citep{x10-2005,is-habanero-14}.
{\em Competitive threading}, or concurrency, in which threads are used
to improve a program's responsiveness, latency, or compositionality,
has been
the subject of research and practice for decades, as have the problems of
scheduling and synchronization that result.
%
%
The body of related research in this area is too large to review in detail,
but we refer the interested reader to an excellent summary by~\citet{sgg-os-05}.

%
Prior work by some of the authors
\citep{mah-responsive-2017,mah-priorities-2018} observed that
adding provable responsiveness guarantees to cooperative threading requires
adding language constructs to prioritize interactive threads, and careful
control to avoid priority inversions (which will be discussed later in this
section).
This work was later extended to handle fairness~\citep{mwa-fairness-2019}
as well as mutable state~\citep{MullerSiGoAcAgLe20}.

\paragraph{Efficient Scheduling and Cost Semantics.}
\citet{brent74}, and
later~\citet{eagerzala89}, bounded the lengths of schedules of parallel
programs in terms of {\em work} and {\em span}
Cost semantics~\citep{rosendahl89,sands90} extend traditional operational
semantics to track resource usage.
Cost semantics for parallel programs using a graph or DAG representation have
been used since the early work on the NESL language~\citep{blellochgr95, blellochgr96}.

The cost semantics of this paper builds most directly on our recent model for
parallel programs with mutable state~\citep{MullerSiGoAcAgLe20}.
That work introduced {\em weak edges} to represent happens-before dependencies
induced by writes and reads to memory, and {\em strengthening} to convert
weak edges to more standard dependency edges in calculating the work and span.
Given these definitions, we were able to show a scheduling bound (including
both throughput and response time for interactive programs) in the style
of Brent, assuming graphs are {\em well-formed}, that is, lack priority
inversions, which they enforce statically.
In this paper, we use weak edges to represent happens-before relations induced
by contention on mutexes, and use a combination of static and dynamic
techniques to avoid priority inversions.
Our soundness result builds on and generalizes these theorems in several ways.
In addition, our extensions model changing priorities at runtime (to implement
the priority ceiling protocol), which is not handled by prior work on
responsive parallelism.

\paragraph{Priority Inversions.}
%
%
Priority inversions have been studied since the 1980s
in a variety of languages and systems
(e.g.,~\citep{lr-mesa-1980, cs-priority-1987}), and a number of
solutions have been proposed, including {\em priority inheritance} and
{\em priority ceiling}~\citep{srl-priority-1990}; we implement the
latter because it often results in fewer priority promotions than
priority inheritance and is simpler to reason about, though it
requires foreknowledge of each mutex's priority ceiling.

The problem of priority inversion control with condition variables is much
less well-studied.
As discussed in Sections~\ref{sec:intro} and~\ref{sec:overview},
standard dynamic priority inheritance
techniques are not sufficient to avoid priority inversions in programs with
CVs.
%
This problem was observed by \citet{Cucinotta2013}, who proposes allowing
programmers to specify such dependencies between tasks, so that
the runtime scheduler can perform priority inheritance as necessary.
We discover these dependencies automatically
using our type system rather than requiring programmers to specify them,
and rule out resulting priority inversions at compile time.
To the best of our knowledge, this paper presents the first type system
for avoiding priority inversions in programs using mutexes and CVs.

\citet{bms-formal-1993} present an elegant graph-based formalization of
priority inversions and techniques for avoiding them, which is in some
ways similar to the graph-based model used in this paper.
However, their formulation is based on snapshots of threads blocking on
each other during execution and would not suffice to represent complex
priority inversions involving the spawn histories of threads, such as the
one that arises in the code of Figure~\ref{fig:pc-terr}.

\paragraph{Ownership and Permissions.}
Our typing restrictions for condition variables heavily rely on a notion
of ownership.
Ownership has been used in a number of systems, often to prevent
data races in concurrent programming or to enable safe memory management.
A number of mechanisms for tracking ownership and related properties have
arisen, largely derived from linear logic~\citep{system-f-reynolds-1974,Girard1987}.
These include 
capabilities~(e.g., \citep{afm07-l3,crarywamo99}) and alias types~\citep{swm00},
and have been incorporated into languages
such as Cyclone~\citep{Cyclone02} and Rust (thus inspiring our Rust implementation).
%
%
We are most closely inspired by fractional
permissions~\citep{Boyland03}, altough we use a fairly common relaxation of
fractional permissions (e.g.,~\citep{heule2011fractional,NadenBoAlBi12})
which is less restrictive.

In the language~$L^3$, \citet{afm07-l3}
used capabilities in a type system to allow {\em strong
  updates}, which change the type of a memory cell (ownership is required for
strong updates because changing the type will invalidate other references to
the cell).
Our~\texttt{promote} operation may be seen as a kind of strong update, as
it changes the priority of a condition variable handle, which in turn changes
its type, and the restrictions on promotion were loosely based on those
of~$L^3$.
Promotions differ from strong updates, however, in that they do not invalidate
all other handles to the condition variable; other threads may still use a
handle to signal at a higher priority or wait at a lower priority.
We allow this using fractional permissions, which are not considered in~$L^3$.

\section{Conclusion}
In this paper, we have presented {\calcname}, a calculus for responsive parallel
programming with synchronization in the form of mutexes and condition variables.
The type system of {\calcname} statically prevents priority inversions which
might arise from the use of CVs, and guarantees that the graphs produced by
{\calcname}'s cost semantics  obey strong
scheduling bounds which preclude priority inversions.
The type system can be approximately encoded in Rust and
C++, and these encodings can be used to write substantial programs.
An exciting direction of future research would be to develop a
compiler to support the typing restrictions directly; such a custom compiler
could track ownership of CVs with less or no programmer
intervention, and could discover priorities of CVs and priority ceilings
of mutexes automatically through type inference.

\iffull
\else
\begin{acks}
  The authors would like to thank Jonathan Aldrich as well as the anonymous
  reviewers and shepherd. This work is partially funded by the National
  Science Foundation under grants CCF-2107289, XXX-XXXXXX, XXX-XXXXXX, and XXX-XXXXXX
\end{acks}

\section*{Availability of Software}
The implementation of the C++ and Rust libraries, as well as our case studies,
is available at \url{https://doi.org/10.5281/zenodo.7706983}.
\fi

\bibliography{main,new,thisbib}
\end{document}